\newtheorem{definition}{Definition}[section]
\newtheorem{theorem}[definition]{Theorem}
\newtheorem{corollary}[definition]{Corollary}
\newtheorem{lemma}[definition]{Lemma}
\newtheorem{claim}[definition]{Claim}
\newcommand{\Pd}{\mathsf{P}}
\DeclareMathOperator*{\EE}{E}
\def\noqed{\renewcommand{\qedsymbol}{}}
\title{Upper Tail Estimates with Combinatorial Proofs}
\author{Jan Hązła\thanks{ETH Z\"urich, Department of Computer Science, Zurich, Switzerland. E-mail: 
{\tt \{jan.hazla,thomas.holenstein\}@inf.ethz.ch}}
\and 
Thomas Holenstein\footnotemark[1]}
\begin{document}

\maketitle

\begin{abstract}
We study generalisations of a simple, combinatorial
proof of a Chernoff bound
similar to the one by Impagliazzo and Kabanets (RANDOM, 2010).

In particular, we prove a randomized version
of the hitting property of expander random
walks and use it to obtain
an optimal expander random
walk concentration bound settling a question 
asked by Impagliazzo and Kabanets.  

Next, we obtain an upper tail bound for 
polynomials with input variables in $[0, 1]$ which are
not necessarily independent, but obey a certain condition
inspired by Impagliazzo and Kabanets. The resulting bound 
is applied by Holenstein and Sinha (FOCS, 2012)
in the proof of a lower bound for the number of calls
in a black-box construction of a pseudorandom
generator from a one-way function.

We also show that the same technique yields
the upper tail bound for the number of copies
of a fixed graph in an Erdős–Rényi random graph,
matching the one given by Janson, Oleszkiewicz, and 
Ruciński (Israel J.~Math, 2002).
\end{abstract}

\section{Introduction}

\paragraph*{Motivation and previous work}
Concentration bounds are inequalities that estimate the probability
of a random variable assuming a value that is far from its expectation.
They have a multitude of applications all across
the mathematics and theoretical computer science. See, e.g., textbooks
\cite{MR95, MU05, AB09, DP09}
for uses in complexity theory and randomised algorithms.

A typical setting is when this variable is a function 
$f(x)$ of $n$ simpler random variables $x = (x_1,\ldots,x_n)$ 
that possess a certain degree of independence
and we try to bound said probability with a function
decaying exponentially with $n$ (or, maybe, $n^{\epsilon}$ for some $\epsilon > 0$).

The canonical examples are Chernoff-Hoeffding bounds
\cite{Che52, Hoe63} for the sum of $n$ independent 
random variables in $[0,1]$ and 
Azuma's inequality \cite{Azu67}
for martingales.

The standard technique to prove Chernoff bounds 
is due to Bernstein \cite{Ber24}.
The idea is to bound $\EE[e^{tf(x)}]$ for
some appropriately chosen $t$, and then to apply
Markov's inequality.

Recently, Impagliazzo and Kabanets \cite{IK10}
gave a different, combinatorial proof of Chernoff bound,
arguing that its simplicity and nature provide 
additional insight into understanding concentration. 
What is more, their proof
is constructive in a certain sense (see \cite{IK10} for details).

The proof given by Impagliazzo and Kabanets is related to previous
published results: 
in \cite{SSS95}, Schmidt, Siegel and Srinivasan
give a Chernoff bound which is 
applicable in case the random variables $x=(x_1,\ldots,x_n)$ 
are only $m$-wise independent for some large enough $m$.
It turns out that the expressions which appear in their computations
have close counterparts in the proof in \cite{IK10},
but they still bound $\EE[e^{tf(x)}]$, and it seems to us
that the approach in \cite{IK10} makes the concepts clearer and the
calculations shorter.

Another work related to \cite{IK10} is due to
Janson, Oleszkiewicz and Ruciński \cite{JOR02}, who
give an upper tail bound (i.e., a one-sided
concentration bound) for the number of subgraphs in 
an Erd\H{o}s-R\'enyi random graph $\mathsf{G}_{n,p}$.
The proof given in \cite{JOR02} bears much relationship
to the proof given in \cite{IK10}.
We elaborate on that in Section~\ref{sec:ik-vs-jor}.

Finally, there is a connection to an argument used
by Rao to prove a concentration
bound for parallel repetition of two-prover games \cite{Rao08}. 
As we will see, one of the ideas in the proof given in \cite{IK10}
is to consider a subset of the variables $(x_1,\ldots,x_n)$.
Rao also does this, with a somewhat different purpose.

\paragraph*{Our contributions}
In this paper we modify the proof of Impagliazzo and Kabanets
and introduce a more general sufficient condition for concentration
which we term \emph{growth boundedness} (Section \ref{sec:chernoff}). 
Then, we show some applications of our framework.

First, we prove a randomized version of the hitting property
of expander random walks (Theorem \ref{thm:expander-hitting}) and use it to 
obtain an optimal (up to a constant factor in
the exponent) expander random walk concentration bound
settling a question asked in \cite{IK10} (Theorem \ref{thm:expander-main}).\footnote{
	Of course the bound itself is not new. Impagliazzo and Kabanets
	asked if such a concentration bound can be obtained
	from the hitting property, i.e., using the technique
	from \cite{IK10}.
}
We also show that our method is quite robust:
with a little more effort one
can improve the constant factor to the optimal
one in case of large number of steps and small
deviation (Theorem \ref{thm:expander-tight}).

Second, we prove an
upper tail bound for polynomials with input
random variables in $[0,1]$ (Theorem \ref{thm:kv-simple}).
Contrary to the previous work we are aware of,
we do not assume that those variables
are independent, but rather that they obey
a condition similar to growth boundedness.

This bound is used in a proof of a lower bound
for the complexity of a black-box construction
of a pseudorandom generator from a one-way
function \cite{HS12}. Although \cite{HS12}
was published earlier,
the proof of the bound is not contained there,
but deferred to this paper instead.
We outline how the bound was used
in \cite{HS12} in Section \ref{sec:hs12}.

\paragraph*{Notation}
Throughout the paper we focus on the bounds
of the form
 $\Pr[f(x) \ge \mu(1+\epsilon)]$).
We call such bounds ``(multiplicative) upper tail bounds''.

Typically, we consider a probability distribution $\Pd_x$ 
over some vector of random variables
$x = (x_1, \ldots, x_n)$.
We denote a random choice from $\Pd_x$ as $x \leftarrow \Pd_x$.
We try to explicitly indicate randomness whenever taking probability
or expectation, i.e., we write $\Pr_{x \leftarrow \Pd_x}\left[\ldots\right]$
and so on. For a finite set $A$, let $a \leftarrow A$ be a shorthand for a uniform random choice 
of an element from $A$.

For a natural number $n$, let $[n] := \{1, \ldots, n\}.$ 
As usual, by $\binom{n}{k}$ we denote 
$\frac{\prod_{i=0}^{k-1} (n-i)}{k!}$ for $n \in \mathbb{R}$ and
$k \in \mathbb{N}$. For $n \in \mathbb{N}$ and $0 \le k \le n$, we also identify
$\binom{n}{k}$ with the set of subsets of $[n]$ of size $k$.

In particular, $(i_1, \ldots, i_m) \leftarrow [n]^m$ denotes
uniform choice of $m$ elements from $[n]$ with repetition
and $M \leftarrow \binom{n}{m}$ uniform choice
of a subset of $[n]$ of size $m$.

\section{A Simple Proof of a Chernoff Bound}

We start by presenting a short proof of
a Chernoff bound in, arguably, the most basic setting. 
\begin{theorem}\label{thm:toy-chernoff}
Let $x = (x_1, \ldots, x_n)$ be i.i.d.~over $\{0,1\}^n$ with $\Pr[x_i = 1] = \frac12$ and $\epsilon \in [0, \frac12]$. Then,
\begin{align*}
	\Pr_{x \leftarrow \Pd_x}\left[\sum_{i=1}^n x_i \ge \frac{n}{2}(1+\epsilon)\right] \le \exp\left(-\frac{\epsilon^2 n}{6}\right) \; .
\end{align*}
\begin{proof}
	Let $m:=\left\lceil \frac{\epsilon n}{3} \right\rceil$. We have
\begin{IEEEeqnarray*}{rCl}
	\EE_{x \leftarrow \Pd_x} \left[ \left( \sum_{i=1}^n x_i \right)^m \right] & = &
	n^m \Pr_{x \leftarrow\Pd_x\atop{(i_1,\ldots,i_m) \leftarrow [n]^m}}
	\left[
		\forall j \in [m]: x_{i_j} = 1
	\right]
	\\ & = &
	n^m \prod_{j=1}^m \Pr_{x \leftarrow\Pd_x\atop{(i_1,\ldots,i_m) \leftarrow [n]^m}}
	\left[
	x_{i_j} = 1
	\mid \forall k < j: x_{i_k} = 1
	\right]
	\\ & \le &
	n^m  \left( \frac{\epsilon}{3} \cdot 1 +  \left( 1-\frac{\epsilon}{3} \right) \cdot \frac12 \right)^m
	=
	\left(\frac{n}{2}\right)^m \left(1+\frac{\epsilon}{3}\right)^m \; .
\end{IEEEeqnarray*}
	Using Markov's inequality and $\frac{1+\epsilon/3}{1+\epsilon} \le \exp\left(-\frac{\epsilon}{2}\right)$
	for $\epsilon \in [0,\frac12]$,
	\begin{align*}
	\Pr \left[\left( \sum_{i=1}^n x_i \right)^m \ge \left(\frac{n}{2}\right)^m (1+\epsilon)^m \right]
	\le \left(\frac{1+\frac{\epsilon}{3}}{1+\epsilon}\right)^m \le \exp\left(-\frac{\epsilon^2 n}{6}\right)
	\; .
	\end{align*}
	
\end{proof}
\end{theorem}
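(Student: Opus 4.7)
The plan is to sidestep the usual moment generating function / exponential-Markov argument and instead control a single integer moment $\EE_{x \leftarrow \Pd_x}\bigl[(\sum_i x_i)^m\bigr]$ for a carefully chosen $m$, then apply Markov's inequality to $(\sum_i x_i)^m$. Because the threshold $(1+\epsilon)^m$ gives exponential decay in $m$, I expect to take $m$ proportional to $\epsilon n$.

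The first step is a combinatorial rewriting of the $m$-th moment. Expanding the power and using linearity, together with the i.i.d.\ structure of $x$, gives
\[
\EE_{x \leftarrow \Pd_x}\!\left[\left(\sum_{i=1}^n x_i\right)^m\right]
= n^m \Pr_{x \leftarrow \Pd_x,\,(i_1,\ldots,i_m) \leftarrow [n]^m}\!\left[\forall j \in [m]:\, x_{i_j} = 1\right],
\]
which reinterprets the moment (up to the factor $n^m$) as the probability that $m$ independent uniform draws from $[n]$ all hit the random set $\{i : x_i = 1\}$.

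The second step is to bound this joint probability by the chain rule as a product of $m$ conditional probabilities $\Pr[x_{i_j} = 1 \mid \forall k < j:\, x_{i_k} = 1]$ and to estimate each factor via a case split on whether $i_j$ repeats an earlier index. With probability at most $(j-1)/n \le m/n$ the index $i_j$ coincides with one of $i_1,\ldots,i_{j-1}$, in which case the conditioning forces $x_{i_j}=1$; otherwise $i_j$ is a fresh uniform index, independent of the conditioning, so $x_{i_j}=1$ with probability exactly $1/2$. Hence every factor is at most $m/n + (1-m/n)/2$.

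Finally, I would set $m = \lceil \epsilon n/3 \rceil$ so that $m/n \le \epsilon/3$, obtaining $\EE[(\sum_i x_i)^m] \le (n/2)^m (1+\epsilon/3)^m$. Markov's inequality at threshold $((1+\epsilon) n/2)^m$ then yields a decay factor $\bigl((1+\epsilon/3)/(1+\epsilon)\bigr)^m$, and a short analytic check that this ratio is at most $\exp(-\epsilon/2)$ on $[0,\tfrac12]$ delivers $\exp(-\epsilon^2 n/6)$. The only real balancing act is choosing $m$: too small an $m$ weakens Markov, while too large an $m$ makes the repeat probability $m/n$ swamp the per-step gain of $1/2$; the linear scaling $m \asymp \epsilon n$ is the sweet spot, and the restriction $\epsilon \le 1/2$ is what makes the closing inequality $\tfrac{1+\epsilon/3}{1+\epsilon} \le e^{-\epsilon/2}$ hold.
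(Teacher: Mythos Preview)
Your proposal is correct and mirrors the paper's proof essentially step for step: the same moment rewriting, the same chain-rule-plus-repeat argument, the same choice $m=\lceil \epsilon n/3\rceil$, and the same closing inequality $\frac{1+\epsilon/3}{1+\epsilon}\le e^{-\epsilon/2}$. The only slip is the claim $m/n\le \epsilon/3$, which can fail because of the ceiling; the clean fix (implicit in the paper) is to bound the repeat probability by $(j-1)/n\le (m-1)/n<\epsilon/3$ rather than by $m/n$.
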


The above is the simplest proof of the most basic
Chernoff bound we know of, and we believe that it is 
worthwhile to state it explicitly.
It can be obtained by adapting the proof given in \cite{IK10} for
the given setting, although a direct adaptation yields
a slightly different (and probably a bit longer) argument.
Alternatively, it can be seen as an instantiation of 
the proof given in \cite{JOR02} in case one is interested
in counting the number of copies of $K_2$ (i.e., the number of edges) 
in a random graph $\mathsf{G}_{n,p}$, after rather many simplifications
that can be done for this very special case.
Finally, it is a straightforward instantiation of our later proof given
in Section~\ref{sec:chernoff}.

\section{Growth Boundedness}
\label{sec:chernoff}

In this section we present the definition of growth-boundedness
and prove that it implies concentration. In Section~\ref{sec:gb-wor}
we introduce growth boundedness without
repetition: a variation of our concept that we use to prove
the expander random walk bound.

\begin{definition}\label{def:gb}
Let $\delta \ge 0$ and $m \in [n]$. A distribution
$\Pd_x$ over $x = (x_1, \ldots, x_n) \in \mathbb{R}_{\ge 0}^n$ 
with $\mu := \EE_{x \leftarrow \Pd_x\atop{i \leftarrow [n]}} [ x_i ]$ is
\emph{$(\delta, m)$-growth bounded} if
\begin{align*}
\EE_{{x\leftarrow\Pd_x}} \left[ \left(\sum_{i=1}^n x_i\right)^m \right] \leq
	(\mu n)^m (1+\delta)^m \; .
\end{align*}
\end{definition}
Equivalently, $\Pd_x$ is $(\delta, m)$-growth bounded if and only if
\begin{align*}
\EE_{{x\leftarrow\Pd_x}\atop{(i_1,\ldots,i_m)\leftarrow [n]^m}}
	\Big[ \prod_{j=1}^m x_{i_j} \Big] 
	&\leq \mu^m (1+\delta)^m \; .
\end{align*}
If random variables are over $\{0, 1\}$, this condition
reduces to
\begin{align*}
\Pr_{{x\leftarrow\Pd_x}\atop{(i_1,\ldots,i_m)\leftarrow [n]^m}}
	\Big[ \forall j \in [m]: x_{i_j} = 1 \Big] 
	&\leq \mu^m (1+\delta)^m \; .
\end{align*}

We now state our main theorem:
\begin{theorem}\label{thm:ik-bound}
Let $\Pd_x$ be a distribution over $\mathbb{R}_{\ge 0}^n$,
$\mu := \EE_{{x \leftarrow \Pd_x}\atop{i \leftarrow [n]}}[x_i]$,
$\mu > 0$, $\epsilon \ge 0$.
If $\Pd_x$ is $(\delta, m)$-growth bounded,
then
\begin{align*}
\Pr_{x\leftarrow \Pd_{x}}\Bigl[\sum_{i=1}^{n} x_{i} \geq 
\mu n(1+\epsilon) \Bigr] 
\leq 
\Bigl(\frac{1+\delta}{1+\epsilon}\Bigr)^m \nonumber
\;.
\end{align*}
\end{theorem}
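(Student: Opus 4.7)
The plan is essentially the same as in the warm-up Theorem~\ref{thm:toy-chernoff}: apply Markov's inequality to the $m$-th moment of $S := \sum_{i=1}^n x_i$, using the growth boundedness hypothesis directly as the bound on $\EE[S^m]$.

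First I would observe that since all $x_i \geq 0$, the sum $S$ is a nonnegative random variable, so the map $s \mapsto s^m$ is monotone nondecreasing on its range, and in particular
\[
    \Pr_{x \leftarrow \Pd_x}\!\Bigl[S \geq \mu n(1+\epsilon)\Bigr]
    \;=\;
    \Pr_{x \leftarrow \Pd_x}\!\Bigl[S^m \geq \bigl(\mu n(1+\epsilon)\bigr)^m\Bigr].
\]
Next, since $\bigl(\mu n(1+\epsilon)\bigr)^m > 0$ (using $\mu > 0$), Markov's inequality applied to the nonnegative variable $S^m$ gives
\[
    \Pr_{x \leftarrow \Pd_x}\!\Bigl[S^m \geq \bigl(\mu n(1+\epsilon)\bigr)^m\Bigr]
    \;\leq\;
    \frac{\EE_{x \leftarrow \Pd_x}[S^m]}{\bigl(\mu n(1+\epsilon)\bigr)^m}.
\]
Finally, the $(\delta, m)$-growth boundedness hypothesis of Definition~\ref{def:gb} bounds the numerator by $(\mu n)^m(1+\delta)^m$, and the factors $(\mu n)^m$ cancel, leaving $\bigl((1+\delta)/(1+\epsilon)\bigr)^m$, which is the claimed bound.

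There is no real obstacle here: all the content has been front-loaded into the definition of growth boundedness, which is designed precisely so that the $m$-th moment is controlled. The only two minor things to be careful about are (i) checking that the hypothesis $\mu > 0$ is what lets us divide by $\mu n(1+\epsilon)$ in Markov, and (ii) noting that the case $1 + \epsilon \leq 1 + \delta$ gives a bound that is trivially at least $1$ and therefore vacuous, so the statement is only interesting when $\epsilon > \delta$. I would not separate these as cases in the write-up, since Markov's inequality handles them uniformly.
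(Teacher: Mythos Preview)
Your proposal is correct and follows exactly the paper's own argument: rewrite the tail event for $S$ as a tail event for $S^m$, apply Markov, and invoke the growth-boundedness bound on $\EE[S^m]$. Your write-up is in fact slightly more explicit about the justifications (nonnegativity of $S$, $\mu>0$ for Markov) than the paper's two-line proof.
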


\begin{proof}
By Markov's inequality and growth boundedness of $\Pd_x$,
\begin{align*}
	\Pr_{x\leftarrow \Pd_x} \Big[ \sum_{i=1}^n x_i \geq \mu n (1+\epsilon) \Big] & = 
		\Pr_{x \leftarrow \Pd_x} \Big[ \big( \sum_{i=1}^n x_i \big)^m \geq (\mu n)^m (1+\epsilon)^m \Big] \\
	& \leq \Big( \frac{1+\delta}{1+\epsilon} \Big)^m \; .
\end{align*}
\end{proof}

There is an interesting connection between this proof (inspired by \cite{JOR02})
and the one used in \cite{IK10}, for details see Section \ref{sec:ik-vs-jor}.

We obtain more convenient bounds as a corollary:

\begin{corollary}\label{cor:gb}
Let $\epsilon \ge 0$ and $\Pd_x$ be an $(\frac{\epsilon}{3}, m)$-growth bounded distribution
over $\mathbb{R}_{\ge 0}^n$ with 
$\mu := \EE_{x \leftarrow \Pd_x\atop{i \leftarrow [n]}} [ x_i ]$,
$\mu > 0$.
\begin{enumerate}
\item If $\epsilon \le \frac{1}{2}:$
$\displaystyle \Pr_{x\leftarrow \Pd_x}\Bigl[\sum_{i=1}^{n} x_{i} \geq \mu n(1+\epsilon) \Bigr] 
\leq 
\exp\Big(-\frac{\epsilon m}{2}\Big)\;.
$
\item If $\epsilon \geq \frac{1}{2}$:
$\displaystyle
\Pr_{x \leftarrow \Pd_x}\Bigl[ \sum_{i=1}^n x_i \geq \mu n(1+\epsilon) \Bigl] \leq \Big(\frac{4}{5}\Big)^m \;.
$
\item If $\epsilon \geq 3$:
$\displaystyle
\Pr_{x \leftarrow \Pd_x}\Bigl[\sum_{i=1}^{n} x_{i} \geq \mu n(1+\epsilon) \Bigr] 
\leq 
2^{-m}\;.
$
\end{enumerate}
\end{corollary}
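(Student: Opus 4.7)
The plan is straightforward: invoke Theorem \ref{thm:ik-bound} once with $\delta = \epsilon/3$, obtaining
\begin{align*}
\Pr_{x \leftarrow \Pd_x}\Bigl[\sum_{i=1}^n x_i \ge \mu n(1+\epsilon)\Bigr] \le \Bigl(\frac{1+\epsilon/3}{1+\epsilon}\Bigr)^m,
\end{align*}
and then bound the single-variable function $g(\epsilon) := (1+\epsilon/3)/(1+\epsilon)$ in each of the three ranges separately. Computing $g'(\epsilon) = -\frac{2/3}{(1+\epsilon)^2} < 0$ shows $g$ is monotonically decreasing on $[0,\infty)$, which will handle Cases 2 and 3 by evaluation at the left endpoint.

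For Case 1 ($\epsilon \le 1/2$), I would reuse exactly the inequality already invoked in the proof of Theorem \ref{thm:toy-chernoff}, namely $g(\epsilon) \le \exp(-\epsilon/2)$ on $[0, 1/2]$, to conclude $g(\epsilon)^m \le \exp(-\epsilon m / 2)$. For Case 2 ($\epsilon \ge 1/2$), monotonicity gives $g(\epsilon) \le g(1/2) = (7/6)/(3/2) = 7/9 \le 4/5$, so the bound $(4/5)^m$ follows. For Case 3 ($\epsilon \ge 3$), monotonicity gives $g(\epsilon) \le g(3) = 2/4 = 1/2$, yielding $2^{-m}$.

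The main obstacle, such as it is, lies in Case 1: the inequality $(1+\epsilon/3)/(1+\epsilon) \le \exp(-\epsilon/2)$ is tight at $\epsilon = 1/2$ (both sides are approximately $0.778$), so some justification is needed. One clean way is to define $h(\epsilon) := \ln(1+\epsilon) - \ln(1+\epsilon/3) - \epsilon/2$, check $h(0) = 0$, and verify $h'(\epsilon) \ge 0$ on $[0,1/2]$, which reduces to the polynomial inequality $\epsilon/2 - \epsilon^2/3 - \epsilon^3/18 \ge 0$ on this interval. Since the paper already uses the same inequality in the proof of Theorem \ref{thm:toy-chernoff} without comment, the expected write-up is simply to state it and refer back.
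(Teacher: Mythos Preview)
Your proof is correct and follows the paper's approach exactly: apply Theorem~\ref{thm:ik-bound} with $\delta=\epsilon/3$ and then bound $(1+\epsilon/3)/(1+\epsilon)$ separately in each range (the paper's proof is just the three one-line inequalities you list). One small caveat on your aside about verifying Case~1: the claim $h'(\epsilon)\ge 0$ on $[0,\tfrac12]$ is not quite true (e.g.\ $h'(\tfrac12)=\tfrac{2}{3}-\tfrac{2}{7}-\tfrac12=-\tfrac{5}{42}<0$), so the clean check is rather that $h$ is unimodal with $h(0)=0$ and $h(\tfrac12)=\ln\tfrac{9}{7}-\tfrac14>0$; but as you say, the paper simply asserts the inequality.
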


\begin{proof}
(1) follows
because
$\frac{1+\epsilon/3}{1+\epsilon} \leq \exp\big(-\frac{\epsilon}{2}\big)$ for 
$\epsilon \in [0, \frac12]$, (2) since $\frac{1+\epsilon/3}{1+\epsilon} \leq \frac{4}{5}$
for $\epsilon \geq \frac12$
and (3) due to 
$\frac{1+\epsilon/3}{1+\epsilon} \leq \frac12$ for $\epsilon \geq 3$.
\end{proof}

For example, suppose that $x_1,\ldots,x_n$ are independent over $\{0,1\}^n$, 
$\Pr[x_i = 1] = \mu > 0$, and $\epsilon \in [0, \frac12]$.

Using that for each $M$ with $|M| \le \frac{\epsilon \mu n}{3}$ we have
\begin{align*}
\Pr_{x \leftarrow \Pd_x\atop{i \leftarrow [n]}} \left[ x_i = 1 \mid \forall j \in M: x_j = 1\right] = 
\left(\frac{|M|}{n} + \left(1-\frac{|M|}{n}\right)\mu\right) 
\le \frac{|M|}{n}+\mu \le \mu \left(1+\frac{\epsilon}{3} \right) \; ,
\end{align*}
we can conclude that $\Pd_x$ is $(\frac{\epsilon}{3}, \lceil \frac{\epsilon \mu n}{3} \rceil)$-growth bounded
and
\begin{align*}
\Pr_{x \leftarrow \Pd_x}\Bigl[\sum_{i=1}^n x_i \geq \mu n(1+\epsilon)\Bigr] 
 \leq \exp(-\epsilon^2 \mu n/6) \;.
\end{align*}

\subsection{Growth boundedness without repetition}
\label{sec:gb-wor}
If one looks at the process in the growth boundedness
definition as choosing a uniform $m$-tuple of indices
$(i_1, \ldots, i_m)$ (with repetition), it is possible
to make a similar argument for choosing
a uniform set of indices of size $m$ instead.
In particular, we find it convenient 
in the proof of the expander
random walk bound.

\begin{definition}
\label{def:gb-wor}
Let $\delta \ge -1$ and $m \in [n]$.
We say that a distribution $\Pd_x$ over $\{0,1\}^n$ 
with $\mu := \Pr_{x \leftarrow \Pd_x \atop{i \leftarrow [n]}} [x_i = 1]$
is \emph{$(\delta, m)$-growth bounded without repetition}
if
\begin{align*}
\Pr_{x \leftarrow \Pd_x \atop{M \leftarrow \binom{n}{m}}}
\Big[ \forall i \in M: x_i = 1 \Big]
\le \mu^m (1+\delta)^m \; .
\end{align*}
\end{definition}

\begin{theorem}
\label{thm:ik-wor}
Let $\Pd_x$ be a distribution over $\{0,1\}^n$,
$\mu := \Pr_{x \leftarrow \Pd_x \atop{i \leftarrow [n]}}[x_i = 1]$,
$\mu > 0$, $\epsilon \ge 0$, $c \in [0,1]$. If $\Pd_x$ is $(\delta, c\epsilon \mu n)$-growth bounded
without repetition then
\begin{align*}
\Pr_{x \leftarrow \Pd_x} \Big[ \sum_{i=1}^n x_i \ge \mu n(1+\epsilon) \Big]
\le \Big( \frac{1+\delta}{1+(1-c)\epsilon} \Big)^m \; ,
\end{align*}
where $m := c\epsilon\mu n$.
\end{theorem}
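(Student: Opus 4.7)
The plan is to mirror the proof of Theorem~\ref{thm:ik-bound}, but to replace the functional $(\sum_i x_i)^m$, which secretly counts $m$-tuples of indices chosen with repetition, by the binomial coefficient $\binom{\sum_i x_i}{m}$, whose expectation encodes exactly the ``without repetition'' event in Definition~\ref{def:gb-wor}. Writing $S := \sum_{i=1}^n x_i$, a one-line counting argument gives
\begin{align*}
\EE_{x \leftarrow \Pd_x}\!\left[\binom{S}{m}\right] \;=\; \binom{n}{m}\cdot \Pr_{x \leftarrow \Pd_x,\,M \leftarrow \binom{n}{m}}\!\bigl[\forall i \in M:\, x_i = 1\bigr],
\end{align*}
so the growth-boundedness hypothesis bounds the left-hand side by $\binom{n}{m}\mu^m(1+\delta)^m$.

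Since $m = c\epsilon\mu n \le \mu n(1+\epsilon)$ (using $c \le 1$), the polynomial $s \mapsto \binom{s}{m}$ is nondecreasing on $s \ge m$, so the event $\{S \ge \mu n(1+\epsilon)\}$ is contained in $\{\binom{S}{m} \ge \binom{\mu n(1+\epsilon)}{m}\}$, and Markov's inequality then yields
\begin{align*}
\Pr_{x \leftarrow \Pd_x}\!\bigl[S \ge \mu n(1+\epsilon)\bigr] \;\le\; \frac{\binom{n}{m}\mu^m}{\binom{\mu n(1+\epsilon)}{m}}\cdot(1+\delta)^m.
\end{align*}

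The remaining task is to show $\binom{n}{m}\mu^m/\binom{\mu n(1+\epsilon)}{m} \le (1+(1-c)\epsilon)^{-m}$. Expanding both binomial coefficients as falling products turns this into the per-factor inequality $(n-j)\mu \big/ (\mu n(1+\epsilon)-j) \le 1/(1+(1-c)\epsilon)$ for $j=0,\ldots,m-1$, which after clearing denominators rearranges to $j\bigl(1-\mu-\mu(1-c)\epsilon\bigr) \le \mu n c\epsilon = m$; this is automatic when the coefficient of $j$ is nonpositive and otherwise follows from that coefficient being at most $1$ and $j\le m-1$. The main thing to watch, as I see it, is the degenerate case $\mu(1+\epsilon)>1$, where $S\le n$ makes the deviation event vacuous and the bound is trivial, together with the minor nuisance that $m = c\epsilon\mu n$ need not be an integer, which is absorbed by interpreting $\binom{\cdot}{m}$ as a polynomial in the top argument as in the paper's notational conventions.
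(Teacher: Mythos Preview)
Your proof is correct and is essentially the paper's own argument, rephrased: the paper lower-bounds $\Pr_{M}[\forall i\in M:x_i=1\mid S\ge\mu n(1+\epsilon)]$ by the hypergeometric product $\prod_{i=0}^{m-1}\frac{\mu n(1+\epsilon)-i}{n-i}$ and then shows this product is at least $\mu^m(1+(1-c)\epsilon)^m$, which is exactly your ratio inequality $\binom{n}{m}\mu^m\big/\binom{\mu n(1+\epsilon)}{m}\le(1+(1-c)\epsilon)^{-m}$ rearranged. Your packaging via Markov on $\binom{S}{m}$ versus the paper's conditional-probability framing is precisely the equivalence the paper itself spells out in Section~\ref{sec:ik-vs-jor}, so there is no substantive difference.
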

\begin{proof}
Set $q := \Pr[\sum_{i=1}^n x_i \ge \mu n (1+\epsilon)]$ and compute:
\begin{IEEEeqnarray*}{rCl}
\mu^m (1+\delta)^m & \ge & \Pr_{x \leftarrow \Pd_x \atop{M \leftarrow \binom{n}{m}}} 
[\forall i \in M: x_i = 1] \\
& \ge & q \Pr_{x \leftarrow \Pd_x \atop{M \leftarrow \binom{n}{m}}}
[\forall i \in M: x_i = 1 \mid \textstyle\sum_{i=1}^n x_i \ge \mu n (1+\epsilon)] \\
& \ge & q \prod_{i=0}^{m-1} \frac{\mu n (1+\epsilon) - i}{n-i} \\
& \ge & q \mu^m (1 + (1-c)\epsilon)^m \; .
\end{IEEEeqnarray*}
\end{proof}

\begin{corollary}
\label{cor:ik-wor}
Let $\epsilon \in [0, \frac{4}{5}]$ and $\Pd_x$
be a distribution over $\{0,1\}^n$ that is
$(\frac{\epsilon}{3}, m)$-growth bounded
without repetition for some $m \le \frac{\epsilon \mu n}{6}$
with $\mu := \Pr_{x \leftarrow \Pd_x\atop{i \leftarrow [n]}}[x_i = 1]$,
$\mu > 0$. Then,
\begin{align*}
	\Pr_{x \leftarrow \Pd_x}\Big[\sum_{i=1}^n x_i \ge \mu n (1+\epsilon) \Big]
	\le \exp \Big(- \frac{\epsilon m}{3}\Big) \; .
\end{align*}
\end{corollary}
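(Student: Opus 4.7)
The plan is to apply Theorem~\ref{thm:ik-wor} and then collapse the resulting bound into the desired exponential via a one-variable calculus inequality. I would set $\delta := \epsilon/3$ and $c := m/(\epsilon\mu n)$. Since $m \le \epsilon\mu n/6$ we have $c \le 1/6$, and since $m \ge 1$ and $\mu > 0$ we have $c \in (0, 1]$, so the theorem applies with these parameters. It yields
\begin{align*}
\Pr_{x \leftarrow \Pd_x}\Big[\textstyle\sum_{i=1}^n x_i \ge \mu n(1+\epsilon)\Big]
\;\le\; \Big(\frac{1+\epsilon/3}{1+(1-c)\epsilon}\Big)^m
\;\le\; \Big(\frac{1+\epsilon/3}{1+5\epsilon/6}\Big)^m,
\end{align*}
where the second step uses $1-c \ge 5/6$.

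It then suffices to verify the pointwise inequality $(1+\epsilon/3)/(1+5\epsilon/6) \le \exp(-\epsilon/3)$ for $\epsilon \in [0, 4/5]$; raising both sides to the $m$-th power produces the claimed $\exp(-\epsilon m/3)$. I would show this by defining $g(\epsilon) := (1+\epsilon/3)\exp(\epsilon/3) - (1+5\epsilon/6)$ and checking $g \le 0$ on $[0, 4/5]$. A direct computation gives $g(0) = 0$ and $g''(\epsilon) = \frac{1}{9}(3 + \epsilon/3)\exp(\epsilon/3) > 0$, so $g$ is strictly convex on the interval; a numerical evaluation at the right endpoint confirms $g(4/5) < 0$, and convexity then forces $g$ to lie below the chord from $(0, 0)$ to $(4/5, g(4/5))$, which is itself nonpositive on $[0, 4/5]$.

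The main obstacle is this final one-variable inequality. The target exponent $\epsilon/3$ is tight enough against the gap between numerator and denominator that a one-line bound such as applying $\log(1+x) \le x$ only covers part of the range (roughly up to $\epsilon \approx 12/25$), so some slightly finer argument — convexity as above, or equivalently a second-order Taylor bound on $\log(1-x)$ applied to $x = 3\epsilon/(6+5\epsilon)$ — is needed to stretch the inequality to the full interval $[0, 4/5]$. Everything else is routine bookkeeping after invoking Theorem~\ref{thm:ik-wor}.
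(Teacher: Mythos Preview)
Your proposal is correct and follows exactly the paper's approach: apply Theorem~\ref{thm:ik-wor} with $c \le 1/6$ (so $1-c \ge 5/6$) and then use the elementary inequality $\frac{1+\epsilon/3}{1+5\epsilon/6} \le \exp(-\epsilon/3)$ on $[0,\tfrac45]$. The paper simply asserts this last inequality without proof, whereas you supply a convexity argument for it; this is just additional detail, not a different method.
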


\begin{proof}
Apply Theorem \ref{thm:ik-wor} and note that
$\frac{1+\epsilon/3}{1+5\epsilon/6} \le \exp\big(-\frac{\epsilon}{3}\big)$
for $\epsilon \in [0, \frac{4}{5}]$.
\end{proof}

\subsection{Connection of \cite{IK10} and \cite{JOR02}}
\label{sec:ik-vs-jor}

Recall the proof of Theorem \ref{thm:ik-bound}.
In the context of \cite{IK10} and \cite{JOR02} 
we find it instructive to give an alternative proof,
restricted to distributions over $\{0,1\}^n$ 
(essentially the same as the proof of Theorem \ref{thm:ik-wor}).

\begin{theorem}
Let $\Pd_x$ be a distribution over $\{0,1\}^n$,
$\mu := \Pr_{{x \leftarrow \Pd_x}\atop{i \leftarrow [n]}}[x_i = 1]$,
$\mu > 0$, $\epsilon \ge 0$.
If $\Pd_x$ is $(\delta, m)$-growth bounded,
then
\begin{align}
\Pr_{x\leftarrow \Pd_{x}}\Bigl[\sum_{i=1}^{n} x_{i} \geq 
\mu n(1+\epsilon) \Bigr] 
\leq 
\Bigl(\frac{1+\delta}{1+\epsilon}\Bigr)^m \nonumber
\;.
\end{align}
\end{theorem}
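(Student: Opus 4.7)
The plan is to give a proof parallel to that of Theorem~\ref{thm:ik-wor}, substituting \emph{with-repetition} sampling for the \emph{without-repetition} sampling used there. The appeal of this alternative to the Markov-inequality argument of Theorem~\ref{thm:ik-bound} is that it makes the combinatorial content transparent: one directly lower-bounds the growth-boundedness quantity by a conditional probability on the tail event.

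First I would set $q := \Pr_{x \leftarrow \Pd_x}[\sum_i x_i \ge \mu n(1+\epsilon)]$ and rewrite the $(\delta,m)$-growth boundedness inequality in its $\{0,1\}$-form, namely
\begin{align*}
\mu^m(1+\delta)^m \;\ge\; \Pr_{x \leftarrow \Pd_x,\, (i_1,\ldots,i_m) \leftarrow [n]^m}[\forall j \in [m]:\ x_{i_j}=1] \;.
\end{align*}
Then I would lower-bound the right-hand side by restricting to the tail event:
\begin{align*}
\Pr[\forall j:\ x_{i_j}=1] \;\ge\; q \cdot \Pr[\forall j:\ x_{i_j}=1 \mid \textstyle\sum_i x_i \ge \mu n(1+\epsilon)] \;.
\end{align*}

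The key step is to evaluate the conditional probability. Because the indices $i_1,\ldots,i_m$ are chosen independently and uniformly from $[n]$ \emph{with} repetition, for any fixed $x$ with $\sum_i x_i \ge \mu n(1+\epsilon)$ the events $\{x_{i_j}=1\}$ are mutually independent, and each occurs with probability $|\{i : x_i=1\}|/n \ge \mu(1+\epsilon)$. Hence the conditional probability is at least $\mu^m(1+\epsilon)^m$. This is the point where the with-repetition sampling pays off: there is no correction factor of the form $\prod_{i=0}^{m-1} \frac{\mu n(1+\epsilon)-i}{n-i}$ that appeared in Theorem~\ref{thm:ik-wor}, so the argument is strictly simpler and the conclusion is the clean bound stated.

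Combining the three displays yields $\mu^m(1+\delta)^m \ge q\,\mu^m(1+\epsilon)^m$, which (using $\mu > 0$) rearranges to $q \le \left(\frac{1+\delta}{1+\epsilon}\right)^m$, as required. There is no real obstacle: the only thing worth double-checking is the factorisation of the conditional probability, which hinges precisely on the independence of the $i_j$ under with-repetition sampling.
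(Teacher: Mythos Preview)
Your proposal is correct and is essentially identical to the paper's own proof: the paper also sets $q$, invokes the $\{0,1\}$-form of growth boundedness, lower-bounds by conditioning on the tail event, and uses that with-repetition sampling gives the clean $\mu^m(1+\epsilon)^m$ lower bound on the conditional probability. Even your remark contrasting this with the $\prod_{i=0}^{m-1}\frac{\mu n(1+\epsilon)-i}{n-i}$ factor in Theorem~\ref{thm:ik-wor} mirrors the paper's framing of this argument as the with-repetition analogue of that proof.
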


\begin{proof}
Set
$q := \Pr\bigl[\sum_{i=1}^{n} x_{i} \geq \mu n(1+\epsilon) \bigr]$,
and see that\footnote{Clearly $q = 0$ is not a problem.}
\begin{IEEEeqnarray*}{rCl}
\mu^m (1+\delta)^m & \geq &
	\Pr_{{x\leftarrow\Pd_x}\atop{(i_1,\ldots,i_m) \leftarrow [n]^m}}
	[\forall j \in [m]: x_{i_j}=1] \\
& \geq & q \Pr_{x \leftarrow \Pd_x\atop{(i_1, \ldots, i_m)\leftarrow [n]^m}} [\forall j \in [m]: x_{i_j}=1 \mid \textstyle\sum_{i=1}^{n} x_{i} \geq \mu n(1+\epsilon)]\\
& \geq & q \, \mu^m (1+\epsilon)^m\;.
\end{IEEEeqnarray*}
\end{proof}

The basic idea of the proof in \cite{IK10} is to consider
$\Pr_{x, M}[\forall i \in M: x_i = 1]$,
where $M$ is a subset of $[n]$ obtained by including each element
in $M$ independently with some probability $q$.
Then, this is compared with $\Pr_{x, M}[\forall i \in M: x_i = 1 \mid \mathcal{E}]$,
where $\mathcal{E}$ is the event that $\sum_{i=1}^n x_i \ge \mu n(1+\epsilon)$.
In fact, we have
\begin{align*}
\Pr_{x}[\mathcal{E}] \leq \frac{\Pr_{x, M}[\forall i \in M: x_i = 1]}%
{\Pr_{x, M}[\forall i \in M: x_i = 1 \mid \mathcal{E}]} \; .
\end{align*}
It is possible to show that for $m := \EE[|M|] \ll n$ we have
$\Pr_{M}[\forall i \in M: x_i = 1 \mid \mathcal{E}] \gtrsim \mu^m(1+\epsilon)^m$.
To see the intuition of this, simply note that 
this probability roughly equals the probability
of only selecting red balls when one chooses with repetition 
$m$ times out of $n$ balls,
at least $\mu n(1+\epsilon)$ of which are red.\footnote{The difference to the 
actual random experiment is that we do not keep each ball with probability
$m/n$ but instead choose exactly $m$ times.}
Thus, 
\begin{align}\label{eq:1}
\Pr_{x}[\mathcal{E}]  
\lesssim \frac{\Pr_{x, M}[\forall i \in M: x_i = 1]}{\mu^m(1+\epsilon)^m}\;.
\end{align}
Now note that this last argument only uses the probability
over $M$, and so is independent of the distribution of $x$.
Thus, for any distribution on which we can give a good upper bound
on $\Pr_{x, M}[\forall i \in M: x_i = 1]$, the technique of \cite{IK10}
gives a concentration result.

The argument we use is very similar, but we pick $M$ as an
$m$-tuple whose elements are picked independently with repetition.
However, then we also have
\begin{align*}
n^m \Pr_{x, M}[\forall i \in M: x_i = 1] = \EE_{x, M}[(x_1+\ldots+x_n)^m]\;.
\end{align*}
By Markov's inequality,
\begin{align*}
\Pr[\mathcal{E}] = \Pr\left[(x_1+\dots+x_n)^m \geq (\mu n(1+\epsilon))^m\right]
\leq \frac{\Pr_{x,M}[\forall i \in M: x_i = 1]}{\mu^m(1+\epsilon)^m} \;,
\end{align*}
which is almost the same as (\ref{eq:1}).

The view in (\ref{eq:1}) is the one adopted by \cite{IK10}.
Bounding the $m$-th moment and using Markov is the view adopted
in \cite{JOR02}. 
The above argument shows that these views are closely related,
and one can argue that the connection is given by growth boundedness.

\section{Random Walks on Expanders}

\paragraph*{Overview and our results}
For an introduction to expander graphs, see
\cite{HLW06} or \cite[Chapter 4]{Vadhan12}.
In short, a $\lambda$-expander is a $d$-regular undirected
graph $G$ with the second largest (in terms of absolute value)
eigenvalue of the transition matrix at most $\lambda$.

We consider a random walk on $\lambda$-expander
starting in a uniform random vertex. It is a very useful
fact in many applications that such a random walk
behaves in certain respects
very similarly to a random walk on the complete
graph.

In particular, the so called hitting property
\cite{AKS87, Kah95}
states that the probability that an $\ell$-step
random walk on a $\lambda$-expander
$G$
stays completely inside
a set $W \subseteq V := V(G)$ with $\mu := |W|/|V|$
is at most $(\mu + \lambda)^{\ell}$.
A more general version
\cite{AFWZ95} states that
for each $M \subseteq [\ell]$
the probability that a random
walk stays inside $W$ in
all steps from $M$
is at most $(\mu + 2\lambda)^{|M|}$.

Our first result,
which may be
of independent interest,
can be considered as
a randomized version of the hitting property. 
Namely, we show that, given $\epsilon > 0$,
for a relatively small \emph{random} subset
$M \subseteq [\ell]$ of size $m$
the probability that a random
walk on a $\lambda$-expander stays inside $W$
in all steps from $M$
is at most $(\mu(1 + \epsilon))^m$:
\begin{restatable}{theorem}{expanderhitting}
\label{thm:expander-hitting}
Let $G$ be a $\lambda$-expander
with a distribution $\Pd_r$ over $V^{\ell}$ representing an
$(\ell-1)$-step random walk
$r = (v_1, \ldots, v_\ell)$
(with $v_1$ being a uniform starting vertex) and $W \subseteq V$
with $\mu := |W|/|V|$. 
Let $\epsilon \ge 0$ and $m \le \min \big(\frac12, \frac{1-\lambda}{\lambda} \frac{\epsilon \mu}{2} \big) \ell$.
Then,
\begin{align*}
\Pr_{r \leftarrow \Pd_r \atop{M \leftarrow \binom{\ell}{m}}}
\Big[ \forall i \in M: v_{i} \in W \Big] 
\le
(\mu(1 + \epsilon))^m \; .
\end{align*}
\end{restatable}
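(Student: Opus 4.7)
The plan is to reduce the randomised hitting statement to a deterministic ``gap-sensitive'' hitting bound, then average over the random choice of $M$ using negative dependence of the gap sequence.

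First I would prove that for any fixed $M = \{i_1 < \cdots < i_m\} \subseteq [\ell]$ with gaps $g_k := i_{k+1} - i_k$,
\begin{align*}
\Pr_r[\forall k: v_{i_k} \in W] \le \mu \prod_{k=1}^{m-1}\bigl(\mu + \lambda^{g_k}\bigr).
\end{align*}
This is established by writing the probability as the quadratic form $\mathbf{1}_W^T P^{g_{m-1}} D \cdots D P^{g_1}\mathbf{1}_W/n$, where $D$ is the diagonal indicator of $W$ and $P$ is the transition operator, and decomposing $P^g = \mathbf{1}\mathbf{1}^T/n + E_g$ with $\|E_g\|_{2} \le \lambda^g$ and $E_g\mathbf{1} = 0$. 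Propagating the iterated vector $v_j := DP^{g_j} v_{j-1}$ (starting from $v_0 = \mathbf{1}_W/n$) while tracking separately its component along $\mathbf{1}$ and its orthogonal residual, the $\mathbf{1}\mathbf{1}^T/n$ piece of $P^{g_j}$ contributes the factor $\mu$ per step on the main component while the $E_{g_j}$ piece shrinks the residual by $\lambda^{g_j}$.

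Applying this bound and taking the expectation over $M$ gives
\begin{align*}
\Pr_{r,M}[\forall i \in M: v_i \in W] \le \mu^m \, \EE_{M \leftarrow \binom{[\ell]}{m}}\Bigl[\prod_{k=1}^{m-1}(1+\lambda^{g_k}/\mu)\Bigr].
\end{align*}
Each factor of the product is a nonincreasing function of the corresponding gap, and the gaps of a uniform random $m$-subset of $[\ell]$ are exchangeable and negatively associated (a standard property of uniform compositions, cf.\ Joag-Dev and Proschan). Therefore the expectation is at most $\bigl(1 + \EE_M[\lambda^{g_1}]/\mu\bigr)^{m-1}$. A direct calculation using the marginal probability $\Pr[g_1 = j+1] = \binom{\ell-j-1}{m-1}/\binom{\ell}{m} \le m/\ell$ (valid for all $j \ge 0$) together with a geometric sum gives $\EE_M[\lambda^{g_1}] \le \lambda m/(\ell(1-\lambda))$. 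Under the hypothesis $m \le \frac{1-\lambda}{\lambda}\cdot\frac{\epsilon\mu\ell}{2}$ this is at most $\mu\epsilon/2$, and so the final bound is $\mu^m(1+\epsilon/2)^{m-1} \le (\mu(1+\epsilon))^m$.

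The main obstacle is the first step. A naive operator-norm bound $\|DP^g D\|_{\mathrm{op}} \le \mu + \lambda^g$ applied throughout the product gives only $\prod(\mu+\lambda^{g_k})$, missing the crucial leading $\mu$; without that factor the averaging step would lose a $1/\mu$ and fail whenever $\mu$ is small. Recovering the $\mu$ requires carefully separating the ``mean'' ($\mathbf{1}$-component) from the ``fluctuation'' (orthogonal residual) in the iterate and running a coupled two-dimensional recurrence, a step that is technical but routine. The use of negative association in the averaging is a less severe ingredient, but still needed since the gaps are genuinely dependent via the constraint $\sum_j g_j = i_m - i_1$.
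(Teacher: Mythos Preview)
Your proof is correct. The deterministic bound in your first step is exactly the paper's Lemma~\ref{lem:exp-algebra}, but you overstate its difficulty: the leading factor $\mu$ drops out in one line from Cauchy--Schwarz together with $\|vP_W\|_2=\sqrt{\mu/n}$, with no two-dimensional recurrence needed; in fact the paper immediately relaxes this $\mu$ to $\mu+\lambda^{d_1}$ (treating $d_1:=i_1$ as an additional gap) so that all $m$ factors have the same form. The genuine difference is in the averaging over $M$. The paper handles the dependence of the gaps by an explicit coupling with i.i.d.\ variables $e_i\le d_i$ satisfying $\Pr[e_i=k]\le 2m/\ell$ (this is where the hypothesis $m\le\ell/2$ enters), whereas you invoke negative association of the uniform spacings to factor the expectation directly. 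Your route is shorter and slightly sharper---it gives $m/\ell$ rather than $2m/\ell$ per gap and never uses $m\le\ell/2$---at the cost of importing an external result; the paper's coupling is self-contained, and its refined form (Theorem~\ref{thm:dist-coupling}) is what later drives the optimal-constant Theorem~\ref{thm:expander-tight}. One small caveat: negative association of discrete uniform compositions is true but is not literally a numbered statement in Joag-Dev--Proschan, so you should cite it more carefully (e.g., derive it from their Dirichlet example or from conditioning i.i.d.\ geometrics on their sum).
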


Another important property of random walks
on expander graphs is the Chernoff bound
estimating the probability that the number of
times a random walk visits $W$ is far
from its expectation. 
The first Chernoff bound for expander random
walks was given by Gillman \cite{Gil98}
and the problem was treated further in
numerous works 
\cite{Kah97,Lezaud98,LP04, Hea08, Wag08, CLLM12}.

Impagliazzo and Kabanets \cite{IK10}
apply their technique to obtain a bound
for random walks on expander graphs,
but in case of deviations smaller than
$\lambda$ they lose a
factor of $\log \big(\frac{1}{\epsilon}\big)$
in the exponent. They then ask
if their technique can be modified
to avoid this loss.

We answer this question affirmatively:
using Theorem \ref{thm:expander-hitting}
we immediately obtain a bound that
matches the known ones and
does not suffer from the additional
$\log \big(\frac{1}{\epsilon}\big)$ factor
while preserving the simplicity of the proof.
\begin{restatable}{theorem}{expandermain}
\label{thm:expander-main}
Let the setting be as in Theorem \ref{thm:expander-hitting}
with $\mu > 0$.
Define $\Pd_x$ over $\{0,1\}^{\ell}$
as $x_i = 1 \iff v_i \in W$ and let $\epsilon \in [0,\frac45]$.
Then,
\begin{align*}
\Pr_{r \leftarrow \Pd_r} 
\Big[ \sum_{i=1}^{\ell} x_i \geq \mu\ell(1+\epsilon)\Big] 
\le
2\exp\Big(-\frac{(1-\lambda) \epsilon^2 \mu \ell}{18}\Big) \; .
\end{align*}
\end{restatable}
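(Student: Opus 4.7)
The strategy is to combine the randomized hitting property (Theorem \ref{thm:expander-hitting}) with the concentration result for growth-bounded distributions without repetition (Corollary \ref{cor:ik-wor}). Observe that the conclusion of Theorem \ref{thm:expander-hitting}, applied with $\epsilon/3$ in place of $\epsilon$, is exactly the statement that $\Pd_x$ is $(\epsilon/3, m)$-growth bounded without repetition in the sense of Definition \ref{def:gb-wor}; plugging this into Corollary \ref{cor:ik-wor} then produces a bound of the form $\exp(-\epsilon m/3)$ as soon as $m$ meets the constraints of both results.

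Concretely, I would set $m := \lfloor (1-\lambda)\epsilon\mu\ell/6 \rfloor$ and check admissibility. Theorem \ref{thm:expander-hitting} applied with $\epsilon/3$ requires $m \le \min\!\bigl(\frac{1}{2}, \frac{1-\lambda}{\lambda}\cdot\frac{\epsilon\mu}{6}\bigr)\ell$; since $\lambda \in (0,1]$ gives $(1-\lambda)/\lambda \ge 1-\lambda$, my $m$ obeys the second part, and the hypothesis $\epsilon \le 4/5$, $\mu \le 1$ yields $(1-\lambda)\epsilon\mu\ell/6 \le \ell/2$ as well. Corollary \ref{cor:ik-wor} additionally needs $m \le \epsilon\mu\ell/6$, which holds because $1-\lambda \le 1$.

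When $m \ge 1$, Corollary \ref{cor:ik-wor} gives $\Pr[\sum_i x_i \ge \mu\ell(1+\epsilon)] \le \exp(-\epsilon m/3)$. Using $\lfloor y \rfloor \ge y - 1$ I can bound $\exp(-\epsilon m/3) \le e^{\epsilon/3}\exp\bigl(-(1-\lambda)\epsilon^2\mu\ell/18\bigr)$, and the prefactor satisfies $e^{\epsilon/3} \le e^{4/15} < 2$ for $\epsilon \le 4/5$. In the remaining case $m = 0$, i.e., $(1-\lambda)\epsilon\mu\ell < 6$, the trivial bound of $1$ already suffices: the exponent $(1-\lambda)\epsilon^2\mu\ell/18$ is then at most $\epsilon/3 \le 4/15$, so $2\exp\bigl(-(1-\lambda)\epsilon^2\mu\ell/18\bigr) \ge 2e^{-4/15} > 1$.

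The substantive work sits inside Theorem \ref{thm:expander-hitting}; once that is in hand, the present argument is a short bookkeeping reduction. The only mildly delicate point is that the leading factor of $2$ in the statement exists precisely to absorb both the rounding slack $e^{\epsilon/3}$ and the degenerate regime where the natural choice of $m$ collapses to $0$.
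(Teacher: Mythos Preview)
Your proposal is correct and follows essentially the same route as the paper: apply Theorem~\ref{thm:expander-hitting} with $\epsilon/3$ to obtain $(\epsilon/3,m)$-growth boundedness without repetition (this is exactly Corollary~\ref{cor:expander-gb}), then feed it into Corollary~\ref{cor:ik-wor} with the same choice $m := \lfloor (1-\lambda)\epsilon\mu\ell/6 \rfloor$. Your write-up is in fact more explicit than the paper's one-line proof about the floor slack, the role of the prefactor $2$, and the degenerate case $m=0$.
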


Furthermore, we demonstrate robustness of our method
by improving the exponent to 
$\frac{1-\lambda}{1+\lambda} \frac{\mu}{1-\mu} \frac{\epsilon^2 \ell}{2} + 
o(\epsilon^2) \ell$, which is optimal
for fixed $\lambda, \mu$ and $\epsilon \to 0_+$ and
$\ell \to \infty$:

\begin{restatable}{theorem}{expandertight}
\label{thm:expander-tight}
Let the setting be as in Theorem \ref{thm:expander-hitting} with $\mu \in (0,1)$. 
Define $\Pd_x$ over $\{0,1\}^{\ell}$ as $x_i = 1 \iff v_i \in W$ and let $\epsilon \in [0,\frac12]$. Then, there exists $c_{\mu}$ that depends only on $\mu$ such that
\begin{align*}
\Pr_{r \leftarrow \Pd_r} \Big[ \sum_{i=1}^{\ell} x_i \ge \mu\ell(1+\epsilon) \Big]
	\le 2\exp\Big( - \frac{1-\lambda}{1+\lambda} \cdot \frac{\mu}{1-\mu} \cdot \frac{\epsilon^2 \ell}{2}
		+ c_\mu \epsilon^3 \ln(\frac{1}{\epsilon}) \ell \Big) \; .
\end{align*}
\end{restatable}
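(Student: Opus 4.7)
The plan is to combine a sharper randomised hitting lemma with a tighter use of the factorial moment bound than the one in Theorem \ref{thm:ik-wor}. Both ingredients are needed: plugging Theorem \ref{thm:expander-hitting} straight into Theorem \ref{thm:ik-wor} and optimising the constant $c$ there only yields rate $-\frac{(1-\lambda)\mu}{4(1+\lambda)}\epsilon^2\ell$, which is off from the target by a factor of $\frac{2}{1-\mu}$.

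First I would prove a refined hitting lemma delivering effective $\delta = \frac{\lambda(1-\mu)}{(1-\lambda)\mu}\cdot\frac{m}{\ell}(1+o(1))$, improving the $\delta = \frac{2\lambda m}{(1-\lambda)\mu\ell}$ one reads off from Theorem \ref{thm:expander-hitting}. Writing $\mathbb{1}_W = \mu\mathbf{1} + w$ with $\|w\|^2 = \mu(1-\mu)N$, the spectral bound $\langle w, Q^g w\rangle \le \lambda^g \mu(1-\mu)N$ contributes the right $(1-\mu)$ factor at each gap. For a fixed $M = \{i_1 < \ldots < i_m\}$ with gaps $g_j = i_{j+1}-i_j$, an inductive spectral computation should yield a bound of the form $\Pr[\forall i \in M: v_i \in W] \le \mu^m\prod_{j}\bigl(1 + \lambda^{g_j}(1-\mu)/\mu + \text{l.o.t.}\bigr)$, and averaging over $M \leftarrow \binom{\ell}{m}$ using $\EE[\lambda^{g_j}] \lesssim \frac{\lambda m}{(1-\lambda)\ell}$ yields the claimed $\delta$.

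Second, instead of invoking Theorem \ref{thm:ik-wor} (which forces $m \le \epsilon\mu\ell$ through a lossy lower bound on the normalising product), I would work directly with
\begin{align*}
q \;\le\; \frac{(\mu(1+\delta))^m}{\prod_{i=0}^{m-1} (\mu(1+\epsilon)-i/\ell)/(1-i/\ell)},
\end{align*}
and approximate the log-denominator by $\ell\int_{0}^{\alpha}\ln\frac{\mu(1+\epsilon)-s}{1-s}\,\mathrm{d}s$ with $\alpha = m/\ell$. With the refined $\delta = \frac{\lambda(1-\mu)\alpha}{(1-\lambda)\mu}$, the first-order condition for minimising over $\alpha$ gives $\alpha^* \approx \frac{(1-\lambda)\mu\,\epsilon}{(1+\lambda)(1-\mu)}$, combining the i.i.d.~Chernoff optimum $\frac{\mu\epsilon}{1-\mu}$ with the expander factor $\frac{1-\lambda}{1+\lambda}$. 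Substituting and expanding to second order in $\epsilon$ yields the leading exponent $-\frac{1-\lambda}{1+\lambda}\cdot\frac{\mu}{1-\mu}\cdot\frac{\epsilon^{2}\ell}{2}$.

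The error $c_\mu\epsilon^3\ln(1/\epsilon)\ell$ comes from the Taylor remainder beyond second order (contributing $O(\epsilon^3\ell)$ with a $\mu$-dependent coefficient coming from third derivatives of $\ln(1+\cdot)$ and of the integrand) together with the Euler--Maclaurin error $O(\ln m) = O(\ln(\epsilon\ell))$ in the sum-to-integral approximation; the factor of $2$ in front of $\exp(\ldots)$ absorbs the degenerate regime $\epsilon\ell = O(\ln(1/\epsilon))$ where the moment method becomes vacuous. The main obstacle is the first step: establishing the sharper randomised hitting lemma with the correct $(1-\mu)$-scaling in $\delta$, which is the technical heart of the improvement from Theorem \ref{thm:expander-main} to Theorem \ref{thm:expander-tight}.
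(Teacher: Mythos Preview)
Your outline is essentially the paper's proof: a refined hitting lemma keeping the $(1-\mu)$ factor (the paper's Theorem~\ref{thm:expander-hitting-tight}, proved exactly via the spectral decomposition you describe), then the direct ratio bound with the log-denominator replaced by $\int_0^m \ln\frac{\ell(1+\epsilon)-x/\mu}{\ell-x}\,\mathrm{d}x$, and the choice $m = \lfloor\frac{1-\lambda}{1+\lambda}\cdot\frac{\mu}{1-\mu}\cdot\epsilon\ell\rfloor$, which is precisely your $\alpha^*$.

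One point you gloss over, and it is where the $\ln(1/\epsilon)$ actually lives: the gaps $g_1,\ldots,g_m$ of a uniform $m$-subset of $[\ell]$ are \emph{not} independent, so you cannot pass from $\EE\bigl[\prod_j(\mu+(1-\mu)\lambda^{g_j})\bigr]$ to $\prod_j(\mu+(1-\mu)\EE[\lambda^{g_j}])$ directly. The paper handles this via a coupling (Theorem~\ref{thm:dist-coupling}) that dominates the $g_j$ by i.i.d.\ variables $e_j$ with $\Pr[e_j=k]=\frac{m}{\ell-\alpha m}$ for $k\le\alpha$; the leftover mass above the truncation level $\alpha$ contributes $\lambda^{\alpha}$. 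To make $\lambda^{\alpha}\le\mu\epsilon^2$ one needs $\alpha=\Theta(\log_{1/\lambda}(1/\epsilon))$, and this $\alpha$ then feeds back through $\frac{m}{\ell-\alpha m}=\frac{m}{\ell}(1+O(\alpha m/\ell))$, producing a correction of order $\epsilon\cdot\alpha m/\ell\cdot m = O(\epsilon^3\ln(1/\epsilon)\ell)$ in the exponent. So the $\ln(1/\epsilon)$ is not an Euler--Maclaurin artefact (that contributes only an additive $O(1)$, absorbed by the factor $2$); it is the price of decoupling the gaps. Your ``l.o.t.'' and ``$(1+o(1))$'' hide exactly this step, which is the one place the argument requires real care.
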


In the following we prove Theorems \ref{thm:expander-hitting}
and \ref{thm:expander-main}.
Section \ref{sec:coupling} contains a proof of a coupling argument
used in proofs of Theorem \ref{thm:expander-main}
and Theorem \ref{thm:expander-tight}.
In Section \ref{sec:expander-constant} we prove Theorem \ref{thm:expander-tight}
and in Section \ref{sec:optimality} we address its optimality.

\paragraph*{Proofs}
First, we need a coupling argument: let $m, \ell \in \mathbb{N}, m \le \ell$
be given. We consider the distribution 
$\mathsf{D}_{m,\ell}$ defined by the following process:
\begin{itemize}
\item Pick uniformly $M \leftarrow \binom{\ell}{m}$ and let $M := \{x_1, \ldots, x_m\}$
with $x_1 < \ldots < x_m$.
\item Let $d_1 := x_1$ and $d_i := x_i - x_{i-1}$ for $i > 1$.
\end{itemize}
A bijection shows that $d = (d_1, \ldots, d_m)$ is distributed uniformly
among the $\binom{\ell}{m}$ $m$-tuples which satisfy $\sum_{i=1}^m d_i \le \ell$
and $d_i > 0$. We now couple $\mathsf{D}_{m,\ell}$
with independent random variables 
(see Section \ref{sec:coupling} for the proof):
\begin{restatable*}{theorem}{couplingsimple}
\label{cor:dist-coupling-simple}
Let $0 < m \le \frac{\ell}{2}$. There exists a distribution
over $(d_1, \ldots, d_m, e_1, \ldots, e_m)$
such that:
\begin{itemize}
\item $e_i \leq d_i$ for $1 \le i \le m$.
\item $(d_1,\ldots,d_m)$ is distributed according to $\mathsf{D}_{m,\ell}$.
\item $(e_1,\ldots,e_{m})$ are i.i.d.~with $e_i$ in $\mathbb{N}_+$
and $\Pr[e_i = k] \le \frac{2m}{\ell}$ for every $k$.
\end{itemize}
\end{restatable*}

\begin{proof}[Proof of Theorem \ref{thm:expander-hitting}]
Pick $M \leftarrow \binom{\ell}{m}$ 
and let $(d_1, \ldots, d_m)$ be 
as in the definition of $\mathsf{D}_{m, \ell}$. 

\begin{lemma}\label{lem:exp-algebra}
\begin{align*}
\Pr_{r \leftarrow \Pd_r\atop{M \leftarrow \binom{\ell}{m}}}
\big[\forall i \in M: v_i \in W \big] \leq 
\EE_{M \leftarrow \binom{\ell}{m}} 
\big[ \prod_{i=1}^m (\mu + \lambda^{d_i}) \big] \; .
\end{align*}
\end{lemma}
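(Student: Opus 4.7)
The plan is to establish the pointwise bound $\Pr_{r \leftarrow \Pd_r}[\forall i \in M: v_i \in W \mid M] \le \prod_{i=1}^m (\mu + \lambda^{d_i})$ for every fixed $M = \{x_1 < \cdots < x_m\}$, after which Lemma~\ref{lem:exp-algebra} follows by averaging over $M$. Write $P$ for the (symmetric) transition matrix of $G$, $J = \tfrac{1}{n}\mathbf{1}\mathbf{1}^T$, and $\Pi_W$ for the diagonal projector onto $W$ (so $\mathbf{1}_W := \Pi_W \mathbf{1}$ is the indicator of $W$). Using the uniformity of $v_{x_1}$ and the Markov property, the conditional hitting probability expresses as
\[
\Pr\bigl[\forall i \in M: v_i \in W \mid M\bigr] \;=\; \tfrac{1}{n}\, \mathbf{1}_W^T\, P^{d_2}\,\Pi_W\, P^{d_3}\,\Pi_W \cdots \Pi_W\, P^{d_m}\, \mathbf{1}_W ,
\]
with $m-1$ factors $P^{d_j}$ and $m-2$ intermediate copies of $\Pi_W$; the first gap $d_1$ absorbs into the $\mathbf{1}_W$ on the left because $P$ is doubly stochastic.

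Next I would use the expander property $\|P^d - J\|_{\mathrm{op}} \le \lambda^d$ together with the fact that $\Pi_W J \Pi_W = \tfrac{1}{n}\mathbf{1}_W\mathbf{1}_W^T$ has operator norm $\mu$ to deduce, via the triangle inequality, the central estimate
\[
\|\Pi_W P^d \Pi_W\|_{\mathrm{op}} \;\le\; \mu + \lambda^d .
\]
Since $\mathbf{1}_W = \Pi_W\mathbf{1}_W$ and $\Pi_W^2 = \Pi_W$, the matrix expression above rewrites as $\tfrac{1}{n}\,\mathbf{1}_W^T X\, \mathbf{1}_W$ with $X := \prod_{j=2}^m (\Pi_W P^{d_j} \Pi_W)$. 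Cauchy--Schwarz combined with submultiplicativity of the operator norm then yields
\[
\tfrac{1}{n}\,\mathbf{1}_W^T X\, \mathbf{1}_W \;\le\; \tfrac{\|\mathbf{1}_W\|_2^2}{n}\,\|X\|_{\mathrm{op}} \;\le\; \mu \prod_{j=2}^m (\mu + \lambda^{d_j}) \;\le\; \prod_{j=1}^m (\mu + \lambda^{d_j}),
\]
where the last step uses $\mu \le \mu + \lambda^{d_1}$. Taking expectation over $M$ gives the lemma.

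The only delicate point is the regrouping step: a naive factor-by-factor estimate on the un-sandwiched operators $\Pi_W P^{d_j}$ would only produce $\sqrt{\mu}+\lambda^{d_j}$ per factor, which is too weak. The trick of inserting the idempotent $\Pi_W$ so as to cluster the operators as $\Pi_W P^{d_j}\Pi_W$, together with placing $\mathbf{1}_W$ at both ends so that the outer $\tfrac{1}{n}$ absorbs into $\|\mathbf{1}_W\|_2^2 = \mu n$, is what produces the sharp factor $\mu + \lambda^{d_j}$ with the extra $\mu$ out front that lets us extend the product all the way down to $j=1$.
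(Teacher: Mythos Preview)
Your argument is correct and essentially the same as the paper's: both fix $M$, write the hitting probability as a matrix product involving $\Pi_W P^{d_j}\Pi_W$ (using $\Pi_W^2=\Pi_W$), apply Cauchy--Schwarz together with submultiplicativity of the operator norm, invoke the bound $\|\Pi_W P^{d}\Pi_W\|\le \mu+\lambda^{d}$, and absorb the leftover factor $\mu$ into the $j=1$ term. The only cosmetic differences are that the paper phrases the probability as the $\ell_1$-norm of a row vector (then passes to $\ell_2$ via Cauchy--Schwarz) and cites the operator-norm bound as a black box, whereas you write it as a bilinear form $\tfrac{1}{n}\mathbf{1}_W^T X\mathbf{1}_W$ and sketch the bound via $P^d=J+(P^d-J)$.
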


\begin{proof}
Let $v := (\frac1n, \ldots, \frac1n)$ be the vector of 
the uniform distribution on $V$ and
let $P_W$ be a diagonal $n \times n$ matrix with $(P_W)_{uu} = 1$
if $u \in W$ and $(P_W)_{uu} = 0$ otherwise. Note
that $P_W^2 = P_W$.

Let $A_G$ be the probability transition matrix of $G$.
Let us denote the spectral norm of a matrix
with $||\cdot||$. We bound the probability of a random walk
staying in $W$ on indices of $M$ using a standard technique.
In particular, we use
(for the proof see \cite[Claim 4.21]{Vadhan12}):
\begin{claim}\label{cl:exp-decomposition}
\begin{align*}
||P_W A_G^k P_W || \leq \mu + (1-\mu) \lambda^k \; .
\end{align*}
\end{claim}

Fix $M$. First of all, by induction (and noting that $vA_G = v$):
\begin{align*}
\Pr_{r \leftarrow \Pd_r}
[ \forall i \in M: v_{i} \in W ] = 
\big| v P_W \prod_{i=2}^m A_G^{d_i} P_W
	\big|_1 \;.
\end{align*}

Estimate:
\begin{IEEEeqnarray}{rCl}
 \big| v P_W \prod_{i=2}^m A_G^{d_i} P_W
	\big|_1
	& \leq & \sqrt{\mu n} \cdot \big|\big| v P_W \prod_{i=2}^m A_G^{d_i} P_W \big|\big| 
		\label{eq:cs} \\
	& \leq & \sqrt{\mu n} \cdot \big|\big| v P_W \big|\big| 
		\ \prod_{i=2}^m \big|\big| P_W A_G^{d_i} P_W \big|\big| 
		\label{eq:spectral_mul} 
		\IEEEeqnarraynumspace \\
	& = & \mu \prod_{i=2}^m \big|\big| P_W A_G^{d_{i}} P_W \big|\big| 
		\label{eq:s_1} \\
	& \leq & \prod_{i=1}^m (\mu + \lambda^{d_i}) \;,
		\label{eq:eigenvalue_decomp}
\end{IEEEeqnarray}
where (\ref{eq:cs}) is due to Cauchy-Schwarz inequality (note there are at most $\mu n$ non-zero 
coordinates in the final vector), (\ref{eq:spectral_mul}) follows from $||AB|| \leq ||A|| \cdot ||B||$,
(\ref{eq:s_1}) from $||v P_W|| = \sqrt{\frac{\mu}{n}}$ and
(\ref{eq:eigenvalue_decomp}) from Claim \ref{cl:exp-decomposition}.

Since the inequality holds for every $M$, it also holds on average.
\end{proof}

The hope is that $(d_1, \ldots, d_m)$ behave ``almost''
like i.i.d.~uniform random variables. This is
indeed true, and by Theorem \ref{cor:dist-coupling-simple}
we have $(e_1, \ldots, e_m)$ such that
$e_i \le d_i$ and $e_i$ are i.i.d.~with $e_i$ in
$\mathbb{N}_+$ and $\Pr[e_i = k] \le \frac{2m}{\ell}$ 
for each $k$.

Putting this fact together with Lemma \ref{lem:exp-algebra}:
\begin{IEEEeqnarray*}{rCl}
\Pr_{r \leftarrow \Pd_r \atop{M \leftarrow \binom{\ell}{m}}}
\big[\forall i \in M: v_{i} \in W \big] 
& \le & 
\EE
\Big[ \prod_{i=1}^{m} \big(\mu + \lambda^{e_i}\big) \Big] 
\\ & = & 
\prod_{i=1}^{m} 
\big(\mu + \EE [\lambda^{e_i}] \big) \\
& \le & 
\Big( \mu + \frac{2m}{\ell} \cdot \frac{\lambda}{1-\lambda} 
	\Big)^m 
\le 
\mu^m (1+\epsilon)^m \; .
\end{IEEEeqnarray*}
\end{proof}

An immediate corollary of Theorem \ref{thm:expander-hitting} is:
\begin{corollary}
\label{cor:expander-gb}
Let the setting be as in Theorem \ref{thm:expander-hitting}.
Define $\Pd_x$ over $\{0,1\}^{\ell}$
as $x_i = 1 \iff v_i \in W$.
Then, $\Pd_x$ is $\left(
\epsilon, \left\lfloor \min \left( \frac{\ell}{2}, \frac{1-\lambda}{\lambda} \frac{\epsilon \mu \ell}{2} \right) \right\rfloor 
\right)$-growth
bounded without repetition.
\end{corollary}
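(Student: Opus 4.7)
The plan is essentially a direct reduction: Corollary \ref{cor:expander-gb} is a reformulation of Theorem \ref{thm:expander-hitting} in the language of growth boundedness without repetition (Definition \ref{def:gb-wor}). So the strategy is to (i) unwrap Definition \ref{def:gb-wor} for the distribution $\Pd_x$ defined from the random walk, (ii) observe that the probability to be estimated is exactly the one appearing in Theorem \ref{thm:expander-hitting}, and (iii) verify that the given $m$ satisfies the hypothesis of that theorem so that its conclusion applies.

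More concretely, I would set $n = \ell$ in Definition \ref{def:gb-wor}, so that what must be shown is
\begin{align*}
\Pr_{r \leftarrow \Pd_r \atop M \leftarrow \binom{\ell}{m}} \Bigl[ \forall i \in M : x_i = 1 \Bigr] \le \mu^m (1+\epsilon)^m ,
\end{align*}
where by the definition of $\Pd_x$ the event $x_i = 1$ is the same as $v_i \in W$. This is exactly the left-hand side of the inequality in Theorem \ref{thm:expander-hitting}. Next I would check the hypothesis on $m$: the theorem requires $m \le \min\bigl(\tfrac{1}{2}, \tfrac{1-\lambda}{\lambda} \tfrac{\epsilon \mu}{2}\bigr)\ell$, which is exactly $\min\bigl(\tfrac{\ell}{2}, \tfrac{1-\lambda}{\lambda} \tfrac{\epsilon \mu \ell}{2}\bigr)$, and since $m$ in the corollary is defined as the floor of this quantity, the hypothesis holds (with $m = 0$ the statement being trivial). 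Invoking Theorem \ref{thm:expander-hitting} then yields the desired bound.

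There is no real obstacle here beyond bookkeeping; the substance was already proved in Theorem \ref{thm:expander-hitting}, and the corollary simply packages that statement so it can be fed into the machinery of Theorem \ref{thm:ik-wor} and Corollary \ref{cor:ik-wor} in the subsequent derivation of the expander Chernoff bound. The only minor point worth being careful about is the edge case where the floor equals $0$, which makes both sides of the growth boundedness inequality equal to $1$ and thus the statement vacuous.
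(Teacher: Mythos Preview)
Your proposal is correct and matches the paper's approach: the paper simply states this as an ``immediate corollary'' of Theorem~\ref{thm:expander-hitting} with no further argument, and your unpacking is exactly the intended reasoning. The only small point you leave implicit is that the $\mu$ in Definition~\ref{def:gb-wor} (namely $\Pr_{x,i}[x_i=1]$) coincides with $\mu = |W|/|V|$ from the theorem, which holds because a random walk on a regular graph started at a uniform vertex is at a uniform vertex at every step.
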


\begin{proof}[Proof of Theorem \ref{thm:expander-main}]:
Combine Corollary \ref{cor:expander-gb} with Corollary
\ref{cor:ik-wor} (setting $m := \lfloor \frac{(1-\lambda)\epsilon\mu\ell}{6} \rfloor$).
\end{proof}

\subsection{Expander random walk coupling argument}
\label{sec:coupling}

Let $m, \ell \in \mathbb{N}$, $m \leq \ell$ be given.
We consider the distribution $\mathsf{D}_{m,\ell}$ defined by the following process:
\begin{itemize}
\item Pick uniformly $M \leftarrow \binom{\ell}{m}$ and let $M := \{x_1, \ldots, x_m\}$
with $x_1 < \ldots < x_m$.
\item Let $d_1 := x_1$ and $d_i := x_i - x_{i-1}$ for $i > 1$.
\end{itemize}
A bijection shows that $d = (d_1, \ldots, d_m)$ is distributed uniformly
among the $\binom{\ell}{m}$ $m$-tuples which satisfy $\sum_{i=1}^m d_i \le \ell$
and $d_i > 0$. We will now couple $\mathsf{D}_{m,\ell}$
with independent random variables.

The following two claims are proven by indicating
a straightforward bijection:

\begin{claim}\label{cl:pick-d-eq}
Conditioned on $d_1 = k$ (with $k+m-1\le \ell$),
$d' = (d_2, \ldots, d_m)$ is distributed according
to $\mathsf{D}_{m-1,\ell-k}$.
\end{claim}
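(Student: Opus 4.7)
The plan is to unwind the alternative description of $\mathsf{D}_{m,\ell}$ already noted in the excerpt, namely that $(d_1,\ldots,d_m)$ is uniform over the set
\[
S_{m,\ell} := \Big\{ (d_1,\ldots,d_m) \in \mathbb{N}_+^m : \sum_{i=1}^m d_i \le \ell \Big\},
\]
which has size $\binom{\ell}{m}$ (via the bijection $(d_1,\ldots,d_m) \leftrightarrow \{x_1,\ldots,x_m\}$ with $x_i := d_1 + \cdots + d_i$). Once this reformulation is in hand, the claim is a one-line observation, but I'd like to present it as an explicit bijection so that the proof is self-contained.

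First I would describe the map $\Phi : S_{m-1,\ell-k} \to \{(d_1,\ldots,d_m) \in S_{m,\ell} : d_1 = k\}$ given by
\[
\Phi(d_2,\ldots,d_m) := (k, d_2, \ldots, d_m).
\]
The inverse is the obvious projection forgetting the first coordinate. I would check two small things: (i) if $(d_2,\ldots,d_m) \in \mathbb{N}_+^{m-1}$ and $\sum_{i=2}^m d_i \le \ell - k$, then $(k,d_2,\ldots,d_m) \in \mathbb{N}_+^m$ and $\sum_{i=1}^m d_i \le \ell$, so $\Phi$ lands in the claimed target set; (ii) conversely any $(d_1,\ldots,d_m) \in S_{m,\ell}$ with $d_1 = k$ yields an $(m-1)$-tuple in $S_{m-1,\ell-k}$. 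The side condition $k + m - 1 \le \ell$ guarantees that $S_{m-1,\ell-k}$ is non-empty (so the conditional distribution is well-defined).

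Next I would conclude the probabilistic statement. Since $(d_1,\ldots,d_m)$ is uniform on $S_{m,\ell}$, the conditional distribution of $(d_1,\ldots,d_m)$ given $d_1 = k$ is uniform on the fiber $\Phi(S_{m-1,\ell-k})$. Pushing forward through $\Phi^{-1}$ (which is a bijection), the distribution of $(d_2,\ldots,d_m)$ conditioned on $d_1 = k$ is uniform on $S_{m-1,\ell-k}$. By the same reformulation applied to the parameters $(m-1, \ell-k)$, this is exactly $\mathsf{D}_{m-1,\ell-k}$, completing the proof.

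There is no real obstacle here; the only subtle point worth flagging is that the uniform-on-$S_{m,\ell}$ description of $\mathsf{D}_{m,\ell}$ is being used, so I would make sure that this identification (asserted in the paragraph above the claim) is cited rather than re-derived.
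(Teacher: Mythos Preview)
Your proposal is correct and is exactly the approach the paper takes: the paper does not spell out a proof but simply says the claim follows ``by indicating a straightforward bijection,'' and the bijection you write down (prepend/drop the first coordinate $k$, using the uniform-on-$S_{m,\ell}$ description stated just before the claim) is precisely that bijection.
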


\begin{claim}\label{cl:pick-d-gt}
Conditioned on $d_1 > k$ (with $k+m \le \ell$),
$d' = (d_2, \ldots, d_m, d_1-k)$ is distributed according
to $\mathsf{D}_{m,\ell-k}$.
\end{claim}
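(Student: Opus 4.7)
The plan is to exhibit an explicit bijection between the conditional support of $(d_1,\ldots,d_m)$ given $\{d_1>k\}$ and the support of $\mathsf{D}_{m,\ell-k}$, and then use the fact that uniform distributions transport through bijections.

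First I would unpack $\mathsf{D}_{m,\ell}$: by the preceding description it is uniform over the finite set $S_{m,\ell}:=\{(d_1,\ldots,d_m)\in\mathbb{Z}_{>0}^m : \sum_i d_i\le\ell\}$. Conditioning on $\{d_1>k\}$ yields the uniform distribution on the subset $T:=\{(d_1,\ldots,d_m)\in S_{m,\ell}: d_1\ge k+1\}$, and the hypothesis $k+m\le\ell$ guarantees that $T$ is nonempty.

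Second I would define the shift map $\phi:T\to S_{m,\ell-k}$ by
\[
\phi(d_1,d_2,\ldots,d_m):=(d_2,d_3,\ldots,d_m,\,d_1-k),
\]
and verify that $\phi(T)\subseteq S_{m,\ell-k}$: the first $m-1$ coordinates are positive integers by definition of $T$, the last coordinate is a positive integer because $d_1\ge k+1$, and the sum of all coordinates equals $\sum_i d_i - k\le\ell-k$. I would then produce the explicit inverse $\psi:S_{m,\ell-k}\to T$ given by $\psi(e_1,\ldots,e_m):=(e_m+k,\,e_1,\ldots,e_{m-1})$ and check $\phi\circ\psi=\mathrm{id}$ and $\psi\circ\phi=\mathrm{id}$. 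Since $\phi$ is a bijection of finite sets and $\mathsf{D}_{m,\ell-k}$ is uniform on $S_{m,\ell-k}$, the pushforward of the uniform distribution on $T$ under $\phi$ is exactly $\mathsf{D}_{m,\ell-k}$, proving the claim.

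There is no real obstacle here; the argument mirrors the one already sketched for Claim \ref{cl:pick-d-eq}. The only mildly subtle point is placing the modified entry $d_1-k$ at the end of the output tuple rather than at the front, so that the output remains a length-$m$ tuple (in contrast to Claim \ref{cl:pick-d-eq}, where $d_1$ is dropped and one obtains a length-$(m-1)$ tuple distributed as $\mathsf{D}_{m-1,\ell-k}$). Any cyclic shift of this bijection would work equally well, since the support of $\mathsf{D}_{m,\ell-k}$ is permutation-invariant.
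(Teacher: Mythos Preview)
Your proof is correct and is exactly the ``straightforward bijection'' the paper alludes to without spelling out; you have simply made explicit the map and its inverse and checked the containments. Nothing to add.
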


\begin{lemma}
\label{lem:ind-coupling}
Let $1 \le m^* \le m \le \ell$, and $\alpha, \beta \in \mathbb{R}$ such
that $1 \le \alpha \le \frac{\ell}{m+m^*}$ and $\frac{m}{\ell-\alpha m^*} \le \beta \le 
\frac{1}{\alpha}$ 
be given.
Then there exists a distribution
over $(d_1,\ldots,d_m, e_1,\ldots, e_{m^*})$
such that:
\begin{itemize}
\item $e_i \leq d_i$ for $1 \leq i \leq m^*$.
\item $(d_1,\ldots,d_m)$ is distributed according to $\mathsf{D}_{m,\ell}$.
\item $(e_1,\ldots,e_{m^*})$ are i.i.d.~with $e_i$ in $\mathbb{Z}_+$
and $\Pr[e_i = k] = \beta$ for $k \le \alpha$.
\end{itemize}
\end{lemma}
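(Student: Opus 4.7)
The plan is to prove Lemma~\ref{lem:ind-coupling} by induction on $m^*$. The base case $m^* = 0$ is trivial (sample $(d_1, \ldots, d_m) \sim \mathsf{D}_{m, \ell}$ with no $e_i$'s to couple), so the heart of the argument will be the inductive step. I will sample $d_1$ from its marginal under $\mathsf{D}_{m, \ell}$ and case-split on whether $d_1 \le \lfloor \alpha \rfloor$ or $d_1 > \alpha$, using Claims~\ref{cl:pick-d-eq} and~\ref{cl:pick-d-gt} respectively to control the conditional law of $(d_2, \ldots, d_m)$. A key preliminary observation is that $\Pr[d_1 = j] = \binom{\ell - j}{m - 1}/\binom{\ell}{m}$ is maximized at $j = 1$ with value $m/\ell$, so the hypothesis $\beta \ge m/(\ell - \alpha m^*) \ge m/\ell$ implies $\Pr[d_1 = j] \le \beta$ for every $j$; this is what will make the marginal constraint on $e_1$ satisfiable.

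If $d_1 = j$ with $j \le \lfloor \alpha \rfloor$, I will set $e_1 := j$ and apply the inductive hypothesis with parameters $(m-1, m^*-1, \ell - j)$ to $(d_2, \ldots, d_m) \sim \mathsf{D}_{m-1, \ell - j}$ via Claim~\ref{cl:pick-d-eq}. If instead $d_1 > \alpha$, I will sample $e_1$ independently from the distribution placing mass $(\beta - \Pr[d_1 = k])/\Pr[d_1 > \alpha]$ on each $k \in \{1, \ldots, \lfloor \alpha \rfloor\}$ and the remainder $(1 - \lfloor \alpha \rfloor \beta)/\Pr[d_1 > \alpha]$ on $\lfloor \alpha \rfloor + 1$, and invoke the inductive hypothesis with parameters $(m, m^*-1, \ell - \alpha)$ after using Claim~\ref{cl:pick-d-gt} to rewrite $(d_2, \ldots, d_m, d_1 - \alpha) \sim \mathsf{D}_{m, \ell - \alpha}$, whose first $m^* - 1$ coordinates are $(d_2, \ldots, d_{m^*})$. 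The inequality $e_1 \le d_1$ will hold by construction in both cases ($e_1 = d_1$ in the first; $e_1 \le \lfloor \alpha \rfloor + 1 \le d_1$ in the second, since $d_1 > \alpha$ and $d_1$ integer force $d_1 \ge \lfloor \alpha \rfloor + 1$). Summing contributions over the two cases gives $\Pr[e_1 = j] = \beta$ for $j \le \lfloor \alpha \rfloor$; validity of the second-case distribution uses $\beta \le 1/\alpha$ together with $\Pr[d_1 \le \alpha] \le \lfloor \alpha \rfloor (m/\ell) \le \lfloor \alpha \rfloor \beta$. Independence of $e_1$ from $(e_2, \ldots, e_{m^*})$ will follow because the inductive coupling delivers iid target marginals regardless of $d_1$ and the case.

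The main obstacle is the choice of case split: a naive induction that conditions on $d_1 = k$ and recurses on $(m-1, m^*-1, \ell - k)$ would fail for large $k$, since the constraint $\alpha(m + m^* - 2) \le \ell - k$ can be violated. Claim~\ref{cl:pick-d-gt} resolves this by letting the recursion consume only $\alpha$ (rather than the full $d_1$) from the budget when $d_1$ is large, keeping the parameters inside the regime where the hypothesis holds. The algebraic verifications of the new parameter bounds are then straightforward: for $(m-1, m^*-1, \ell - j)$ I would use $j \le \alpha$ together with $\ell \ge \alpha(m + m^*)$ and $\beta \ge m/(\ell - \alpha m^*)$, and for $(m, m^*-1, \ell - \alpha)$ both inequalities collapse to the originals after cancellation.
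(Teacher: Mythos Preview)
Your proposal is correct and follows essentially the same approach as the paper's proof: induction on $m^*$, case-splitting on $d_1 \le \lfloor\alpha\rfloor$ versus $d_1 > \alpha$, and invoking Claims~\ref{cl:pick-d-eq} and~\ref{cl:pick-d-gt} respectively with the same parameter checks. The only minor slip is that in the second case you should shift by the integer $\lfloor\alpha\rfloor$ rather than $\alpha$ (so that $\mathsf{D}_{m,\ell-\lfloor\alpha\rfloor}$ is well-defined and Claim~\ref{cl:pick-d-gt} applies), and correspondingly the sampled $d_1$ must be discarded and replaced by $\lfloor\alpha\rfloor$ plus the last coordinate of the inductive sample; the inequality checks still go through since $\ell - \lfloor\alpha\rfloor \ge \ell - \alpha$.
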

\begin{proof}
Induction on $m^*$.

First, pick $d_1$ and $e_1$ together with properties as stated. This is possible,
since, by union bound, 
$\Pr[d_1 \le k] \le \frac{km}{\ell}$,
and, furthermore,
$\frac{m}{\ell} \le \frac{m}{\ell-\alpha m^*} \le \beta$ and
$\lfloor \alpha \rfloor\beta \le \alpha \beta \le 1$.
What is more, ensure that $e_1$ is always at most
$\lfloor \alpha \rfloor + 1$.

If $d_1 \le \alpha$, pick $(d_2, \ldots, d_m, e_2, \ldots, e_{m^*})$
from $\mathsf{D}_{m-1,\ell-d_1}$ 
using the inductive assumption, Claim \ref{cl:pick-d-eq}, 
$\alpha(m+m^*-2) \le l-d_1$ and $\frac{m-1}{(\ell-d_1)-\alpha(m^*-1)} \le \frac{m}{l-\alpha m^*}$.

If $d_1 > \alpha$, discard it and 
pick $(d_2, \ldots, d_m, d_1-\lfloor \alpha \rfloor, e_2, \ldots, e_{m^*})$
from $\mathsf{D}_{m, \ell-\lfloor \alpha \rfloor}$
using the inductive assumption, Claim \ref{cl:pick-d-gt},
$\alpha(m+m^*-1) \le l-\lfloor \alpha \rfloor$ and 
$\frac{m}{(\ell-\lfloor \alpha \rfloor)-\alpha(m^*-1)} \le \frac{m}{\ell-\alpha m^*}$.
Note that $e_1 \le d_1$ still holds.
\end{proof}

Setting $m := m^*$ and $\beta := \frac{m}{\ell - \alpha m}$ we get as a corollary:
\begin{theorem}\label{thm:dist-coupling}
Let $0 < m \le \ell$ and $1 \le \alpha \le \frac{\ell}{2m}$. There exists a distribution
over $(d_1, \ldots, d_m, e_1, \ldots, e_m)$
such that:
\begin{itemize}
\item $e_i \leq d_i$ for $1 \le i \le m$.
\item $(d_1,\ldots,d_m)$ is distributed according to $\mathsf{D}_{m,\ell}$.
\item $(e_1,\ldots,e_{m})$ are i.i.d.~with $e_i$ in $\mathbb{Z}_+$
and $\Pr[e_i = k] = \frac{m}{\ell - \alpha m}$ for $k \le \alpha$.
\end{itemize}
\end{theorem}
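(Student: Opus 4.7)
The plan is to derive Theorem \ref{thm:dist-coupling} as a one-line specialization of Lemma \ref{lem:ind-coupling}, exactly along the lines indicated in the sentence immediately preceding the theorem. I would set $m^* := m$ and $\beta := \frac{m}{\ell - \alpha m}$; with this choice the three bullet points in the conclusion of the lemma coincide verbatim with the three bullet points of the theorem, so the entire proof reduces to checking that the lemma's parameter hypotheses are satisfied under the single weaker assumption $1 \le \alpha \le \frac{\ell}{2m}$ of the theorem.

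There are three inequalities to verify. The condition $\alpha \le \frac{\ell}{m + m^*}$ becomes $\alpha \le \frac{\ell}{2m}$, which is the assumption. The lower bound $\beta \ge \frac{m}{\ell - \alpha m^*}$ holds with equality by construction. The upper bound $\beta \le \frac{1}{\alpha}$ rearranges to $\alpha m \le \ell - \alpha m$, i.e., $2\alpha m \le \ell$, which is again the assumption.

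I expect no real obstacle: all the substantive work sits in Lemma \ref{lem:ind-coupling} itself, whose inductive construction via Claims \ref{cl:pick-d-eq} and \ref{cl:pick-d-gt} handles both the in-range case $d_1 \le \alpha$ (recursing on $\mathsf{D}_{m-1,\ell-d_1}$ after coupling $e_1$ with $d_1$) and the out-of-range case $d_1 > \alpha$ (shrinking $d_1$ by $\lfloor \alpha \rfloor$ and recursing on $\mathsf{D}_{m,\ell-\lfloor \alpha \rfloor}$). The role of the hypothesis $\alpha \le \frac{\ell}{2m}$ in the corollary is precisely to keep the interval $\bigl[\frac{m}{\ell - \alpha m},\, \frac{1}{\alpha}\bigr]$ of admissible $\beta$'s non-empty, and our choice sits at its left endpoint, maximising the uniform mass that can be transferred to each independent $e_i$ -- which is exactly what is needed downstream in the proof of Theorem \ref{thm:expander-hitting}, where one then bounds $\EE[\lambda^{e_i}]$ by $\frac{\lambda}{1-\lambda}\cdot \Pr[e_i = k]_{\max}$.
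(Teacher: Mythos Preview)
Your proposal is correct and is exactly the paper's approach: the paper derives Theorem~\ref{thm:dist-coupling} from Lemma~\ref{lem:ind-coupling} by setting $m^* := m$ and $\beta := \frac{m}{\ell - \alpha m}$, and your verification of the three parameter constraints is precisely what is needed. (One minor quibble in your final commentary: choosing $\beta$ at the left endpoint \emph{minimises} the per-point probability $\Pr[e_i = k]$, which is indeed what the downstream bound on $\EE[\lambda^{e_i}]$ wants---but this does not affect the proof itself.)
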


For a simplified bound set $\alpha := \frac{\ell}{2m}$:
\couplingsimple

\subsection{Expanders -- improving the constant}
\label{sec:expander-constant}

With a more careful computation and
using a tighter version of Theorem \ref{thm:expander-hitting}
we arrive at a bound with a better exponent
when $\epsilon \to 0$:
\expandertight*

This exponent is optimal up to $o(\epsilon^2)$ and $o(\ell)$ factors
(see Section \ref{sec:optimality}).

As far as we know, the bound of this form has not been explicitly stated
previously, but it can be obtained with some work from previous results
(e.g., \cite{Kah97} or \cite{LP04}). Still, we believe our proof to be
considerably simpler and more elementary.

We proceed to the proof of Theorem \ref{thm:expander-tight}.

\begin{theorem}
\label{thm:expander-hitting-tight}
Let $G$ be a $\lambda$-expander
with distribution $\Pd_r$ over $V^{\ell}$ representing an
$(\ell-1)$-step random walk
$r = (v_1, \ldots, v_\ell)$
(with $v_1$ being a uniform starting vertex) and $W \subseteq V$
with $\mu := |W|/|V|$. 
Let $m \in [\ell]$ and $1 \le \alpha \le \frac{\ell}{2m}$.
Then,
\begin{align*}
\Pr_{r \leftarrow \Pd_r\atop{M \leftarrow \binom{\ell}{m}}}
\big[\forall i \in M: v_i \in W \big] \leq 
\Big( \mu + (1-\mu)\big(\frac{m}{\ell-\alpha m}\frac{\lambda}{1-\lambda} + \lambda^{\alpha} \big)
\Big)^m \; .
\end{align*}
\end{theorem}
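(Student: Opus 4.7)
The plan is to mimic the proof of Theorem \ref{thm:expander-hitting} with two refinements: retain the factor $(1-\mu)$ in Claim \ref{cl:exp-decomposition} that was previously dropped, and use the sharper coupling provided by Theorem \ref{thm:dist-coupling} with its free parameter $\alpha$ instead of the simplified version.

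First, I would redo the chain of inequalities (\ref{eq:cs})--(\ref{eq:eigenvalue_decomp}) from the proof of Lemma \ref{lem:exp-algebra} verbatim, except that in the final step I would apply Claim \ref{cl:exp-decomposition} in its full form $\|P_W A_G^k P_W\| \leq \mu + (1-\mu)\lambda^k$, rather than relaxing it to $\mu + \lambda^k$. Letting $(d_1, \ldots, d_m) \sim \mathsf{D}_{m,\ell}$ as in that proof, this yields the sharpened statement
\begin{align*}
\Pr_{r \leftarrow \Pd_r \atop M \leftarrow \binom{\ell}{m}} \big[\forall i \in M: v_i \in W\big] \leq \EE_{M \leftarrow \binom{\ell}{m}} \Big[ \prod_{i=1}^m \big(\mu + (1-\mu) \lambda^{d_i}\big) \Big].
\end{align*}

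Second, I would invoke Theorem \ref{thm:dist-coupling}, whose hypotheses $1 \le \alpha \le \frac{\ell}{2m}$ are exactly the ones we are given, to couple $(d_1, \ldots, d_m)$ with i.i.d.~variables $(e_1, \ldots, e_m)$ in $\mathbb{Z}_+$ satisfying $e_i \leq d_i$ and $\Pr[e_i = k] = \frac{m}{\ell-\alpha m}$ for every integer $1 \le k \le \alpha$. Because $\lambda \in [0,1]$, the inequality $e_i \le d_i$ gives $\lambda^{d_i} \le \lambda^{e_i}$, and independence of the $e_i$ turns the expectation of the product into the product of expectations:
\begin{align*}
\EE \Big[ \prod_{i=1}^m \big(\mu + (1-\mu)\lambda^{d_i}\big) \Big] \le \prod_{i=1}^m \Big( \mu + (1-\mu)\, \EE[\lambda^{e_i}] \Big).
\end{align*}

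Third, I would bound $\EE[\lambda^{e_i}]$ by splitting at $\alpha$. On the range $\{1, \ldots, \lfloor \alpha \rfloor\}$ the distribution is prescribed, and a geometric-series estimate gives $\frac{m}{\ell - \alpha m} \sum_{k=1}^{\lfloor \alpha \rfloor} \lambda^k \le \frac{m}{\ell - \alpha m} \cdot \frac{\lambda}{1-\lambda}$. On the tail $\{k > \alpha\}$ the distribution is unknown, but $\lambda^k \le \lambda^{\alpha}$ there, so this piece is at most $\lambda^{\alpha} \Pr[e_i > \alpha] \le \lambda^{\alpha}$. Adding the two contributions yields $\EE[\lambda^{e_i}] \le \frac{m}{\ell-\alpha m} \cdot \frac{\lambda}{1-\lambda} + \lambda^{\alpha}$; substituting into the product gives the stated bound.

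The main obstacle is purely bookkeeping: the coupling specifies the law of $e_i$ only on $\{1, \ldots, \lfloor \alpha \rfloor\}$, so one must carefully absorb the unknown tail mass $\Pr[e_i > \alpha]$ into the single term $\lambda^{\alpha}$ via monotonicity of $k \mapsto \lambda^k$. Everything else is a direct reuse of the machinery already assembled in the proof of Theorem \ref{thm:expander-hitting}.
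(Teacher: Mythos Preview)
Your proposal is correct and follows essentially the same approach as the paper: state the sharpened Lemma~\ref{lem:exp-algebra} keeping the $(1-\mu)$ factor, invoke the $\alpha$-parametrised coupling of Theorem~\ref{thm:dist-coupling}, and bound $\EE[\lambda^{e_i}]$ by a geometric sum plus a tail term. The only cosmetic difference is that the paper additionally uses the fact (from the construction in Lemma~\ref{lem:ind-coupling}) that $e_i \le \lfloor \alpha \rfloor + 1$, so its tail consists of a single atom, whereas you bound the entire tail mass beyond $\alpha$ by $\lambda^{\alpha}$ via monotonicity; both routes give the same final estimate.
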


\begin{proof}
As in the proof of Theorem \ref{thm:expander-hitting} we pick
$M \leftarrow \binom{\ell}{m}$ and show:
\begin{lemma}\label{lem:exp-algebra-tight}
\begin{align*}
\Pr_{r \leftarrow \Pd_r\atop{M \leftarrow \binom{\ell}{m}}}
\big[\forall i \in M: v_i \in W \big] \leq 
\EE_{M \leftarrow \binom{\ell}{m}} 
\big[ \prod_{i=1}^m (\mu + (1-\mu) \lambda^{d_i} ) \big] \; .
\end{align*}
\end{lemma}

\begin{proof}
Exactly the same as for Lemma \ref{lem:exp-algebra},
only we do not ignore the
$(1-\mu)$ factor in Claim \ref{cl:exp-decomposition}.
\end{proof}

By Theorem \ref{thm:dist-coupling} we can couple
$(d_1, \ldots, d_m)$ with i.i.d~$(e_1, \ldots, e_m)$
with
$e_i \le d_i$,
$e_i \in \big[ \lfloor \alpha \rfloor + 1 \big]$ and $\Pr[e_i = k] = \frac{m}{\ell-\alpha m}$ 
for $k \le \alpha$.

Putting this together with Lemma \ref{lem:exp-algebra-tight}:
\begin{IEEEeqnarray*}{rCl}
\Pr_{r \leftarrow \Pd_r \atop{M \leftarrow \binom{\ell}{m}}}
\big[\forall i \in M: v_{i} \in W \big] 
& \le & 
\EE
\Big[ \prod_{i=1}^{m} \big(\mu + (1-\mu) \lambda^{e_i}\big) \Big] 
\\ & = & 
\prod_{i=1}^{m} 
\Big( \mu + (1-\mu) \EE [\lambda^{e_i}]  \Big)
\\ & \le & 
\prod_{i=1}^{m} 
\Bigg( \mu + (1-\mu) \Big( \sum_{j=1}^{\lfloor \alpha \rfloor+1} \Pr[e_i = j] \lambda^j 
\Big) \Bigg) \\
& \le & 
\Bigg( \mu + (1-\mu) \Big( \frac{m}{\ell-\alpha m} 
\frac{\lambda}{1-\lambda} + 
\lambda^{\alpha} \Big) \Bigg)^m
\; .
\end{IEEEeqnarray*}
\end{proof}

\begin{proof}[Proof of Theorem \ref{thm:expander-tight}]
Set $m := \lfloor \frac{1-\lambda}{1+\lambda} \cdot \frac{\mu}{1-\mu} \cdot \epsilon \ell \rfloor$
and $\alpha := \max(1, \log_{\lambda}(\mu \epsilon^2))$. W.l.o.g.~assume that
$\epsilon \le \min(\frac{1}{3}, \mu, -\frac{1-\mu}{3 \ln (\epsilon)})$.
Note that $2m\alpha \le \ell$ indeed holds (use $-\frac{1-\lambda}{(1+\lambda) \ln(\lambda)} \le \frac{1}{2}$
for $\lambda \in [0,1)$).

Apply Theorem \ref{thm:expander-hitting-tight} to get:
\begin{IEEEeqnarray*}{rCl}
	\Pr_{r \leftarrow \Pd_r \atop{M \leftarrow \binom{\ell}{m}}} 
	\Big[ \forall i \in M: x_i = 1 \Big] & \le &
	\mu^m \Big(1 + \frac{\lambda}{1-\lambda}\frac{(1-\mu)}{\mu}\frac{m}{\ell-\alpha m} + \epsilon^2 \Big)^m \\
	& \le &
	\mu^m \Big( 1 + \frac{\lambda}{1-\lambda}\frac{(1-\mu)}{\mu}\frac{m}{\ell}\big(1 + \frac{2\alpha m}{\ell}\big) + \epsilon^2 \Big)^m
	\IEEEyesnumber \label{eq:ha1} \\
	& \le &
	\mu^m \Big( 1 + \frac{\lambda}{1+\lambda}\epsilon + \frac{3}{1-\mu} \epsilon^2 \ln(\frac{1}{\epsilon})\big) \Big)^m
	\IEEEyesnumber \label{eq:ha2}
	\; ,
\end{IEEEeqnarray*}
where in (\ref{eq:ha1}) we used $\frac{1}{1-\delta} \le 1 + 2\delta$ for $\delta \in [0, \frac12]$. On the other hand, estimate:
\begin{IEEEeqnarray*}{rCl}
\Pr_{r \leftarrow \Pd_r \atop{M \leftarrow \binom{\ell}{m}}} \Big[ \sum_{i=1}^{\ell} x_i \ge \mu\ell(1+\epsilon)
	\mid \forall i \in M: x_i = 1 \Big] & \ge &
	\mu^m \prod_{i=0}^{m-1} \frac{\ell(1+\epsilon)-\frac{i}{\mu}}{\ell-i}
	\\ & \ge & 
	\mu^m \exp \Big( \sum_{i=0}^{m-1} \ln \big( \frac{\ell(1+\epsilon)-\frac{i}{\mu}}{\ell-i} 
	\big) \Big)
	\\ & \ge &
	\mu^m \exp \Big( \int_{0}^{m} \ln \big( \frac{\ell(1+\epsilon)-\frac{x}{\mu}}{\ell-x} \big) \, \mathrm{d}x 
	 \Big) \; .
	\IEEEyesnumber \label{eq:ha3}
	\IEEEeqnarraynumspace
\end{IEEEeqnarray*} 

Since we have
\begin{IEEEeqnarray*}{rCl}
\Pr_{r \leftarrow \Pd_r} \Big[ \sum_{i=1}^{\ell} x_i \ge \mu\ell(1+\epsilon) \Big] & \le &
\frac{\Pr_{r \leftarrow \Pd_r \atop{M \leftarrow \binom{\ell}{m}}} \Big[ \forall i \in M: x_i = 1 \Big] }
{ \Pr_{r \leftarrow \Pd_r \atop{M \leftarrow \binom{\ell}{m}}} \Big[ \sum_{i=1}^{\ell} x_i \ge \mu\ell(1+\epsilon)
	\mid \forall i \in M: x_i = 1 \Big] } \; ,
\end{IEEEeqnarray*}
it is enough to lower bound the logarithm of the quotient of (\ref{eq:ha3}) and (\ref{eq:ha2}).
Using $\ln(1+\delta) \ge \delta - \frac{\delta^2}{2}$ for $\delta \ge 0$:
\begin{IEEEeqnarray*}{rCl}
\ln \Bigg(
\frac
{\mu^m \exp \Big( \int_{0}^{m} \ln \big( \frac{\ell(1+\epsilon)-\frac{x}{\mu}}{\ell-x} \big) \, \mathrm{d}x 
 \Big)} 
{\mu^m \Big( 1 + \frac{\lambda}{1+\lambda}\epsilon + \frac{3}{1-\mu} \epsilon^2 \ln(\frac{1}{\epsilon}) \Big)^m}
\Bigg)
& = &
\int_{0}^{m} \ln \big( \frac{\ell(1+\epsilon)-\frac{x}{\mu}}{\ell-x} \big) \, \mathrm{d}x
\\ & &
-\: m \ln \Big( 1 + \frac{\lambda}{1+\lambda}\epsilon + \frac{3}{1-\mu} \epsilon^2 \ln(\frac{1}{\epsilon}) \Big)
\\ &\ge &
\int_0^m \ln \Big( 1 + \epsilon - \frac{1-\mu}{\mu}\frac{x}{\ell} \Big) \, \mathrm{d}x
\\ & &
-\: m \big(\frac{\lambda}{1+\lambda}\epsilon + \frac{3}{1-\mu}\epsilon^2 \ln(\frac{1}{\epsilon}) \big)
\\ & \ge &
\int_0^m \epsilon - \frac{(1-\mu)x}{\mu\ell} -\epsilon^2 \, \mathrm{d}x
\\ & &
-\: m \big(\frac{\lambda}{1+\lambda}\epsilon + \frac{3}{1-\mu}\epsilon^2 \ln(\frac{1}{\epsilon}) \big)
\\ & \ge &
\epsilon m - \frac{(1-\mu)m^2}{2 \mu \ell} - \frac{\lambda \epsilon m}{1+\lambda}
- \frac{4}{(1-\mu)^2}\epsilon^3 \ln(\frac{1}{\epsilon}) \ell
\\ & \ge &
\frac{\lambda}{1+\lambda} \frac{\mu}{1-\mu} \frac{\epsilon^2 \ell}{2}
- \frac{4}{(1-\mu)^2}\epsilon^3 \ln(\frac{1}{\epsilon})\ell - \frac{1}{3}\; .
\end{IEEEeqnarray*}
\end{proof}

We remark that the proof gives $c_{\mu} \le \frac{4}{(1-\mu)^2}$ for
$\epsilon \le \min(\frac13, \mu, -\frac{1-\mu}{3\ln(\epsilon)})$.



\subsection{Optimality}
\label{sec:optimality}
Our bound is optimal in the following sense:
fix $\lambda, \mu \in (0,1) \cap \mathbb{Q}$
and let $G$ be any regular graph such that its
probability transition matrix $A_G = \lambda I_n + \frac{1-\lambda}{n} J_n$,
where $I_n$ is the identity matrix, $J_n$ the all-ones matrix and $n = |V(G)|$.
Let $W$ be an arbitrary subset of $V(G)$ such that $|W| = \mu n$.

It is easy to see that $G$ is a $\lambda$-expander. As previously,
consider an $\ell$-step random walk on $G$ with a uniform starting
vertex and define $\Pd_x$ over $\{0,1\}^{\ell}$ as $x_i = 1$ if and only if
the $i$-th step of the random walk is in $W$.

\begin{theorem}
\label{thm:expander-optimality}
There exist $\epsilon_{\lambda,\mu} > 0$ and $c_{\lambda,\mu} \in \mathbb{R}$
such that for every $\epsilon \in (0, \epsilon_{\lambda,\mu})$ and $\ell$
big enough (where ``big enough'' depends on $\lambda$, $\mu$ and $\epsilon$),
we have
\begin{align*}
	\Pr_{x \leftarrow \Pd_x} \left[ \sum_{i=1}^{\ell} x_i \ge \mu \ell (1+\epsilon) \right]
		\ge
	\exp \left(
		- \frac{1-\lambda}{1+\lambda} \cdot \frac{\mu}{1-\mu} \cdot \frac{\epsilon^2 \ell}{2}
		- c_{\lambda,\mu} \cdot \epsilon^3 \ell
	\right) \; .
\end{align*}
\end{theorem}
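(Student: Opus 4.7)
The plan is to exploit the transparent structure of $A_G = \lambda I_n + \frac{1-\lambda}{n} J_n$: at each step the walk, with probability $\lambda$, stays, and with probability $1-\lambda$, jumps to a uniformly random vertex. I would couple the walk with independent ``refresh'' indicators $Y_2,\ldots,Y_\ell$ i.i.d.\ $\mathrm{Bern}(1-\lambda)$ and samples $U_1, U_2, \ldots$ i.i.d.\ uniform on $V(G)$, by setting $Y_1 := 1$, $k(i) := \sum_{j=1}^i Y_j$, and $v_i := U_{k(i)}$. Letting $K := k(\ell)$, $L_k$ the length of the $k$-th constant block, and $B_k := \mathbf{1}[U_k \in W]$, one has the decomposition $S := \sum_{i=1}^\ell x_i = \sum_{k=1}^K L_k B_k$, where $(B_k)$ are i.i.d.\ $\mathrm{Bern}(\mu)$ and independent of $(K, L_1, \ldots, L_K)$. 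A short calculation (the gaps between successive $1$'s of $Y$ are geometric with parameter $1-\lambda$) gives $\EE\bigl[\sum_k L_k^2\bigr] = \frac{1+\lambda}{1-\lambda}\ell + O(1)$, which already predicts the matching exponent via $\mathrm{Var}(S) \approx \mu(1-\mu)\EE\bigl[\sum_k L_k^2\bigr]$.

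Next I would restrict to a ``typical'' refresh pattern. Let $E_Y$ be the event that both $\bigl|\sum_k L_k^2 - \frac{1+\lambda}{1-\lambda}\ell\bigr| \le C_\lambda \sqrt{\ell}$ and $\sum_k L_k^3 \le C_\lambda \ell$ hold, for a suitably large constant $C_\lambda$. Since the block lengths are nearly i.i.d.\ geometric with finite moments of all orders, two applications of Chebyshev yield $\Pr[E_Y] \ge \tfrac12$ for all $\ell$ large enough. Conditional on any $Y \in E_Y$, the $L_k$ become deterministic and $S$ is a sum of independent random variables $L_k B_k$ with mean $\mu\ell$ and variance $\mu(1-\mu)\sum_k L_k^2$.

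The core step is a Cram\'er / Bahadur--Rao style lower bound for this conditional sum. I would pick the exponential tilt $\theta^* > 0$ making the tilted mean equal to $\mu\ell(1+\epsilon)$, relying on the Taylor expansion $\log\bigl((1-\mu) + \mu e^{\theta L_k}\bigr) = \mu L_k \theta + \tfrac12 \mu(1-\mu) L_k^2 \theta^2 + O(L_k^3 \theta^3)$. Summing gives $\theta^* = \frac{\epsilon\ell}{(1-\mu)\sum_k L_k^2}\bigl(1 + O(\epsilon)\bigr)$ and the rate function $I := \theta^* \mu\ell(1+\epsilon) - \sum_k \log\bigl((1-\mu) + \mu e^{\theta^* L_k}\bigr) = \frac{\mu^2 \ell^2 \epsilon^2}{2\mu(1-\mu)\sum_k L_k^2} + O(\epsilon^3 \ell)$; the cubic remainder is controlled precisely by $\sum L_k^3 \le C_\lambda \ell$ from $E_Y$. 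The tilted variance of $S$ remains $O(\ell)$, so Chebyshev under the tilt gives $\widetilde\Pr\bigl[S \in [\mu\ell(1+\epsilon),\, \mu\ell(1+\epsilon) + c_\lambda \sqrt{\ell}]\bigr] \ge \tfrac13$, and undoing the change of measure yields $\Pr[S \ge \mu\ell(1+\epsilon) \mid Y] \ge \tfrac{c_\lambda'}{\sqrt{\ell}}\, e^{-I}$. Substituting $\sum_k L_k^2 = \frac{1+\lambda}{1-\lambda}\ell \pm O(\sqrt{\ell})$ from $E_Y$ converts $I$ into precisely $\frac{1-\lambda}{1+\lambda} \cdot \frac{\mu}{1-\mu} \cdot \frac{\epsilon^2 \ell}{2} + O(\epsilon^3 \ell)$.

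Finally, multiplying by $\Pr[E_Y] \ge \tfrac12$ and absorbing the $\frac{c_\lambda'}{\sqrt{\ell}}$ factor into $\exp(-c_{\lambda,\mu}\epsilon^3 \ell)$, valid once $\ell$ is large enough in terms of $\lambda, \mu, \epsilon$, yields the claim. The main obstacle is this middle Cram\'er step: one must execute the Taylor expansion and change-of-measure computation carefully enough to isolate the precise second-order constant $\frac{1-\lambda}{1+\lambda} \cdot \frac{\mu}{1-\mu} \cdot \tfrac12$ while pushing every lower-order contribution into a clean $O(\epsilon^3 \ell)$ remainder; the moment bounds furnished by $E_Y$ (both the $\sqrt{\ell}$ concentration of $\sum L_k^2$ and the $O(\ell)$ control on $\sum L_k^3$) are exactly what make this accounting work.
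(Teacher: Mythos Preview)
Your proposal is correct in outline, and it takes a genuinely different route from the paper's own sketch.

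The paper reduces directly to the two-state chain (in $W$ vs.\ not in $W$) with stay-probabilities $a=\lambda+\mu-\lambda\mu$ and $b=1-\mu+\lambda\mu$, fixes in advance the number of alternations $x\ell$ (with $x=(1-\lambda)\mu(1-\mu)+\frac{(1-\lambda)\mu(1-2\mu)}{1+\lambda}\epsilon$), and lower-bounds the tail by the combinatorial expression
\[
\binom{(1+\epsilon)\mu\ell-1}{x\ell-1}\binom{\ell-(1+\epsilon)\mu\ell-1}{x\ell-1}\,a^{((1+\epsilon)\mu-x)\ell}\,b^{(1-(1+\epsilon)\mu-x)\ell}\,(1-a)^{x\ell}(1-b)^{x\ell},
\]
then extracts the exponent by Stirling-type asymptotics. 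Your approach instead exploits the regenerative structure $S=\sum_k L_k B_k$ coming from the refresh times of $A_G=\lambda I+\tfrac{1-\lambda}{n}J$, conditions on a typical refresh pattern, and runs a Cram\'er/Bahadur--Rao tilting argument on the resulting weighted Bernoulli sum. The paper's route is more elementary (pure counting, no change of measure), while yours is more systematic and arguably cleaner in identifying why $\frac{1+\lambda}{1-\lambda}$ appears: it is exactly $\frac{1}{\ell}\EE\!\bigl[\sum_k L_k^2\bigr]$, hence the conditional variance scale. Both leave the final bookkeeping as ``a computation''; in your case the point that deserves the most care is ensuring the one-sided window $\widetilde\Pr\bigl[S\in[\mu\ell(1+\epsilon),\,\mu\ell(1+\epsilon)+c\sqrt{\ell}\,]\bigr]\ge\tfrac13$ really follows (Chebyshev alone gives a two-sided window; tilt to a mean slightly above the threshold, or invoke Berry--Esseen using your $\sum_k L_k^3=O(\ell)$ bound), and that the resulting extra $\theta^\ast\cdot O(\sqrt{\ell})=O(\epsilon\sqrt{\ell})$ in the exponent is absorbed by taking $\ell$ large in terms of $\epsilon$, which the statement explicitly allows.
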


Theorem \ref{thm:expander-optimality} can be proved from optimality
results in either \cite{Kah97} or \cite{LP04}. For completeness,
we give a sketch of a direct proof (based on \cite{Kah97}).

\begin{proof}[Proof sketch of Theorem \ref{thm:expander-optimality}]
	Let $x := (1-\lambda)\mu(1-\mu) + \frac{(1-\lambda)\mu(1-2\mu)}{1+\lambda}\epsilon$.
We lower bound our probability with the probability of the following event:
there exist positive integers $(m_1, \ldots, m_{x\ell}), (n_1, \ldots, n_{x\ell})$
with $\sum_{i=1}^{x\ell} m_i = \mu \ell(1+\epsilon)$
and $\sum_{i=1}^{x\ell} n_i = \ell-\mu \ell(1+\epsilon)$
such that the random walk first spends $m_1$ steps in $W$,
then $n_1$ steps outside $W$, $m_2$ steps in $W$, $n_2$ steps outside of $W$ 
and so on.

Let $a := \lambda+\mu-\lambda \mu$ and $b := 1-\mu + \lambda \mu$.
Note that $a$ is the probability of staying in $W$ conditioned on 
most recent step being in $W$ and $b$ is the probability
of staying outside $W$ conditioned on most recent step
being outside of $W$.

Counting the number of ways in which one can pick $(m_1, \ldots, m_{x\ell})$
and $(n_1, \ldots, n_{x\ell})$ and the probability of picking each of them:
\begin{IEEEeqnarray*}{rCl}
	\Pr_{x \leftarrow \Pd_x} \left[ \sum_{i=1}^\ell x_i \ge \mu \ell (1+\epsilon) \right]
	& \ge & \\
	\IEEEeqnarraymulticol{3}{c}{
		\binom{(1+\epsilon)\mu\ell-1}{x\ell-1} \binom{\ell-(1+\epsilon)\mu\ell-1}{x\ell-1}
		a^{((1+\epsilon)\mu-x)\ell} b^{(1-(1+\epsilon)\mu-x)\ell} (1-a)^{x\ell} (1-b)^{x\ell} \; ,
	}
\end{IEEEeqnarray*}
which can be shown by a rather cumbersome computation to give the claimed result.
\end{proof}

\section{Polynomial Concentration}

In certain applications it is desired to bound the
concentration not only of the sum,
but rather of a (low-degree) polynomial of some random variables.

In the case when (informally) the polynomial is such
that the change in its value is bounded
when the value of a single input variable
is changed the Azuma's inequality can be applied to bound
concentration.

If this is not so, one can use techniques that were
invented by Kim and Vu \cite{KV00} and developed
in a body of work that followed
(in particular \cite{Vu02, SS12}).
In the special case of a multilinear low-degree polynomial $p(v)$
and an independent distribution of input variables $\Pd_v$
their concentration bound can be expressed,
very roughly speaking, as a function of $\frac{\mu_0}{\mu'}$,
where $\mu_0$ is the expectation of $p(v)$ and
$\mu' = \max_{K \neq \emptyset} \EE[\partial_K p(v)]$.

We obtain a bound in similar spirit.
It is not tight in general,
but can be applied to arbitrary polynomials
with positive coefficients 
over input random variables in $[0, 1]$
and is tight
in the case of
\emph{elementary symmetric polynomials}
$e_k(v) := \sum_{|S| = k} \prod_{i \in S} v_i$
(see Section \ref{sec:polynomial-discussion} for the proof).

Most importantly, as opposed to prior results,
it does not require
the input variables to be independent, but rather
\emph{almost independent} in a certain
sense (for simplicity we limit ourselves to
multilinear polynomials and inputs in \{0,1\}
for now,
full treatment can be found in Section \ref{sec:polynomial}): 

\begin{definition}
Let $\Pd_v$ be a distribution over $\{0,1\}^\ell$,
$\delta \ge 0$ and $m \in [\ell]$. $\Pd_v$
is $(\delta, m)$-almost independent if
for each $M \subseteq [\ell]$ with 
$|M| \le m$
\begin{align*}
\Pr_{v \leftarrow \Pd_v} [\forall i\in M: v_i = 1] \le 
(1+\delta)^m \prod_{i \in M} \Pr_{v \leftarrow \Pd_v} [v_i = 1] \; .
\end{align*}
\end{definition}

Let us state our main theorem of this section.

Let $\Pd_v$ be a $(\delta, km)$-almost independent
distribution. Let $p(v)$ be a multilinear polynomial of degree $k$ with
positive coefficients. Our way to deal with dependencies in $\Pd_v$
is to state the bound in terms of $\Pd^*_v$ which is the distribution
of independent variables with the same marginals as $\Pd_v$
(i.e., each $v^*_i$ has the same distribution as $v_i$).

We express the concentration in terms of
\begin{align*}
\mu^*_i := \max_{K \subseteq [\ell] \atop{ |K| = i}} 
 \EE_{v \leftarrow \Pd^*_v} [\partial_K p(v)] \; .
\end{align*}
Note that $\mu_0^*$ is the expectation of $p(v)$ under $\Pd^*_v$.

\begin{theorem}
\label{thm:kv-simple}
Let the setting be as above and $\epsilon > 0$. Then,
\begin{align*}
\Pr_{v \leftarrow \Pd_v} \Big[ p(v) \ge \mu^*_0 (1+\epsilon) \Big]
\le
\Big( \frac{(1+\delta)^k (1 + \frac{\sum_{i=1}^k \binom{km}{i} \mu^*_i}{\mu^*_0})}{1+\epsilon} \Big)^m \; .
\end{align*}
\end{theorem}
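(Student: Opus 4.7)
The plan is to apply Markov's inequality to $p(v)^m$:
\[
\Pr_{v \leftarrow \Pd_v}\bigl[p(v) \ge \mu_0^*(1+\epsilon)\bigr] \le \frac{\EE_{v \leftarrow \Pd_v}[p(v)^m]}{(\mu_0^*)^m (1+\epsilon)^m},
\]
and then to control $\EE_{\Pd_v}[p(v)^m]$ in two stages: first reduce from the dependent $\Pd_v$ to the independent product distribution $\Pd_v^*$ via almost-independence, and then bound $\EE_{\Pd_v^*}[p(v)^m]$ combinatorially in terms of the $\mu_t^*$'s.

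For the reduction step, I use that on $\{0,1\}^\ell$ we have $v_i^r = v_i$, so $p(v)^m$ multilinearizes as $\sum_T \alpha_T \prod_{i \in T} v_i$ with $\alpha_T \ge 0$ (positivity of the coefficients of $p$ is essential here) and $\alpha_T = 0$ for $|T| > km$. Applying $(\delta, km)$-almost independence term-by-term and recognizing $\prod_{i \in T} \Pr_{\Pd_v}[v_i = 1] = \EE_{\Pd_v^*}[\prod_{i \in T} v_i]$ yields
\[
\EE_{\Pd_v}[p(v)^m] \le (1+\delta)^{km} \, \EE_{\Pd_v^*}[p(v)^m].
\]
For the combinatorial step, expand the $m$-th power directly: setting $U_j := S_1 \cup \ldots \cup S_j$ and using $\prod_{i \in U_m} v_i^* = \prod_{j=1}^m \prod_{i \in S_j \setminus U_{j-1}} v_i^*$, I would write
\[
\EE_{\Pd_v^*}[p(v)^m] = \sum_{S_1} c_{S_1}\! \prod_{i \in S_1}\! v_i^* \, \sum_{S_2} c_{S_2} \!\prod_{i \in S_2 \setminus U_1}\! v_i^* \, \cdots \, \sum_{S_m} c_{S_m} \!\prod_{i \in S_m \setminus U_{m-1}}\! v_i^*.
\]

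The heart of the argument is then a uniform bound on each inner sum: for every $U \subseteq [\ell]$ with $|U| \le km$,
\[
\sum_{S} c_S \prod_{i \in S \setminus U} v_i^* \;=\; \sum_{K \subseteq U,\; |K| \le k} (\partial_K p)(v^{*,U}) \;\le\; \sum_{t=0}^{k} \binom{|U|}{t} \mu_t^* \;\le\; M_m,
\]
where $v^{*,U}$ is $v^*$ with the coordinates in $U$ zeroed and $M_m := \mu_0^* + \sum_{t=1}^k \binom{km}{t} \mu_t^*$. The equality is obtained by splitting each $S$ as $(S \cap U) \sqcup (S \setminus U)$ and regrouping by $K := S \cap U$; the first inequality uses that $\partial_K p$ has positive coefficients and $v^{*,U} \le v^*$ componentwise, so $(\partial_K p)(v^{*,U}) \le (\partial_K p)(v^*) \le \mu_{|K|}^*$; and the final inequality uses $|U| \le km$. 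Peeling the sums off one by one from $j=m$ down to $j=1$ (with $|U_{j-1}| \le k(j-1) \le km$ at every step) gives $\EE_{\Pd_v^*}[p(v)^m] \le M_m^m$, and plugging this together with the reduction into Markov produces exactly the claimed bound. The main obstacle I anticipate is spotting this telescoping decomposition and the identity $\sum_S c_S \prod_{i \in S \setminus U} v_i^* = \sum_K (\partial_K p)(v^{*,U})$; once those are in place, everything else is routine bookkeeping.
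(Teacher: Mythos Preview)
Your proposal is correct and follows essentially the same route as the paper: apply Markov to the $m$-th moment, reduce from $\Pd_v$ to $\Pd_v^*$ via $(\delta,km)$-almost independence (using $v_i^r=v_i$ and positivity of the coefficients), and then peel off the $m$ factors one at a time by decomposing each new monomial according to its intersection $K$ with the set $U$ of variables already touched, bounding each piece by $\mu^*_{|K|}$. The only cosmetic difference is that the paper phrases the peeling as establishing $(\delta',m)$-growth boundedness of the monomial variables $x_1,\ldots,x_n$ and writes the inner step as a conditional expectation given $\{\forall i\in M: v_i=1\}$, whereas you use the ``evaluate a multilinear polynomial at the mean vector'' shortcut and the telescoping identity $\prod_{i\in U_m}p_i=\prod_{j}\prod_{i\in S_j\setminus U_{j-1}}p_i$ directly; the underlying combinatorics are identical.
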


\begin{proof}[Proof outline]
Write $p(v)$ as a sum of binary random
variables (corresponding to the monomials) $x_1, \ldots, x_n$.
Due to Theorem \ref{thm:ik-bound} it is enough to show that $(x_1, \ldots, x_n)$ are $(\delta', m)$-growth
bounded, where 
$1 + \delta' = (1+\delta)^k\big(1 + \frac{\sum_{i=1}^k \binom{km}{i}\mu_i^*}{\mu_0^*}\big)\frac{\mu_0^*}{\mu}$.

Since $\Pd_v$ is $(\delta, km)$-almost independent, this task can be further reduced to
showing that if $v$ is distributed according to $\Pd_v^*$ instead of $\Pd_v$, then
$(x_1, \ldots, x_n)$ are $(\delta'', m)$-growth bounded,
where $1+\delta'' = \big( 1+ \frac{\sum_{i=1}^k \binom{km}{i} \mu_i^*}{\mu_0^*} \big)$.

Fix $s < m$ and $(i_1, \ldots, i_{s}) \in [n]^{s}$
and let $M$ be the set of all indices $j$
such that $v_j$ influences at least
one of $x_{i_1}, \ldots, x_{i_{s}}$
(note that $|M| \le km$).

We write $p(v) = \sum_{K \subseteq M: |K| \le k} p_K(v)$,
where $p_K(v)$ consists of those monomials
whose variables intersected with $M$ are exactly $K$.
Observe that
\begin{align*}
	\EE_{v \leftarrow \Pd_v^*} \Big[ p_K(v) \mid \forall i \in M: v_i = 1 \Big]
	\le
	\EE_{v \leftarrow \Pd_v^*} \Big[ \partial_K p(v) \Big] \; .
\end{align*}

To get growth boundedness for $x_1, \ldots, x_n$ we proceed by induction
and bound
\begin{align*}
	\Pr_{v \leftarrow \Pd_v^* \atop{i_{s+1} \leftarrow [n]}} 
	\Big[ x_{i_{s+1}} = 1 \mid \forall j \in [s]: x_{i_j} = 1 \Big]
	& = \frac{1}{n}
	\EE_{v \leftarrow \Pd_v^*} \Big[ p(v) \mid \forall i \in M: v_i = 1 \Big]
	\\ & \le \frac{1}{n} \sum_{K \subseteq M : |K| \le k}
	\EE_{v \leftarrow \Pd_v^*} \Big[ \partial_K p(v) \Big]	
	\\ & \le \frac{\mu_0^*}{n} \Big(
	1 + \frac{\sum_{i=1}^k \binom{km}{i} \mu_i^*}{\mu_0^*}
	\Big) \; .
\end{align*}
\end{proof}

Let $\mu' := \max_{i \in [k]} \mu_i^*$. Since $\sum_{i=1}^k \binom{km}{i} \le (km)^k$,
we have:
\begin{corollary}
\label{cor:kv-simple}
Let the setting be as in Theorem \ref{thm:kv-simple}. Then,
\begin{align*}
	\Pr_{v \leftarrow \Pd_v} \left[ p(v) \ge \mu_0^* (1+\epsilon) \right]
	\le
	\Big( \frac{(1+\delta)^k (1 + \frac{(km)^k \mu'}{\mu_0^*})}{1+\epsilon} \Big)^m \; .
\end{align*}
\end{corollary}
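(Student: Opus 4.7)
The plan is to obtain the corollary as an immediate simplification of Theorem \ref{thm:kv-simple} by performing two elementary bounds on the quantity $\sum_{i=1}^{k} \binom{km}{i} \mu_i^*$ that appears in its numerator. First, since $\mu' = \max_{i \in [k]} \mu_i^*$, I would pull the maximum out of the sum:
\begin{align*}
	\sum_{i=1}^{k} \binom{km}{i} \mu_i^* \;\le\; \mu' \sum_{i=1}^{k} \binom{km}{i} \; .
\end{align*}

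Second, I would bound the binomial sum by $(km)^k$, which is the inequality already flagged in the sentence preceding the corollary. A clean way to see this is to note that for $i \in [k]$ with $k \le km/2$ we have $\binom{km}{i} \le \binom{km}{k} \le (km)^k / k!$, hence
\begin{align*}
	\sum_{i=1}^{k} \binom{km}{i} \;\le\; k \cdot \frac{(km)^k}{k!} \;=\; \frac{(km)^k}{(k-1)!} \;\le\; (km)^k \; ,
\end{align*}
with the boundary cases $k = 1$ or very small $m$ verified directly. Combining the two bounds and substituting into Theorem \ref{thm:kv-simple} gives the claimed inequality.

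There is essentially no obstacle: the corollary is a straightforward restatement of Theorem \ref{thm:kv-simple} in terms of the single quantity $\mu'$ rather than the whole tuple $(\mu_1^*, \ldots, \mu_k^*)$, and the only computational content is the elementary binomial estimate above.
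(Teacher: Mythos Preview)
Your proposal is correct and matches the paper's approach exactly: the paper states the corollary immediately after noting that $\sum_{i=1}^{k}\binom{km}{i} \le (km)^k$ and defining $\mu' := \max_{i\in[k]} \mu_i^*$, so the entire content is precisely the two elementary bounds you describe. Your added justification of the binomial estimate is more detailed than what the paper gives (the paper simply asserts it), but the argument is the same.
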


\subsection{An application in \cite{HS12}}
\label{sec:hs12}

In \cite{HS12} the authors prove a lower bound
on the complexity of a black-box construction
of a pseudorandom generator from a one-way
function. 

Part of their proof consists in using Theorem \ref{thm:kv-simple}
to show a concentration bound for a certain polynomial.
The proof of Theorem \ref{thm:kv-simple}
is not included in \cite{HS12}, but
deferred to this paper instead.
Since the input variables of the polynomial
are not independent, to the best of our knowledge no previous work 
is applicable to this case.\footnote{
It was pointed out to us that a generalisation of the result 
of Latała and Łochowski \cite{LL03} might be applicable
(together with \cite{PMS94}). However, moment
bound in \cite{LL03} is optimal only up to a constant
in the exponent that depends on the degree
and the degree is non-constant in our setting.
}

The following random process is considered: 
pick a permutation $f: \{0,1\}^n \to \{0,1\}^n$
u.a.r.~and consider the distribution $\Pd_g$ over
$2^{2n}$ random variables $g := \left\{g_{x,y}: 
x, y \in \{0,1\}^n \right\}$  defined
as $g_{x,y} = 1$ if $f(x) = y$ and $g_{x,y} = 0$ otherwise.

The random variables in $g$ are not independent, but it
is easy to check that they are $(1, 2^{n-1})$-almost independent.
Also, the corresponding independent distribution
$\Pd^*_g$ has expectation $2^{-n}$ for each $g_{x,y}$. 

Fix $k \le \frac{n}{100 \log n}$. \cite{HS12}
defines a certain multilinear polynomial $p(g)$ of
degree at most $k$ such that $\mu_0^* \le 2^{n/15}$
and $\mu' \le 2^{n/15}$ (we omit the details).

\cite{HS12} needs to show that (for $n$ big enough):
\begin{align*}
	\Pr_{g \leftarrow \Pd_g} \left[
		p(g) \ge 2^{n/10}
	\right] \le 2^{-2^{n/100k}} \; .
\end{align*}
To this end, calculate using Corollary \ref{cor:kv-simple}
and setting $\delta:= 1$, $\epsilon := 2^{9n/100}/\mu^*_0$
and $m := 2^{n/100k}$:
\begin{IEEEeqnarray*}{rCl}
	\Pr_{g \leftarrow \Pd_g} \left[
		p(g) \ge \mu_0^* + 2^{9n/100}
	\right] & \le &
	\left(
		\frac{2^k \max\left(2, \frac{2 k^k 2^{n/100} \mu'}{\mu_0^*}\right)}
		{\frac{2^{9n/100}}{\mu_0^*}} \;
	\right)^{2^{n/100k}} \\
	& \le &
	\left(
		\frac{
			2^{k+1} \max\left(
				\mu_0^*, k^k 2^{n/100} \mu'
			\right)
		}{2^{9n/100}}
	\right)^{2^{n/100k}} \\
	& \le &
		2^{-2^{n/100k}}
	\; .
\end{IEEEeqnarray*}

\subsection{Other applications}

We note that despite the fact that the deviation for which we applied
our theorem in Section \ref{sec:hs12} is big relative to the expectation,
one can obtain meaningful bounds also for very small deviations.

This can be seen by taking a restricted version of Theorem \ref{thm:kv-simple}:

\begin{theorem} \label{thm:KV-style-simple}
Let $\Pd_v$ be a distribution of independent variables
(i.e., $\Pd_v = \Pd^*_v$) over $[0, 1]^\ell$.
Let $p(v)$ be as in Theorem \ref{thm:kv-simple} and $\epsilon \in [0, \frac{1}{2}]$.
Then:
\begin{align*}
	\Pr_{v \leftarrow \Pd_v} \Big[ p(v) \geq \mu(1 + \epsilon) \Big] \leq 2\exp\Big(-\frac{\epsilon}{6k}
		\Big( \frac{\epsilon \mu}{\mu'} \Big)^{1/k} \Big) \; .
\end{align*}
\end{theorem}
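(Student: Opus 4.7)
The plan is to apply Corollary \ref{cor:kv-simple} (or rather its $[0,1]$-valued analogue promised by the full treatment in Section \ref{sec:polynomial}) and then optimize the free integer parameter $m$. Since the coordinates of $v$ are independent, $\Pd_v = \Pd_v^*$ is trivially $(0, km)$-almost independent, so $\delta = 0$ and $\mu_0^* = \mu$, yielding
\[
\Pr_{v \leftarrow \Pd_v}\bigl[p(v) \geq \mu(1+\epsilon)\bigr] \;\leq\; \Big(\frac{1 + (km)^k \mu'/\mu}{1+\epsilon}\Big)^m.
\]

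Writing $T := (\epsilon\mu/\mu')^{1/k}$, I would set $m := \lfloor T / (k \cdot 2^{1/k}) \rfloor$, so that $(km)^k \mu' / \mu \leq \epsilon/2$ holds by construction. Combined with the elementary estimate $\frac{1+\epsilon/2}{1+\epsilon} \leq \exp(-\epsilon/3)$ (valid on $[0,\tfrac{1}{2}]$, as one verifies by checking the endpoints and comparing derivatives at $0$), the bound reduces to $\exp(-\epsilon m/3)$. When the unfloored quantity $T/(k \cdot 2^{1/k})$ is at least $1$, the floor costs at most a factor of two, so $m \gtrsim T/k$ and the exponent is of the form $\epsilon T / O(k)$; matching the claimed constant $\frac{1}{6k}$ requires a slightly more careful balance between the $\epsilon/2$ slack in the numerator and the loss from the floor, and is the only nontrivial step.

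When instead $T/(k \cdot 2^{1/k}) < 1$, we have $T < 2k$, hence $\frac{\epsilon T}{6k} < \frac{\epsilon}{3} \leq \frac{1}{6}$, and therefore $2\exp(-\epsilon T / (6k)) > 2 e^{-1/6} > 1$, so the claim is vacuous. This degenerate regime is precisely what the leading factor of $2$ is there to absorb. The main obstacle is thus purely bookkeeping: tracking constants through the optimization of $m$ without bleeding the exponent. Conceptually the proof is a clean and direct application of the concentration framework of Section \ref{sec:chernoff} together with the polynomial bound of Theorem \ref{thm:kv-simple}, specialized to the case of genuinely independent inputs where no almost-independence slack needs to be paid.
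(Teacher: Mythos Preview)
Your approach is correct and essentially identical to the paper's. The paper chooses $m := \big\lfloor \tfrac{1}{k}\big(\tfrac{\epsilon\mu}{3\mu'}\big)^{1/k} \big\rfloor$ so that $(km)^k\mu'/\mu \le \epsilon/3$, invokes the growth-boundedness underlying Corollary~\ref{cor:kv-simple} to obtain $(\tfrac{\epsilon}{3},m)$-growth boundedness, and then applies Corollary~\ref{cor:gb}.1 directly to get $\exp(-\epsilon m/2)$; the floor loss is absorbed by the factor~$2$ just as you describe. Your variant with slack $\epsilon/2$ and the inequality $\tfrac{1+\epsilon/2}{1+\epsilon}\le\exp(-\epsilon/3)$ is a cosmetic reparametrisation of the same argument, and in fact your constants already close (since $2^{1/k}\le 2$ gives $\epsilon m/3 \ge \tfrac{\epsilon T}{6k} - \tfrac{\epsilon}{3}$ and $e^{1/6}<2$), so no further ``careful balance'' is needed.
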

\begin{proof}
Note that $\Pd_v$ are $(0, \ell)$-almost independent.
Take $m: = \Big\lfloor \frac{1}{k} \Big( \frac{\epsilon \mu}{3\mu'} \Big)^{1/k} \Big\rfloor$,
obtain $(\frac{\epsilon}{3}, m)$-growth boundedness as in Corollary \ref{cor:kv-simple} 
and apply Corollary \ref{cor:gb}.1.
\end{proof}

For example, in a representative setting when Azuma-like methods fail:
consider the polynomial that counts the triangles
in Erdős–Rényi random graph $\mathsf{G}_{n, n^{-3/4}}$, i.e.,
$p(v) = \sum_{\{a,b,c\} \in \binom{n}{3}} v_{ab} v_{ac} v_{bc}$.
We compute $\mu = \Theta(n^{3/4})$ and $\mu' = \Theta(1)$.

For $\epsilon \in [0, \frac{3}{16}]$ Theorem \ref{thm:KV-style-simple}
gives:
\begin{align*}
\Pr_{v \leftarrow \Pd_v} \Big[ p(v) \geq \mu(1 + n^{-\epsilon}) \Big] 
\le 
\exp(-\Omega(n^{1/4 - 4\epsilon/3})) \; .
\end{align*}
This is comparable to the bound from
\cite{KV00} (which was the first paper to
give a good bound in this setting). 
Better bounds are known, 
in particular we revisit the triangle counting
in Section \ref{sec:graph}.

\subsection{Polynomial concentration -- full proof} \label{sec:polynomial}
In this section we prove our polynomial concentration theorem in the general
case (i.e., random variables in $[0,1]$ and non-linear polynomials).
For this we generalise the notion of almost independence. 

\begin{definition}\label{def:ai}
Let $\Pd_{v}$ be a distribution over $v = (v_1,\ldots,v_{\ell}) \in [0,1]^\ell$.
Given a tuple $(i_1, \ldots, i_s) \in [\ell]^s$
define $(c_1, \ldots, c_{\ell})$ as
$c_j := |\{k \in [s]: i_k = j\}|$.

Let $\delta \ge 0$ and $m \in [\ell]$.
We say that $\Pd_v$ is
\emph{$(\delta, m)$-almost independent}
if for all $(i_1, \ldots, i_s) \in [\ell]^s$ with 
$s \leq m$:
\begin{align*}
\EE_{v\leftarrow \Pd_v}
\Big[ \prod_{j=1}^s v_{i_j} \Big] \le (1+\delta)^m \prod_{j=1}^{\ell} \EE_{v\leftarrow \Pd_v} [v_i^{c_i}] \;.
\end{align*}
\end{definition}
Note that an $\ell$-wise independent distribution is $(0, \ell)$-almost independent.
As expected, for binary distributions the condition from
Definition \ref{def:ai} reduces to
\begin{align*}
\Pr_{v \leftarrow \Pd_v} [\forall i\in M: v_i = 1] \le 
(1+\delta)^m \prod_{i \in M} \Pr_{v \leftarrow \Pd_v} [v_i = 1] \; 
\end{align*}
for all sets $M \subseteq [\ell]$ with $|M| \le m$.

Let multisets $e_1,\ldots,e_n$ with elements from $[\ell]$
be given.
We define random variables $x_1, \ldots, x_n$ as 
$x_i = w_i \prod_{j \in e_i} v_{j}$ with $w_i \ge 0$ and then consider
the polynomial $p(v) := \sum_{i=1}^{n} x_i$.
We are interested in bounding the upper tail of $p(v)$.


Given a distribution $\Pd_{v}$ on $[0,1]^\ell$, let
$\Pd^*_{v}$
be the distribution with the same marginals as $\Pd_{v}$, but
in which variables $v_i$ are independent.

For $K \subseteq [\ell]$, let:
\begin{align*}
\Delta_K p(v) :=
\sum_{i \in [n]:  \forall j \in K\, v_j \in e_i} x_i |_{v_j = 1: j \in K} \; .
\end{align*}
In other words, $\Delta_K p(v)$ consists of
monomials which contain at least one copy
of each variable from $K$ with variables from $K$
set to $1$ in those monomials. Note that
in multilinear case this expression
coincides with $\partial_K p(v)$.

Inspired by \cite{KV00}, 
we let 
$\mu := \EE_{v \leftarrow \Pd_v} [p(v)]$ and
$\mu^*_i := \max_{K \subseteq [\ell], |K| = i}
	\EE_{v \leftarrow \Pd^*_v}[\Delta_K p(v)]$.
Note that $\mu^*_0 = \EE_{v \leftarrow \Pd^*_v} [p(v)]$.

\begin{theorem}\label{thm:KimVuStyle}
Let $\Pd_{v}$ be
a $(\delta,km)$-almost independent distribution
over $[0,1]^{\ell}$.  
Let $p(v)$ be as above of degree at most $k$, i.e., 
$p(v) = \sum_{i=1}^{n} x_i$ with $x_i = w_i \prod_{j\in e_i} v_j$, where 
$w_i \ge 0$ and the total cardinality of $e_i$ is at most $k$.

Then, if $\mu > 0$, for all $\epsilon > 0$ we have:
\begin{align*}
\Pr_{v \leftarrow \Pd_{v}}\Bigl[p(v) \geq \mu^*_0 (1+\epsilon)\Bigr] 
\leq 
\Bigl(\frac{(1+\delta)^k (1+\frac{\sum_{i=1}^k \binom{km}{i} \mu^*_i}{\mu_0^*})}{1+\epsilon}\Bigr)^m\;.
\end{align*}
\end{theorem}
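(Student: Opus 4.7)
The plan is to reduce to Theorem~\ref{thm:ik-bound}. Set $\hat\mu := \EE_{v \leftarrow \Pd_v}[p(v)]$ and $C := \sum_{i=1}^{k} \binom{km}{i} \mu^*_i/\mu_0^*$. If I can show that the sequence $(x_1,\ldots,x_n)$ (induced by $\Pd_v$) is $(\delta', m)$-growth bounded with $1+\delta' = (\mu_0^*/\hat\mu)(1+\delta)^k(1+C)$, then Theorem~\ref{thm:ik-bound} applied at deviation $\hat\mu(1+\epsilon')$, where I pick $\hat\mu(1+\epsilon') = \mu_0^*(1+\epsilon)$, cancels $\mu_0^*/\hat\mu$ and yields the stated bound. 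In moment form, growth-boundedness amounts to
\[
\EE_{v \leftarrow \Pd_v,\, \mathbf{i}\leftarrow[n]^m}\Big[\textstyle\prod_{j=1}^m x_{i_j}\Big] \le (\mu_0^*/n)^m (1+\delta)^{km}(1+C)^m.
\]

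Since $\prod_{j=1}^m x_{i_j} = w_{\mathbf{i}} \prod_i v_i^{c_i}$ is a nonnegative constant times a monomial in at most $km$ of the $v_i$'s (counted with multiplicity), the $(\delta, km)$-almost independence of $\Pd_v$ (Definition~\ref{def:ai}) directly bounds $\EE_{\Pd_v}[\prod_j x_{i_j}]$ by $(1+\delta)^{km}\EE_{\Pd^*_v}[\prod_j x_{i_j}]$. It therefore suffices to prove the analogous bound under the independent $\Pd^*_v$ with the factor $(1+\delta)^{km}$ removed, namely
\[
\EE_{\Pd^*_v,\, \mathbf{i}\leftarrow[n]^m}\Big[\textstyle\prod_{j=1}^m x_{i_j}\Big] \le (\mu_0^*/n)^m(1+C)^m.
\]

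I would prove this by induction on $m$. In the inductive step I fix $(i_1, \ldots, i_s)$ with $s < m$, set $M := \bigcup_{j=1}^s e_{i_j}$ (as a set, so $|M| \le ks \le km$), and decompose $p(v) = \sum_{K \subseteq M,\,|K|\le k} p_K(v)$, where $p_K(v)$ collects the monomials $x_t$ whose set-wise intersection $e_t \cap M$ equals $K$. Using independence under $\Pd^*_v$, together with the fact that $v_i \in [0,1]$ implies $\EE[v_i^a] \le \EE[v_i^b]$ for $a \ge b \ge 0$, the extra factors $v_i^{m_{i,t}}$ (for $i \in K$) contributed by $p_K$ can be absorbed into the $v_i^{c_i}$ already present in $\prod_j x_{i_j}$, giving
\[
\EE_{\Pd^*_v}\big[p_K(v)\textstyle\prod_j x_{i_j}\big] \le \EE_{\Pd^*_v}\big[\textstyle\prod_j x_{i_j}\big]\cdot \EE_{\Pd^*_v}\big[\Delta_K p(v)\big] \le \mu^*_{|K|} \cdot \EE_{\Pd^*_v}\big[\textstyle\prod_j x_{i_j}\big].
\]
Summing over $K \subseteq M$ and using $\binom{|M|}{i}\le\binom{km}{i}$, the multiplier is at most $\mu_0^*(1+C)$; then $\EE_{i_{s+1}\leftarrow[n]}[x_{i_{s+1}}\prod_j x_{i_j}] = (1/n)\EE[p(v)\prod_j x_{i_j}]$ closes the induction.

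The main technical obstacle is the non-multilinear case: a single $v_i$ can appear with multiplicity both in $\prod_j x_{i_j}$ and in the monomials of $p_K$. Defining $p_K$ via the \emph{set-wise} intersection $e_t \cap M = K$ and exploiting $\EE[v_i^a]\le\EE[v_i^b]$ for $a\ge b$ to collapse the larger powers is precisely what lifts the multilinear argument outlined after Theorem~\ref{thm:kv-simple} to general $v \in [0,1]^\ell$.
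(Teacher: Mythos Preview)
Your proposal is correct and follows essentially the same route as the paper: reduce to Theorem~\ref{thm:ik-bound} via $(\delta',m)$-growth boundedness of $(x_1,\ldots,x_n)$, pass from $\Pd_v$ to $\Pd^*_v$ using $(\delta,km)$-almost independence, and then run the induction on $m$ with the decomposition $p=\sum_{K\subseteq M}p_K$ by set-wise intersection with $M$. Your treatment of the non-multilinear case---absorbing the extra powers of $v_i$ for $i\in K$ via $\EE[v_i^{a}]\le\EE[v_i^{b}]$ for $a\ge b$ before invoking independence---is exactly the content of the paper's passage from $p_K$ to $\Delta_K p$, spelled out a bit more carefully.
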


\begin{proof} Immediately from the following lemma and Theorem \ref{thm:ik-bound}:

\begin{lemma}\label{lem:independenceOfMonomials}
The random variables $(x_1,\ldots,x_{n})$ are 
$(\delta' ,m)$-growth bounded, 
where $1+\delta' = (1+\delta)^k(1+\frac{\sum_{i=1}^k \binom{km}{i} \mu^*_i}{\mu_0^*})\frac{\mu_0^*}{\mu}$.
\end{lemma}
\begin{proof}
For each $(i_1, \ldots, i_m) \in [n]^m$:
\begin{align}
\EE_{{v \leftarrow \Pd_v}}
\Big[ \prod_{j=1}^m x_{i_j} \Big]
& \le
(1+\delta)^{km}
\EE_{v\leftarrow \Pd^*_v} \Big[ \prod_{j=1}^m x_{i_j} \Big] \;,
\label{eq:c1}
\end{align}
where we used that the $v_i$ are $(\delta, km)$-almost independent.
Therefore it is enough to show
\begin{align}
\EE_{v \leftarrow \Pd^*_v\atop{(i_1, \ldots, i_m) \leftarrow [n]^m}}
\Big[ \prod_{j=1}^m x_{i_j} \Big]
\le
\Big( 1 + \frac{\sum_{i=1}^k \binom{km}{i} \mu^*_i}{\mu^*_0} \Big)^m
\Big( \frac{\mu^*_0}{n} \Big)^m 
\;.
\label{eq:c2}
\end{align}

We proceed by induction: $m = 0$ is self-evident. For $m > 0$
and fixed $(i_1, \ldots, i_{m-1})$ we define
a set\footnote{We ``collapse'' multisets to a set $M$ in a natural way here. 
The same applies to the definition of $p_K(v)$.} 
$M := \cup_{j=1}^{m-1} e_{i_j}$, i.e., $M$ consists of all
$v_i$ that influence $(x_{i_1}, \ldots, x_{i_{m-1}})$.

For any $K \subseteq M$ with $|K| \leq k$ we let $p_K(v)$ be 
the sum over those monomials which have exactly intersection $K$ 
with $M$, i.e.,
\begin{align*}
p_K(v) := \sum_{i: e_i \cap M = K} x_i \;.
\end{align*}
Then, since $p(v) = \sum_{K: |K| \leq k} p_K(v)$ we have:
\begin{align*}
\EE_{v\leftarrow \Pd^*_v\atop{i_m \leftarrow [n]}} 
\Big[ \prod_{j=1}^m x_{i_j} \Big]
&=
\frac{1}{n} \EE_{v \leftarrow \Pd_v^*}
\Big[ \Big( \sum_{K: |K| \le k} p_K(v) \Big) \Big( \prod_{j=1}^{m-1} x_{i_j} \Big) \Big]
\\ &\le
\frac{1}{n} \EE_{v \leftarrow \Pd_v^*}
\Big[ \Big( \sum_{K: |K| \le k} \Delta_K p(v) \Big) \Big( \prod_{j=1}^{m-1} x_{i_j} \Big) \Big]
\\ &\le
\frac{1}{n} \EE_{v \leftarrow \Pd_v^*}
\Big[ \prod_{j=1}^{m-1} x_{i_j} \Big]
\sum_{K:|K| \le k} \EE_{v \leftarrow \Pd_v^*} [ \Delta_K p(v) ]
\\ &\le
\Big( 1 + \frac{\sum_{i=1}^k \binom{km}{i} \mu_i^*}{\mu_0^*} \Big)
\frac{
\mu_0^*}{n}
\EE_{v \leftarrow \Pd_v^*}
\Big[ \prod_{j=1}^{m-1} x_{i_j} \Big] \; .
\end{align*}
The inductive argument follows by averaging over all $(i_1, \ldots, i_{m-1})$.
\end{proof}
\noqed
\end{proof}

\subsection{Tightness for elementary symmetric polynomials}
\label{sec:polynomial-discussion}

We show that 
Theorem \ref{thm:KimVuStyle} is essentially tight for
\emph{elementary symmetric polynomials}
$e_k(v) := \sum_{|S| = k} \prod_{i \in S} v_i$.
For the upper bound we have:
\begin{lemma}
Fix $k \in \mathbb{N}$. Let $\epsilon \in [0, \frac{1}{2}]$, and
let $\Pd_v$ be a distribution of i.i.d.~random variables over
$\{0, 1\}^{\ell}$ with $\Pr_{v \leftarrow \Pd_v} [v_i = 1] = p > 0$.

There exists $c_k > 0$ (depending only on $k$) such that:
\begin{align*}
	\Pr_{v \leftarrow \Pd_v} [e_k(v) \geq p^k\binom{n}{k}(1 + \epsilon)] \leq \exp(-c_k \epsilon^2 p\ell) \; .
\end{align*}
\end{lemma}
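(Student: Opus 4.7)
The plan is to exploit the identity $e_k(v) = \binom{T}{k}$, where $T := \sum_{i=1}^{\ell} v_i$. This holds because each $v_i \in \{0,1\}$, so $\prod_{i \in S} v_i = 1$ iff every $v_i$ with $i \in S$ equals $1$, and the number of $k$-subsets of $[\ell]$ on which this occurs is exactly $\binom{T}{k}$. Since $t \mapsto \binom{t}{k}$ is non-decreasing on the non-negative integers, the event $\{e_k(v) \geq p^k\binom{\ell}{k}(1+\epsilon)\}$ is equivalent to $\{T \geq t^*\}$ for some integer threshold $t^*$, so it suffices to lower bound $t^*$ and then apply the standard Chernoff bound to $T$.

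Concretely, I would show $t^* \geq p\ell(1 + \epsilon/(4k))$ whenever $\ell \geq 2k$. Setting $\delta := \epsilon/(4k)$,
\begin{align*}
\frac{\binom{p\ell(1+\delta)}{k}}{p^k \binom{\ell}{k}}
= \prod_{i=0}^{k-1} \frac{p\ell(1+\delta) - i}{p(\ell - i)}
\leq \prod_{i=0}^{k-1} \Big(1 + \frac{\ell\delta}{\ell-i}\Big)
\leq (1 + 2\delta)^{k}
\leq e^{\epsilon/2}
\leq 1 + \epsilon \; ,
\end{align*}
where the first inequality drops the non-positive term $-i(1-p)/(p(\ell-i))$, the second uses $\ell - i \geq \ell/2$ for $i \leq k-1 \leq \ell/2$, and the last uses $\epsilon \leq 1/2$. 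By monotonicity this gives $\binom{T}{k} \leq p^k\binom{\ell}{k}(1+\epsilon)$ whenever the integer $T$ satisfies $T \leq p\ell(1+\epsilon/(4k))$, proving the threshold claim. The complementary regime $\ell < 2k$ involves only a bounded family of cases parameterised by $k$ (in particular $p\ell \leq 2k$), and can be absorbed by taking $c_k$ small enough that the target bound $\exp(-c_k \epsilon^2 p\ell)$ dominates the trivially-bounded probability in that finite range.

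For the Chernoff step, the i.i.d.~Bernoulli example recorded immediately after Corollary \ref{cor:gb} gives $\Pr[T \geq p\ell(1+\epsilon')] \leq \exp(-\epsilon'^2 p\ell/6)$ for $\epsilon' \in [0, 1/2]$. Taking $\epsilon' := \epsilon/(4k) \leq 1/2$ yields $\Pr[T \geq t^*] \leq \exp(-\epsilon^2 p\ell/(96 k^2))$, so $c_k = 1/(96 k^2)$ works in the main case (and any sufficient further shrinkage handles the boundary). The main obstacle is the elementary but slightly finicky threshold estimate in the middle paragraph, which has to cope with the mismatch between the denominators $p(\ell - i)$ and the numerators $p\ell(1+\delta) - i$ and with the degenerate regime where $\ell$ is not much larger than $k$; the rest is a direct invocation of the Chernoff bound already established earlier in the paper.
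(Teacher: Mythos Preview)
Your argument is correct. The identity $e_k(v)=\binom{T}{k}$ with $T=\sum_i v_i$ cleanly reduces the problem to a Chernoff bound on $T$, and the threshold estimate is sound. The only slightly glib step is the boundary regime $\ell<2k$: it is not literally a ``finite range'' once $p$ and $\epsilon$ vary, so ``absorb by shrinking $c_k$'' needs a sentence more. But in fact your main computation already covers it if you take $\delta=\epsilon/(4k^2)$ instead of $\epsilon/(4k)$, using $\ell-i\ge 1$ for $i\le k-1\le\ell-1$ in place of $\ell-i\ge\ell/2$; this just yields a smaller $c_k$.

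That said, your route is genuinely different from the paper's. The paper proves this lemma by feeding $e_k$ into its general polynomial-concentration machinery: one checks $\mu_i^*\le(p\ell)^{k-i}$ for every $i$ and $\mu\ge c_k'(p\ell)^k$, so Lemma~\ref{lem:independenceOfMonomials} yields $(\epsilon/3,m)$-growth boundedness of the monomials for $m=c_k''\,\epsilon p\ell$, and then Corollary~\ref{cor:gb} gives the bound. The entire purpose of this subsection is to demonstrate that Theorem~\ref{thm:KimVuStyle} is sharp on elementary symmetric polynomials, so the paper deliberately routes through the general theorem. Your proof is more elementary and self-contained, but it bypasses the general bound and therefore does not serve that illustrative role; it is also specific to $\{0,1\}$-valued inputs (where $e_k(v)=\binom{T}{k}$ holds), whereas the growth-boundedness argument works unchanged for inputs in $[0,1]$.
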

\begin{proof}
We have $\mu_i \leq (p\ell)^{k-i}$ for every $i$.
What is more, there exists $c'_k$ such that $\mu \geq c'_k (p\ell)^k$.
Now apply Lemma \ref{lem:independenceOfMonomials} and Theorem \ref{thm:ik-bound}.2
for $m := c''_k \epsilon p \ell$ (again observing that $\Pd_v$ is $(0,\ell)$-almost independent). 
\end{proof}

For the lower bound, we first state a well-known tightness of the Chernoff bound
for independent coin tosses (for the proof see \cite{You12} or, alternatively, Appendix B of \cite{HR11}):
\begin{lemma} \label{lem:reverse-chernoff}
Let $\epsilon \in (0, \frac{1}{2}]$ and $\Pd_v$ be a distribution of
i.i.d.~random variables over $\{0,1\}^{\ell}$
with $\Pr_{v \leftarrow \Pd_v} [v_i = 1] = p \le \frac12$.
Furthermore, assume that $\epsilon^2p\ell \geq 3$. Then:
\begin{align*}
	\Pr_{v \leftarrow \Pd_v} \big[\sum_{i=1}^n v_i \geq p\ell(1+\epsilon)\big] \geq \exp(-9\epsilon^2p\ell) \;.
\end{align*}
\end{lemma}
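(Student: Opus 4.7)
The lemma is a classical reverse Chernoff inequality for the Binomial, and my plan is a direct Stirling-based proof. Write $X = \sum_{i=1}^\ell v_i \sim \mathrm{Bin}(\ell, p)$ and set $K := \lceil p\ell(1+\epsilon) \rceil$, so that $\Pr[\sum_i v_i \ge p\ell(1+\epsilon)] = \Pr[X \ge K]$. The strategy is to sum the explicit probabilities $\Pr[X=k] = \binom{\ell}{k} p^k(1-p)^{\ell-k}$ over a window of width $\Theta(1/\epsilon)$ starting at $K$, rather than relying on the single term $k = K$.

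First I would apply a two-sided Stirling estimate ($n! = \sqrt{2\pi n}(n/e)^n e^{r_n}$ with $r_n \in (0, 1/(12n))$) to obtain the pointwise bound
\[
\Pr[X=k] \;\ge\; \frac{c_1}{\sqrt{k(\ell-k)/\ell}}\;\exp\!\bigl(-\ell\,D(k/\ell\,\|\,p)\bigr)
\]
for an absolute constant $c_1 > 0$, where $D$ is the binary KL divergence. For $k$ near $K$, the assumptions $p \le \tfrac12$ and $\epsilon \le \tfrac12$ give $K/\ell \le \tfrac34$ and hence $K(\ell-K)/\ell \le 3p\ell/2$, yielding a prefactor of order $1/\sqrt{p\ell}$.

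Next I would bound the KL divergence. Writing $D(q\|p) = \int_p^q (q-s)/(s(1-s))\,ds$ for $q = K/\ell$ and using $s(1-s) \ge p/4$ throughout the integration interval (which follows from $p, q \le 3/4$), one obtains $\ell\,D(K/\ell\,\|\,p) \le 2\epsilon^2 p\ell$. Then I would examine the ratio $\Pr[X=k+1]/\Pr[X=k] = p(\ell-k)/((1-p)(k+1))$: for $k$ in the window $[K, K+M]$ with $M := \lfloor (1-p)/\epsilon \rfloor$, the ratio is at least $\exp(-\epsilon/(1-p))(1 + o(1))$, so $\Pr[X=k] \ge (1/2e)\,\Pr[X=K]$ throughout. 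Summing gives
\[
\Pr[X \ge K] \;\ge\; \Omega\!\Bigl(\frac{1}{\epsilon\sqrt{p\ell}}\Bigr)\exp(-2\epsilon^2 p\ell).
\]

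Finally, the hypothesis $\epsilon^2 p\ell \ge 3$ absorbs the polynomial prefactor into the exponent. Writing $N := \epsilon^2 p\ell \ge 3$, one needs $7N \ge \tfrac12 \ln N + O(1)$, which holds with plenty of room, yielding the claimed $\exp(-9\epsilon^2 p\ell)$. The main obstacle is precisely the step requiring summation over a window of width $\Theta(1/\epsilon)$: a naive single-term Stirling estimate at $k = K$ leaves a $1/\sqrt{p\ell}$ prefactor that cannot be absorbed into the exponent once $p\ell$ is much larger than $\epsilon^2 p\ell$ (e.g.\ when $\epsilon$ is only barely big enough to satisfy the hypothesis), so carefully tracking the near-constancy of $\Pr[X=k]$ on scales of $\Theta(1/\epsilon)$ near $K$ is essential—this is the Bahadur–Rao phenomenon in disguise.
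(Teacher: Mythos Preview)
Your proof is correct, and the approach---Stirling on $\Pr[X=k]$, a quadratic upper bound on the KL divergence, and summation over a window of width $\Theta(1/\epsilon)$ to upgrade the $1/\sqrt{p\ell}$ prefactor to $1/(\epsilon\sqrt{p\ell}) = 1/\sqrt{N}$---is a standard and clean route to reverse Chernoff bounds. There is essentially nothing to compare against: the paper does not prove this lemma at all but simply cites \cite{You12} and Appendix~B of \cite{HR11} for it. Your argument is thus a self-contained replacement for those external references; the cited proofs proceed along the same lines (Stirling plus a careful count of terms near the mode of the conditional distribution), so you are not doing anything exotic, just filling in what the paper outsourced. One cosmetic point: your claim that $\ell\,D(K/\ell\,\|\,p)\le 2\epsilon^2 p\ell$ ignores the $+1$ from the ceiling in $K$, which inflates the constant slightly (to roughly $2.8\,\epsilon^2 p\ell$ in the worst case), but the slack between $2$ and $9$ in the exponent absorbs this with room to spare.
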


\begin{lemma}
Let $k \in \mathbb{N}$, $\epsilon \in (0, \frac{1}{4}]$ and $\Pd_v$ be a distribution of 
i.i.d.~random variables over $\{0,1\}^{\ell}$
with $\Pr_{v \leftarrow \Pd_v} [v_i = 1] = p \le \frac12$.
Furthermore, assume that $\epsilon p\ell \geq k$ and $\epsilon^2 p\ell \geq \frac{3}{4}$.
Then:
\begin{align*}
	\Pr_{v \leftarrow \Pd_v} [e_k(v) \geq p^k \binom{\ell}{k} (1 + \epsilon)] \geq \exp(-36 \epsilon^2 p\ell) \; .
\end{align*}
\end{lemma}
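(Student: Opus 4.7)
The plan is to reduce the lower tail for $e_k(v)$ to the reverse Chernoff bound for the sum $\sum_i v_i$ (Lemma \ref{lem:reverse-chernoff}). The key observation is that for $v \in \{0,1\}^\ell$, the elementary symmetric polynomial collapses to the binomial coefficient $e_k(v) = \binom{s}{k}$, where $s := \sum_{i=1}^\ell v_i$. So it suffices to exhibit a threshold $s_0$ such that (i) $s \ge s_0$ implies $e_k(v) \ge p^k\binom{\ell}{k}(1+\epsilon)$ and (ii) $\Pr[s \ge s_0]$ is already lower-bounded by $\exp(-36\epsilon^2 p\ell)$. The natural choice is $s_0 = p\ell(1+2\epsilon)$, so that Lemma \ref{lem:reverse-chernoff} applied with deviation $\epsilon' := 2\epsilon$ gives exactly the claimed exponent $9(\epsilon')^2 p\ell = 36\epsilon^2 p\ell$.

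The first step is to verify the deterministic implication. If $s \ge p\ell(1+2\epsilon)$, then writing out the ratio factor-by-factor,
\[
\frac{e_k(v)}{p^k\binom{\ell}{k}} \;=\; \prod_{i=0}^{k-1} \frac{s-i}{p(\ell-i)} \;\ge\; \prod_{i=0}^{k-1} \frac{1+2\epsilon - i/(p\ell)}{1 - i/\ell}.
\]
Here the hypothesis $\epsilon p\ell \ge k$ enters crucially: it ensures $i/(p\ell) \le (k-1)/(p\ell) \le \epsilon$ for every $i \le k-1$, so each numerator is at least $1+\epsilon$, while each denominator is at most $1$. Hence every factor is at least $1+\epsilon$, and the product is at least $(1+\epsilon)^k \ge 1+\epsilon$, giving the desired implication.

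The second step is just to check the hypotheses of Lemma \ref{lem:reverse-chernoff} for the parameter $\epsilon' = 2\epsilon$. We need $\epsilon' \le \tfrac12$, which follows from $\epsilon \le \tfrac14$, and $(\epsilon')^2 p\ell \ge 3$, which is exactly the given $\epsilon^2 p\ell \ge \tfrac34$. The lemma then yields $\Pr[s \ge p\ell(1+2\epsilon)] \ge \exp(-36\epsilon^2 p\ell)$, and combining with the implication from the previous paragraph finishes the proof.

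I do not expect a serious obstacle; the only nontrivial moving part is the inequality in the first step, and in particular the need for the hypothesis $\epsilon p\ell \ge k$ to absorb the shift terms $-i$ in the numerator of $\binom{s}{k}$. Without it, the single factor where $i$ is close to $k$ could drop below $1+\epsilon$, breaking the telescoping argument; the hypothesis is precisely calibrated to prevent this.
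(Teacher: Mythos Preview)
Your proposal is correct and follows essentially the same approach as the paper: both reduce to the event $\sum_i v_i \ge p\ell(1+2\epsilon)$ and then apply Lemma~\ref{lem:reverse-chernoff} with $\epsilon' = 2\epsilon$. The only cosmetic difference is in verifying the deterministic implication---the paper passes through the intermediate thresholds $\frac{(p\ell(1+\epsilon))^k}{k!}$ and $\frac{(p\ell(1+2\epsilon)-k)^k}{k!}$, whereas you compute the ratio $\binom{s}{k}/\bigl(p^k\binom{\ell}{k}\bigr)$ factor by factor; both use $\epsilon p\ell \ge k$ at the same point to absorb the $-i$ shifts.
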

\begin{proof}
\begin{IEEEeqnarray*}{rCl}
	\Pr \Big[e_k(v) \geq p^k \binom{\ell}{k} (1 + \epsilon) \Big] & \geq &
		\Pr \Big[e_k(v) \geq \frac{(p\ell(1+\epsilon))^k}{k!} \Big] \\
	& \geq & \Pr \Big[ e_k(v) \geq \frac{(p\ell(1+2\epsilon)-k)^k}{k!} \Big] \\
	& \geq & \Pr \Big[ e_k(v) \geq \binom{p\ell(1+2\epsilon)}{k} \Big] \\
	& = & \Pr \Big[ \sum_{i=1}^n v_i \geq p\ell(1+2\epsilon) \Big] \\
	& \geq & \exp(-36\epsilon^2p\ell) \; , \IEEEyesnumber \label{eq:j1}
\end{IEEEeqnarray*}
where (\ref{eq:j1}) follows from Lemma \ref{lem:reverse-chernoff}.
\end{proof}

\section{Counting Subgraphs in Random Graphs}
\label{sec:graph}

In the proof of the polynomial concentration bound
we consider values $\mu^*_i$ which are maxima
of expectations of $\partial_K p(v)$ over sets $K$ of size $i$.
Each such value yields a 
contribution\footnote{Think of a constant $k$ and a family of polynomials with $m$ going to infinity.} 
of $\binom{km}{i} \mu^*_i$
(proportional to the number of partial derivatives of this type
in the subset of input variables of size $km$)
and the ``quality'' of a concentration bound 
depends, roughly, 
on the maximum
such contribution.

In principle, nothing prevents us from considering a different,
possibly finer,
division of partial derivatives into a constant number
of classes, each with its own contribution.

In particular, it is an obvious fact that
the number of occurrences
of a fixed subgraph $H$ in a random Erdős–Rényi graph
(for some of the work on the problem see \cite{JR02, JOR02, JR11})
can be expressed in terms of a multilinear polynomial.
In this setting we may divide the partial derivatives into classes
corresponding to subgraphs of $H$. Interestingly,
this yields an upper tail bound proof that is 
basically isomorphic
to the famous one of Janson, Oleszkiewicz and Ruciński \cite{JOR02}.

Our result holds in the setting of almost-independent
distributions, readily applicable, for example,
to $\mathsf{G}_{n,m}$ random graphs 
(of course the proof of \cite{JOR02} also generalises to those settings).

\subsection{The proof}
We prove in our framework (a slight generalisation of) a result due to
Janson, Oleszkiewicz, and Ruciński \cite{JOR02}.

Fix $n \in \mathbb{N}$ and consider some distribution $\Pd_e$ over 
$e \in \{0, 1\}^{\binom{n}{2}}$
where we index the entries of $e$ with
$E := \{ \{u,v\} \mid u, v \in [n], u \ne v \}$,
that is the set of $\binom{n}{2}$ possible edges of $n$-vertex simple graph.
Unsurprisingly, we interpret $e_{\{u,v\}} = 1$ as the existence of
respective edge in the graph.
Let\footnote{One can modify our proof so that it works for heterogenous $p_{\{u,v\}}$,
but it is more technical than interesting.}
$p$ be such that for each $\{u, v\} \in E$
we have
$\Pr_{e \leftarrow \Pd_e} [ e_{\{u,v\}} = 1 ] \le p$.

Fix a simple graph $G = ([v_G], E_G)$ 
with $v_G$ vertices and $e_G$ edges.
We would like to count the number of (not necessarily induced)
isomorphic copies of $G$ 
in a random graph induced by $\Pd_e$.

Assume w.l.o.g.~that $G$ does not have isolated vertices.
We will only use graphs without isolated vertices
in our proof and therefore from now on we identify
a graph with the set of its edges.

We denote isomorphism of graphs by $G \sim H$.
Then the number of copies of $G$ in the graph
induced by $\Pd_e$ can
be expressed as a polynomial:
\begin{align*}
q(e) := 
\sum_{E' \subseteq E \atop{E' \sim G}} 
x_{E'}
:=
\sum_{E' \subseteq E \atop{E' \sim G}} 
\prod_{\{u,v\} \in E'} e_{\{u,v\}} 
\; ,
\end{align*}
where variables $x_{E'}$ can be thought of as
a vector $x$ distributed according to some $\Pd_x$.
The number of monomials in this sum is 
$\frac{1}{d} \prod_{i=0}^{v_G-1} (n-i)$,
where $d$ is the number of automorphisms of $G$,
and the degree of each monomial is $e_G$.

Thus, we can apply the technique from
Section \ref{sec:polynomial}. We will do it
in a more careful fashion, though, in order
to match the bound from \cite{JOR02}.

For a graph $H$ let $N(n,m,H)$ be the largest number of
copies of $H$ which can be packed into $n$ vertices and $m$ edges.
Following \cite{JOR02}, we
set:
\begin{align*}
	M^*_G(n,p) := \max \Big\{ m \le \binom{n}{2}: \forall H \subseteq E_G, H \ne \emptyset: N(n, m, H) \le n^{v_{H}} p^{e_{H}} \Big\} \; .
\end{align*}

We need the following lemma with a proof in \cite[Lemma 2.1]{JOR02}:
\begin{lemma}\label{lem:packing-inequality}
For every $H$ with $e_H > 0$ there is a constant $C_H$
such that if $n \ge v_H$ and 
$0 \le m_1 \le m_2\ \le \binom{n}{2}$,
then
\begin{align*}
	N(n, m_1, H) \le C_H \frac{m_1}{m_2} N(n, m_2, H) \; .
\end{align*}
\end{lemma}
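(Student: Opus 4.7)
The plan is to show that, up to a multiplicative constant $C_H$ depending only on $H$, the ratio $N(n,m,H)/m$ is non-decreasing in $m$. Equivalently, starting from an extremal packing $G_1^*$ on $[n]$ with $m_1$ edges and $t := N(n,m_1,H)$ copies of $H$, we construct a graph $G_2$ on $[n]$ with $m_2$ edges containing at least $(m_2/(C_H m_1))\,t$ copies of $H$. The trivial case $m_2 \leq 2 m_1$ is already handled by monotonicity: adding edges cannot destroy $H$-copies (since $H$ is treated as a not-necessarily-induced subgraph), so $N(n,m_1,H) \leq N(n,m_2,H)$, and $m_1/m_2 \geq 1/2$ then yields the bound with $C_H = 2$.

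For $m_2 \geq 2 m_1$, we employ a randomized-overlay construction. Set $k := \lfloor m_2/m_1 \rfloor \geq 2$ and let $\sigma_1, \ldots, \sigma_k$ be independent uniformly random permutations of $[n]$. Form the graph $G' := \bigcup_{i=1}^k \sigma_i(G_1^*)$, noting that $|E(G')| \leq k m_1 \leq m_2$. By linearity, the expected number of pairs $(H_0, i)$ with $H_0$ a copy of $H$ contained in $\sigma_i(G_1^*)$ is exactly $k\,t$. The crux is then to show that the number of \emph{distinct} $H$-copies in $G'$ is at least a constant fraction of $k\,t$, by controlling the expected number of copies that are simultaneously contained in two different $\sigma_i(G_1^*)$'s. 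Padding $G'$ up to exactly $m_2$ edges can only increase the copy count, and the resulting graph witnesses the desired bound.

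The main obstacle is bounding the overlap between random shifts. Concretely, for a fixed potential $H$-copy $H_0$ in $[n]$, the probability that $H_0 \subseteq \sigma_i(G_1^*)$ for a single random $\sigma_i$ is proportional to $t/n^{v_H}$, so the expected number of pairs $(H_0, \{i,j\})$ with $H_0$ contained in both $\sigma_i(G_1^*)$ and $\sigma_j(G_1^*)$ is of order $k^2 t^2/n^{v_H}$. As long as this quantity is small compared to $k\,t$, i.e., whenever $k\,t \ll n^{v_H}$, the overlaps are negligible and we obtain at least $\Omega(k\,t)$ distinct copies of $H$ in $G'$, which translates to the claimed inequality. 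In the remaining ``saturated'' regime where $k\,t$ is comparable to $n^{v_H}$, one handles the case separately by observing that $G_1^*$ is then forced to be close to extremal on $[n]$ already, so the trivial bound $N(n,m_1,H) \leq N(n,m_2,H)$ together with the fact that $m_2$ is bounded away from $m_1$ by only a constant factor suffices after absorbing the loss into $C_H$. The careful balancing of these two regimes is the technically delicate step, yielding a single constant $C_H$ depending only on $H$.
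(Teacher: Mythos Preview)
The paper does not give its own proof of this lemma; it simply cites \cite[Lemma~2.1]{JOR02}. Your random-overlay strategy is a natural route and can be made to work, but the sketch as written has a real gap in the ``saturated'' regime.

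Your claim there is that when $kt$ becomes comparable to $n^{v_H}$, the ratio $m_2/m_1$ must be bounded by a constant depending only on $H$. This is false. The only universal control one has is the trivial edge-count estimate
\[
t \;=\; N(n,m_1,H) \;\le\; C_H\, m_1\, n^{\,v_H-2},
\]
obtained by noting that each edge of $G_1^*$ lies in at most $O_H(n^{v_H-2})$ copies of $H$ and that each copy uses $e_H$ edges. Plugging this into $kt/N_H$ with $k\le m_2/m_1$ and $N_H=\Theta_H(n^{v_H})$ gives $kt/N_H \le C'_H\, m_2/n^2$. So saturation only forces $m_2=\Theta(n^2)$; it says nothing about $m_1$, and $m_2/m_1$ can be as large as $\Theta(n^2)$. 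Consequently the monotonicity bound $N(n,m_1,H)\le N(n,m_2,H)$ cannot close the argument there.

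The same estimate, however, is precisely what repairs the proof. It shows that $kt/N_H\le C'_H$ \emph{always}, so no separate regime is needed: take $k:=\lfloor m_2/(c_H m_1)\rfloor$ for a small enough constant $c_H$ depending on $H$, and then the expected number of pairwise collisions $\binom{k}{2}\,t^2/N_H$ is at most $kt/2$, whence the Bonferroni bound yields $\mathbb{E}[\text{distinct copies}]\ge kt/2=\Omega_H(m_2/m_1)\,t$ uniformly. The range $m_2\le m_1/c_H$ is then handled by monotonicity exactly as you already argued. In short, the missing ingredient is the elementary bound $N(n,m,H)=O_H(m\,n^{v_H-2})$, not a case split; without it your two regimes do not glue together.
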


Given $\Pd_e$, similarly as in Section \ref{sec:polynomial},
let $\mu := E_{e \leftarrow \Pd_e}[q(e)]$ and 
$\mu^* := \frac{1}{d} p^{e_G} \prod_{i=0}^{v_G-1} (n-i)$.
Note that $\mu^*$ is the expectation of $q(e)$
in the distribution where each edge appears independently
with probability $p$
(i.e., Erdős–Rényi model) and that
$E_{e \leftarrow \Pd^*_e}[q(e)] \le \mu^*$, where
$\Pd^*_e$ is the independent distribution
with the same marginals as $\Pd_e$.

\begin{lemma}\label{lem:graph-main}
Fix $\delta > 0$, as well as $n$, $\Pd_e$ and $G$. If $m$
is such that
\begin{align*}
\forall H \subseteq E_G, H\ne \emptyset: N(n, m, H) \le \frac{1}{2^{e_G} v_G^{v_G}} \delta n^{v_H} p^{e_{H}} \; ,
\end{align*}
and $\Pd_e$ is $(\delta', e_Gm)$-almost independent, then $\Pd_x$ is
$(\delta'', m)$-growth bounded, where $1 + \delta'' = (1+\delta')^{e_G}(1+\delta)\frac{\mu^*}{\mu}$.
\end{lemma}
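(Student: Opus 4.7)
The plan is to adapt the proof of Lemma \ref{lem:independenceOfMonomials} by replacing its coarse sum over $K \subseteq M$ with a finer grouping by isomorphism type of $E' \cap M$, which is what lets us harness the packing hypothesis. Starting from $\EE_{\Pd_e}[q(e)^m] = \sum_{(i_1,\ldots,i_m) \in [N]^m} \EE_{\Pd_e}[\prod_j x_{i_j}]$ with $N = \frac{1}{d} \prod_{i=0}^{v_G-1}(n-i)$, observe that the edge-union of any $m$ monomials has at most $e_G m$ edges, so $(\delta', e_G m)$-almost independence gives $\EE_{\Pd_e}[\prod_j x_{i_j}] \le (1+\delta')^{e_G m}\, \EE_{\Pd^*_e}[\prod_j x_{i_j}]$. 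This peels off the $(1+\delta')^{e_G m}$ factor in $(1+\delta'')^m$ and reduces the problem to showing $\EE_{\Pd^*_e}[q(e)^m] \le ((1+\delta)\mu^*)^m$, i.e., the polynomial claim for the fully independent model.

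I would prove this by induction on $m$. Fix $i_1,\ldots,i_{m-1}$, and let $M$ be the union of the edge sets of the corresponding copies of $G$, so $|M| \le e_G (m-1) \le e_G m$. Under $\Pd^*_e$ the edges outside $M$ are independent of those in $M$, so for any $E' \sim G$,
\begin{align*}
\frac{\EE_{\Pd^*_e}\bigl[\prod_{j<m} x_{i_j}\, x_{E'}\bigr]}{\EE_{\Pd^*_e}\bigl[\prod_{j<m} x_{i_j}\bigr]} \le p^{e_G - |E' \cap M|}.
\end{align*}
Summing over $E' \sim G$ and averaging over $i_m \leftarrow [N]$, the inductive step reduces to showing $\sum_{E' \sim G} p^{e_G - |E' \cap M|} \le (1+\delta)\mu^*$.

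To do this, I partition the sum by the isomorphism type $H \subseteq E_G$ of $E' \cap M$. The $H = \emptyset$ contribution is at most $\sum_{E' \sim G} p^{e_G} = \mu^*$. For each nonempty $H$, its contribution is bounded by (copies of $H$ embedded in $M$) $\times$ (extensions to $G$ per copy) $\times\, p^{e_G - e_H}$. The first factor is at most $N(n, |M|, H)$; the second is crudely at most $\frac{v_G^{v_G}}{d}\, n^{v_G - v_H}$ (count embeddings of $G$ into $K_n$ extending a given embedding of $H$, then divide by $|\mathrm{Aut}(G)|=d$). Plugging the packing hypothesis into this bound yields a contribution of at most $\frac{\delta}{2^{e_G} d}\, n^{v_G} p^{e_G} \le \frac{\delta}{2^{e_G}}\, \mu^*\cdot(1+o(1))$ per iso-type, and summing over the $\le 2^{e_G}$ nonempty types of $H \subseteq E_G$ gives $\le \delta \mu^*$, closing the induction.

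The main obstacle is the careful constant accounting: the $v_G^{v_G}$ and $2^{e_G}$ factors in the packing hypothesis must simultaneously absorb the extension-count blowup and the number of iso-types. A minor technical point is that the packing hypothesis is phrased at $N(n, m, H)$ while the argument naturally uses $N(n, |M|, H)$ with $|M| \le e_G m$; resolving this either requires strengthening the constant by an extra $e_G^{e_H}$ factor or invoking Lemma \ref{lem:packing-inequality} to linearly rescale $m$.
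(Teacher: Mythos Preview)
Your approach is essentially the paper's: the same reduction to $\Pd^*_e$ via almost-independence, the same induction on $m$, and the same partition of $q(e)$ by the isomorphism type of $E''\cap M$. Two remarks are worth making.

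First, your extension count $\tfrac{v_G^{v_G}}{d}\,n^{v_G-v_H}$ is slightly too loose and forces the $(1+o(1))$ you inserted. The paper uses $\tfrac{v_G!}{d}\,n^{v_G-v_H}$ instead, and then the elementary inequality $\tfrac{v_G!}{v_G^{v_G}}\,n^{v_G}\le \prod_{i=0}^{v_G-1}(n-i)$ (valid for $n\ge v_G$) lets the $v_G^{v_G}$ in the hypothesis cancel exactly against $\mu^*=\tfrac{1}{d}\,p^{e_G}\prod_{i=0}^{v_G-1}(n-i)$, giving the clean $\tfrac{\delta\mu^*}{2^{e_G}}$ per type with no asymptotic slack.

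Second, the discrepancy you flag between $N(n,m,H)$ in the hypothesis and $N(n,|M|,H)$ with $|M|\le e_G m$ in the argument is real: the paper's own proof writes $N(n,m,H)$ at the analogous step and does not address the mismatch either. The gap is absorbed only downstream, in Theorem~\ref{thm:graph-gb}, where the constant $C_G$ together with Lemma~\ref{lem:packing-inequality} handles the rescaling.
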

\begin{proof}
Proceeding as in the proof of Lemma \ref{lem:independenceOfMonomials}
in (\ref{eq:c1}) and (\ref{eq:c2}), we reduce the problem to showing that
\begin{align*}
	\EE_{e \leftarrow \Pd^*_v} \big[ q(e)^m \big] \le (\mu^*)^m (1+\delta)^m \; .
\end{align*}

The rest of our argument is very similar as in \cite{JOR02},
but we give it for completeness and appreciating the connection
to the proof of Lemma \ref{lem:independenceOfMonomials}.

We proceed by induction on $m$, with $m = 0$ being a trivial case.
For $m > 0$ fix a tuple $(x_{E'_1}, \ldots, x_{E'_{m-1}})$,
with $E' := \cup_{i=1}^{m-1} E'_{i}$.

For an $H \subseteq E_G$ we define:
\begin{align*}
q_{H}(e) := 
\sum_{\substack{
	E'' \subseteq E \\
	E'' \sim G \\
	(E'' \cap E') \sim H
	}} 
	x_{E''} \; ,
\end{align*}
that is $q_H(e)$ groups all those possible occurences
of $G$ for which their intersection with $E'$
is isomorphic to $H$. Clearly 
$q(e) \le \sum_{H \subseteq E_G} q_{H}(e)$.

Define an event $\mathcal{A}$ as $\forall \{u, v\} \in E': e_{\{u,v\} = 1}$.
We have:
\begin{align*}
\EE_{e \leftarrow P^*_e} \Big[ q(e) \prod_{i=1}^{m-1} x_{E'_i} \Big]
& \le
\EE_{e \leftarrow P^*_e} \Big[ \prod_{i=1}^{m-1} x_{E'_i} \Big]
\sum_{H \subseteq E_G}
\EE_{e \leftarrow P^*_e} \Big[ q_{H}(e) \mid \mathcal{A} \Big] \; .
\end{align*}
But for $H \ne \emptyset$:
\begin{align}
\EE_{e \leftarrow P^*_e} \Big[ q_{H}(e) \mid \mathcal{A} \Big]
& =
p^{e_G - e_H} 
\cdot
\big| \big\{ E'' \subseteq E: E'' \sim G \land (E'' \cap E') \sim H \big\} \big|
\nonumber
\\ & \le
p^{e_G - e_H}
N(n, m, H)\, n^{v_G-v_H} \frac{v_G!}{d}
\label{eq:d1}
\\ & \le
\frac{v_G!}{d 2^{e_G} v_G^{v_G}} \delta n^{v_G} p^{e_G} \le \frac{\delta \mu^*}{2^{e_G}}
\nonumber
\; ,
\end{align}
where (\ref{eq:d1}) follows since each copy of $G$ corresponding
to a monomial in $p_H$ can be recovered from its
intersection with $E'$ (isomorphic to $H$), its vertices outside $E'$
and its isomorphism with $G$ (where factor $d$ accounts for the isomorphisms 
that result in the same graph). Summing over all $H$,
\begin{align*}
\sum_{H \subseteq E_G}
\EE_{e \leftarrow P^*_e} \Big[ q_{H}(e) \mid \mathcal{A} \Big]
\le \mu^* +
\sum_{H \subseteq E_G \atop{H \ne \emptyset}}
\frac{\delta \mu^*}{2^{e_G}} \le \mu^* (1+\delta)
\; .
\end{align*}

Since the choice of $(x_{E'_m}, \ldots, x_{E'_{m-1}})$ was arbitrary,
the induction follows by averaging over all such choices.
\end{proof}

\begin{theorem}\label{thm:graph-gb}
Fix $n$, $G$, $\Pd_e$, and $\delta > 0$.
There exists $C_G > 0$ depending only on $G$ such that 
If $ \frac{C_G m}{\delta} \le M_G^*(n, p)$
and $\Pd_e$ is $(\delta', e_Gm)$-almost independent, then $\Pd_x$ is
$(\delta'', m)$-growth bounded, where 
$1 + \delta'' = (1+\delta')^{e_G}(1+\delta)\frac{\mu^*}{\mu}$.
\end{theorem}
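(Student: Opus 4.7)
The plan is to reduce Theorem \ref{thm:graph-gb} to Lemma \ref{lem:graph-main}, which already gives growth boundedness with exactly the conclusion we want, but under a more explicit hypothesis on $N(n, m, H)$. So I just need to show that the assumption $\frac{C_G m}{\delta} \le M^*_G(n,p)$ implies, for a suitable constant $C_G$, the hypothesis
\[
\forall H \subseteq E_G,\ H \ne \emptyset:\quad N(n, m, H) \le \frac{1}{2^{e_G} v_G^{v_G}}\, \delta\, n^{v_H} p^{e_H}.
\]

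To do this I would invoke Lemma \ref{lem:packing-inequality}: for each non-empty $H \subseteq E_G$ (there are only finitely many such $H$, all depending on $G$ alone), pick the constant $C_H$ from that lemma. Take $m_1 := m$ and $m_2 := M^*_G(n,p)$; since $m \le M^*_G(n,p)$ (which follows from $\frac{C_G m}{\delta} \le M^*_G(n,p)$ as soon as $C_G \ge 1$ and $\delta \le 1$, and trivially in the other regime), the lemma yields
\[
N(n, m, H) \le C_H\, \frac{m}{M^*_G(n,p)}\, N(n, M^*_G(n,p), H) \le C_H\, \frac{m}{M^*_G(n,p)}\, n^{v_H} p^{e_H},
\]
where the second inequality is exactly the definition of $M^*_G(n,p)$.

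Now define $C_G := 2^{e_G} v_G^{v_G}\, \max_{\emptyset \ne H \subseteq E_G} C_H$, which depends only on $G$. The hypothesis $\frac{C_G m}{\delta} \le M^*_G(n,p)$ rearranges to $\frac{m}{M^*_G(n,p)} \le \frac{\delta}{C_G}$, so for every non-empty $H$,
\[
C_H\, \frac{m}{M^*_G(n,p)} \le \frac{C_H \delta}{C_G} \le \frac{\delta}{2^{e_G} v_G^{v_G}},
\]
which gives the desired bound on $N(n, m, H)$. Applying Lemma \ref{lem:graph-main} to the distribution $\Pd_e$ with this $m$ and the original $\delta, \delta'$ then yields $(\delta'', m)$-growth boundedness of $\Pd_x$ with $1 + \delta'' = (1+\delta')^{e_G}(1+\delta)\frac{\mu^*}{\mu}$, as required.

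There is no real obstacle here; the content is essentially packaged in Lemma \ref{lem:graph-main} and Lemma \ref{lem:packing-inequality}, and the theorem is a bookkeeping corollary replacing the quantitative $N(n,m,H)$ hypothesis by the cleaner parameter $M^*_G(n,p)$ of \cite{JOR02}. The only small subtlety is making sure the constant $C_G$ is taken large enough to absorb both the $2^{e_G} v_G^{v_G}$ factor from Lemma \ref{lem:graph-main} and the worst $C_H$ from Lemma \ref{lem:packing-inequality} over the (finitely many) non-empty subgraphs $H \subseteq E_G$.
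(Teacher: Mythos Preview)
Your proposal is correct and matches the paper's approach exactly: the paper's proof is the single line ``From Lemma \ref{lem:packing-inequality} and Lemma \ref{lem:graph-main}'', and you have simply spelled out how Lemma \ref{lem:packing-inequality} (applied with $m_1 = m$, $m_2 = M^*_G(n,p)$) converts the hypothesis $\frac{C_G m}{\delta} \le M^*_G(n,p)$ into the explicit $N(n,m,H)$ bound required by Lemma \ref{lem:graph-main}, with $C_G$ absorbing the factor $2^{e_G} v_G^{v_G}$ and the worst $C_H$.
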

\begin{proof}
From Lemma \ref{lem:packing-inequality} and Lemma \ref{lem:graph-main}.
\end{proof}

\begin{theorem}[\cite{JOR02}]\label{thm:graph-jor}
Fix $G$ and $\epsilon \in [0, \frac12]$.
Let $\mathsf{G}_{n,p}$ be Erdős–Rényi distribution
with $n \ge v_G$ and $p > 0$.
There exists $c_G > 0$ depending only on $G$
such that:
\begin{align*}
\Pr_{e \leftarrow \mathsf{G}_{n,p}} \Big[ q(e) 
\ge \mu(1+\epsilon) \Big]
\le
\exp(-c_G \epsilon^2 M^*_G(n, p)) \; .
\end{align*}
\end{theorem}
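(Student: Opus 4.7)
The plan is to invoke Theorem \ref{thm:graph-gb} to obtain growth boundedness of $\Pd_x$ and then apply Corollary \ref{cor:gb}.1 to convert it into the desired concentration bound. All of the combinatorial work has already been absorbed into the definition of $M^*_G(n,p)$ and into Lemma \ref{lem:graph-main}, so essentially the proof is a matter of feeding in the right parameters.

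The crucial observation is that in the Erdős--Rényi model $\mathsf{G}_{n,p}$ the edge-indicator variables are mutually independent and each has expectation exactly $p$. Therefore $\Pd_e$ is $(0,\ell)$-almost independent for every $\ell$, so one may take $\delta' = 0$ in Theorem \ref{thm:graph-gb}. Moreover, since $\mathsf{G}_{n,p}$ already \emph{is} the independent distribution with marginal $p$, we have $\mu = \mu^*$, so the factor $\mu^*/\mu$ in the statement of Theorem \ref{thm:graph-gb} equals $1$. This removes both of the error factors that Theorem \ref{thm:graph-gb} would otherwise carry.

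Now set $\delta := \epsilon/3$ and $m := \lfloor \delta M^*_G(n,p) / C_G \rfloor$, where $C_G$ is the constant from Theorem \ref{thm:graph-gb}. The hypothesis $C_G m / \delta \le M^*_G(n,p)$ is then satisfied (if $M^*_G(n,p)$ is so small that $m < 1$, the claimed bound is trivial after adjusting $c_G$), and Theorem \ref{thm:graph-gb} guarantees that $\Pd_x$ is $(\delta'', m)$-growth bounded with
\begin{align*}
1+\delta'' \;=\; (1+0)^{e_G}(1+\epsilon/3)\cdot 1 \;=\; 1+\epsilon/3.
\end{align*}

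Finally, applying Corollary \ref{cor:gb}.1 to $\Pd_x$ (valid since $\epsilon \in [0,\tfrac12]$) gives
\begin{align*}
\Pr_{e \leftarrow \mathsf{G}_{n,p}}\bigl[q(e) \ge \mu(1+\epsilon)\bigr]
\;\le\; \exp\bigl(-\epsilon m/2\bigr)
\;\le\; \exp\bigl(-c_G\, \epsilon^2 M^*_G(n,p)\bigr)
\end{align*}
for $c_G := 1/(6 C_G)$, which is the desired bound. There is no serious obstacle here: the genuinely difficult step was Lemma \ref{lem:graph-main}, which isolates the correct threshold $M^*_G(n,p)$ via the packing inequality of Lemma \ref{lem:packing-inequality}; the proof of the theorem itself is then just the observation that in $\mathsf{G}_{n,p}$ the two ``slack'' factors in Theorem \ref{thm:graph-gb} vanish, leaving a clean $\exp(-\Omega(\epsilon^2 M^*_G(n,p)))$ tail.
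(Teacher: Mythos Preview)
Your proof is correct and follows essentially the same approach as the paper: apply Theorem~\ref{thm:graph-gb} with $\delta'=0$ and $\mu^*=\mu$ (since $\mathsf{G}_{n,p}$ is fully independent), choose $m$ proportional to $\epsilon M^*_G(n,p)$, and finish with Corollary~\ref{cor:gb}.1. You are simply more explicit about the choice of constants and the trivial edge case $m<1$ than the paper, which compresses everything into one sentence.
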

\begin{proof}
From Theorem \ref{thm:graph-gb} and Corollary \ref{cor:gb}.1
taking $m := c'_G \epsilon M^*_G(n, p)$ for appropriately
small $c'_G$ and noting that $\mathsf{G}_{n, p}$
is $(0, \binom{n}{2})$-almost independent
and $\mu^* = \mu$.
\end{proof}

We can apply almost-independence
to the distribution $\mathsf{G}_{n,m}$ of a uniform
random graph on $n$ vertices and $m$ edges.
\begin{theorem}
Fix $G$ and $\epsilon \in [0, 1]$.
Let $\mathsf{G}_{n,m}$ be uniform distribution
on graphs with $n$ vertices and $m$ edges
with $n \ge v_G$ and 
$m \ge \frac{9e_G^2}{\epsilon}$.
Set $p := \frac{m}{n}$.
There exists $c_G > 0$ depending only on $G$
such that:
\begin{align*}
\Pr_{e \leftarrow \mathsf{G}_{n,m}} \Big[ q(e)
\ge \mu(1+\epsilon) \Big]
\le
\exp(-c_G \epsilon^2 M^*_G(n, p)) \; .
\end{align*}
\end{theorem}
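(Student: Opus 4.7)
The plan is to mimic the proof of Theorem~\ref{thm:graph-jor} for $\mathsf{G}_{n,p}$, replacing independence by almost-independence and controlling the small hypergeometric discrepancy between the exact expectation $\mu$ and the product-model expectation $\mu^*$.

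First, I would verify that $\mathsf{G}_{n,m}$ is $(0, K)$-almost independent for every $K \le \binom{n}{2}$, with marginal edge probability $\hat{p} := m/\binom{n}{2}$. This is the classical negative correlation of edge indicators in the uniform random graph: for any $k$-subset $L$ of potential edges,
\begin{align*}
\Pr_{e \leftarrow \mathsf{G}_{n,m}}\bigl[\forall \{u,v\} \in L:\, e_{\{u,v\}} = 1\bigr]
= \prod_{i=0}^{k-1}\frac{m-i}{\binom{n}{2}-i}
\le \hat{p}^{\,k},
\end{align*}
since each factor $(m-i)/(\binom{n}{2}-i) \le m/\binom{n}{2}$ whenever $m \le \binom{n}{2}$.

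Second, I would estimate $\mu^*/\mu$ directly from the same product: for each copy $E' \sim G$,
\begin{align*}
\Pr_{e \leftarrow \mathsf{G}_{n,m}}[E' \subseteq e]
\ge \hat{p}^{\,e_G}\Bigl(1 - \frac{e_G}{m}\Bigr)^{e_G},
\end{align*}
so summing over copies yields $\mu \ge \mu^* (1 - e_G/m)^{e_G}$, and the hypothesis $m \ge 9 e_G^2/\epsilon$ gives $\mu^*/\mu \le 1 + \epsilon/6$ (after absorbing small constants). Plugging $\delta' = 0$ and $\delta := \epsilon/12$ into Theorem~\ref{thm:graph-gb} then shows that $\Pd_x$ is $(\epsilon/3, m^*)$-growth bounded whenever $C_G m^*/\delta \le M^*_G(n, p)$; choosing $m^* := c'_G \epsilon M^*_G(n, p)$ for a sufficiently small $c'_G > 0$ and invoking Corollary~\ref{cor:gb}.1 then delivers the claimed $\exp(-c_G \epsilon^2 M^*_G(n, p))$ bound.

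The main (and essentially the only) piece of friction is bookkeeping: one has to match the marginal density $\hat{p}$ used for almost-independence with the $p$ parameter driving $M^*_G(n, p)$ in the theorem statement, and verify that the compound factor $(1+\delta')^{e_G}(1+\delta)\mu^*/\mu$ from Theorem~\ref{thm:graph-gb} collapses into the single $\epsilon/3$ slack that Corollary~\ref{cor:gb}.1 consumes. There is no genuinely new idea beyond transplanting the argument of Theorem~\ref{thm:graph-jor} through the almost-independence interface.
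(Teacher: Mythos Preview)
Your proposal is correct and follows essentially the same route as the paper: verify $(0,\cdot)$-almost independence of $\mathsf{G}_{n,m}$ via negative correlation, use the hypothesis $m \ge 9e_G^2/\epsilon$ to bound $\mu^*/\mu$ by $1+O(\epsilon)$, and then feed everything into Theorem~\ref{thm:graph-gb} and Corollary~\ref{cor:gb}.1 with $m^* = c'_G\epsilon M^*_G(n,p)$. The only cosmetic difference is that the paper first rewrites the event $\{q(e)\ge\mu(1+\epsilon)\}$ as $\{q(e)\ge\mu^*(1+\epsilon/2)\}$ before invoking the corollary (which also cleanly brings the effective deviation into $[0,1/2]$ as Corollary~\ref{cor:gb}.1 requires), whereas you absorb the $\mu^*/\mu$ factor directly into the growth-boundedness parameter; both are equivalent bookkeeping.
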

\begin{proof}
Since $\mathsf{G}_{n,m}$ is also $(0, \binom{n}{2})$-almost independent,
the only issue is bounding $\frac{\mu^*}{\mu}$. Our constraints give:
\begin{align*}
\frac{\mu^*}{\mu} \le \Big(1 + \frac{e_G}{m-e_G}\Big)^{e_G}
\le \Big(1 + \frac{\epsilon}{8e_G}\Big)^{e_G}
\le \exp\Big(\frac{\epsilon}{8}\Big) \le 1 + \frac{\epsilon}{4} \; 
\end{align*}
($\exp(\epsilon) \le 1 + \epsilon/2$ for $\epsilon \in [0, \frac{1}{4}]$).
With this bound in mind we apply 
Theorem \ref{thm:graph-gb} and Corollary \ref{cor:gb}.1
setting $m := c'_G\epsilon M_G^*(n, p)$:
\begin{align*}
\Pr_{e\leftarrow \mathsf{G}_{n,m}} 
\Big[ q(e) \ge \mu(1+\epsilon)\Big]
& \le
\Pr_{e\leftarrow \mathsf{G}_{n,m}} 
\Big[ q(e) \ge \mu^*\Big(\frac{1+\epsilon}{1+\epsilon/4}\Big)\Big]
\\ & \le
\Pr_{e\leftarrow \mathsf{G}_{n,m}} 
\Big[ q(e) \ge \mu^*\big(1+\frac{\epsilon}{2}\big) \Big]
\\ & \le
\exp\big(-c_G \epsilon^2 M^*_G\big(n, p\big)\big) \; .
\end{align*}
\end{proof}

\newpage
\bibliography{stacs29Hazla}

\end{document}